\newcommand{\si}{\sigma}
\newcommand{\np}{\textsf{NP}\xspace}
\newcommand{\SAN}{\textsc{Signed Achromatic Number}\xspace}
\newcommand{\kSAN}{\textsc{Signed Achromatic Number~$k$}\xspace}
\newcommand{\AN}{\textsc{Achromatic Number}\xspace}
\newcommand{\kAN}{\textsc{Achromatic Number~$k$}\xspace}
\begin{document}
\begin{frontmatter}

\title{Generalising the achromatic number to \\ Zaslavsky's colourings of signed graphs\tnoteref{thanks}}
\tnotetext[thanks]{The authors were supported by ANR project HOSIGRA (ANR-17-CE40-0022), and by IFCAM project ``Applications of graph homomorphisms'' (MA/IFCAM/18/39).}

\author[nice]{Julien Bensmail}
\author[bordeaux]{François Dross}
\author[lyon,lyon2]{Nacim Oijid}
\author[bordeaux]{\'Eric Sopena}

\address[nice]{Universit\'e C\^ote d'Azur, CNRS, Inria, I3S, France}
\address[bordeaux]{Univ. Bordeaux, CNRS,  Bordeaux INP, LaBRI, UMR 5800, F-33400, Talence, France}
\address[lyon]{\'Ecole Normale Sup\'erieure de Lyon, 69364 Lyon Cedex 07, France }
\address[lyon2]{Univ. Lyon, Universit\'e Lyon 1, LIRIS UMR CNRS 5205, F-69621, Lyon, France}

\journal{...}

\begin{abstract}
The chromatic number, which refers to the minimum number of colours required to colour the vertices of graphs properly,
is one of the most central notions of the graph chromatic theory.
Several of its aspects of interest have been investigated in the literature,
including variants for modifications of proper colourings.
These variants include, notably, the achromatic number of graphs, which is the maximum number of colours required to colour the vertices of graphs properly
 so that each possible combination of distinct colours is assigned along some edge.
The behaviours of this parameter have led to many investigations of interest,
bringing to light both similarities and discrepancies with the chromatic number.

This work takes place in a recent trend aiming at extending the chromatic theory of graphs to the realm of signed graphs,
and, in particular, at investigating how classic results adapt to the signed context.
Most of the works done in that line to date are with respect to two main generalisations of proper colourings of signed graphs,
attributed to Zaslavsky and Guenin.
Generalising the achromatic number to signed graphs was initiated recently by Lajou,
his investigations being related to Guenin's colourings.
We here pursue this line of research, but with taking Zaslavsky's colourings as our notion of proper colourings.
We study the general behaviour of our resulting variant of the achromatic number, 
mainly by investigating how known results on the classic achromatic number  generalise to our context.
Our results cover, notably, bounds, standard operations on graphs, and complexity aspects.

\end{abstract}

\begin{keyword} 
chromatic number; achromatic number; complete colouring; Zaslavsky's colouring; signed graph.
\end{keyword}
 
\end{frontmatter}

\newtheorem{theorem}{Theorem}[section]
\newtheorem{lemma}[theorem]{Lemma}
\newtheorem{conjecture}[theorem]{Conjecture}
\newtheorem{observation}[theorem]{Observation}
\newtheorem{claim}{Claim}
\newtheorem{corollary}[theorem]{Corollary}
\newtheorem{proposition}[theorem]{Proposition}
\newtheorem{question}[theorem]{Question}

%%%%%%%%%%%%%%%%%%%%%%%%%%%%%%%%%%%%%%%%%%%%%%%%%%%%%%%%%%%%%%%%%%%%
%%%%%%%%%%%%%%%%%%%%%%%%%%%%%%%%%%%%%%%%%%%%%%%%%%%%%%%%%%%%%%%%%%%%
%%%%%%%%%%%%%%%%%%%%%%%%%%%%%%%%%%%%%%%%%%%%%%%%%%%%%%%%%%%%%%%%%%%%
%%%%%%%%%%%%%%%%%%%%%%%%%%%%%%%%%%%%%%%%%%%%%%%%%%%%%%%%%%%%%%%%%%%%
%%%%%%%%%%%%%%%%%%%%%%%%%%%%%%%%%%%%%%%%%%%%%%%%%%%%%%%%%%%%%%%%%%%%
%%%%%%%%%%%%%%%%%%%%%%%%%%%%%%%%%%%%%%%%%%%%%%%%%%%%%%%%%%%%%%%%%%%%
%%%%%%%%%%%%%%%%%%%%%%%%%%%%%%%%%%%%%%%%%%%%%%%%%%%%%%%%%%%%%%%%%%%%
%%%%%%%%%%%%%%%%%%%%%%%%%%%%%%%%%%%%%%%%%%%%%%%%%%%%%%%%%%%%%%%%%%%%
%%%%%%%%%%%%%%%%%%%%%%%%%%%%%%%%%%%%%%%%%%%%%%%%%%%%%%%%%%%%%%%%%%%%
%%%%%%%%%%%%%%%%%%%%%%%%%%%%%%%%%%%%%%%%%%%%%%%%%%%%%%%%%%%%%%%%%%%%
%%%%%%%%%%%%%%%%%%%%%%%%%%%%%%%%%%%%%%%%%%%%%%%%%%%%%%%%%%%%%%%%%%%%
%%%%%%%%%%%%%%%%%%%%%%%%%%%%%%%%%%%%%%%%%%%%%%%%%%%%%%%%%%%%%%%%%%%%
%%%%%%%%%%%%%%%%%%%%%%%%%%%%%%%%%%%%%%%%%%%%%%%%%%%%%%%%%%%%%%%%%%%%
%%%%%%%%%%%%%%%%%%%%%%%%%%%%%%%%%%%%%%%%%%%%%%%%%%%%%%%%%%%%%%%%%%%%
%%%%%%%%%%%%%%%%%%%%%%%%%%%%%%%%%%%%%%%%%%%%%%%%%%%%%%%%%%%%%%%%%%%%
%%%%%%%%%%%%%%%%%%%%%%%%%%%%%%%%%%%%%%%%%%%%%%%%%%%%%%%%%%%%%%%%%%%%
%%%%%%%%%%%%%%%%%%%%%%%%%%%%%%%%%%%%%%%%%%%%%%%%%%%%%%%%%%%%%%%%%%%%
%%%%%%%%%%%%%%%%%%%%%%%%%%%%%%%%%%%%%%%%%%%%%%%%%%%%%%%%%%%%%%%%%%%%
%%%%%%%%%%%%%%%%%%%%%%%%%%%%%%%%%%%%%%%%%%%%%%%%%%%%%%%%%%%%%%%%%%%%
%%%%%%%%%%%%%%%%%%%%%%%%%%%%%%%%%%%%%%%%%%%%%%%%%%%%%%%%%%%%%%%%%%%%

\section{Introduction}

In this work, we introduce and study a generalisation of \textbf{complete colourings} and of the \textbf{achromatic number} usually defined for undirected unsigned graphs
to \textbf{signed graphs}, in particular to \textbf{colourings of signed graphs} as originally defined by Zaslavsky in the 1980's.
Before entering further into the details of our contribution, let us first give a general reminder about all the notions and concepts involved.

\medskip

In the context of (undirected and unsigned) graphs, the notion of (vertex-)colouring usually refers to \textit{proper colourings},
where a $k$-colouring of a graph $G$, i.e., an assignment $V(G) \rightarrow \{1,\dots,k\}$ of colours $1,\dots,k$ to the vertices, is proper if no two adjacent vertices are assigned the same colour.
Alternatively, a proper $k$-colouring of $G$ can be seen as a \textit{homomorphism} from $G$ to $K_k$, the complete graph on $k$ vertices,
where, recall, a homomorphism $h$ from $G$ to a graph $H$ is a vertex-mapping $h:V(G) \rightarrow V(H)$ preserving the edges (that is, $h(u)h(v)$ is an edge in $H$ whenever $uv$ is an edge in $G$).

These two definitions of proper colourings lead to several interesting questions.
Such questions include finding colourings that are not only proper but also somewhat optimal for some notion of optimality.
In this line, surely the most investigated question, motivated notably by numerous practical applications,
is, given a graph $G$, to determine its \textit{chromatic number} $\chi(G)$, which is defined as the smallest $k$ such that proper $k$-colourings of $G$ exist.
Other possible ways to consider that a colouring is optimal do not involve the number of colours only
but also its quality regarding some criteria.
This is well illustrated through \textit{complete colourings}, where a proper colouring of $G$ is complete if for any two distinct colours $i$ and $j$,
there exist two adjacent vertices being assigned colour~$i$ and~$j$, respectively.
From the homomorphism point of view, note that complete colourings are equivalent to homomorphisms to complete graphs that are edge-surjective.
Regarding this notion, given a graph $G$, we are usually interested in determining its \textit{achromatic number} $\psi(G)$, which is defined as the largest $k$ such that complete $k$-colourings of $G$ exist.

\medskip

Whenever notions and problems are originally defined for graphs, 
an interesting direction for research is to wonder about possible generalisations to more general structures, such as directed graphs, hypergraphs, signed graphs, etc.
In the current work, we are mainly interested in \textit{signed graphs}.
A signed graph $(G,\sigma)$ is defined as a graph $G$ equipped with a \textit{signature} $\sigma: E(G) \rightarrow \{-,+\}$ through which every edge is either \textit{negative} (assigned sign $-$) or \textit{positive} (otherwise, assigned sign $+$).
Graph signatures can freely be transposed from a graph $G$ to any of its subgraphs $H$, in the sense that if $(G,\sigma)$ is a signed graph, then $(H,\sigma)$ should be understood as the signed graph defined over $H$ by applying the signature $\sigma$ to the edges of $H$.

When dealing with signed graphs, we often take into consideration a peculiar graph operation, called the \textit{switching operation},
where switching a set of vertices $S$ of $(G,\sigma)$ means changing the signs of all edges in the cut $(S,V(G)\setminus S)$, resulting in another signed graph $(G,\sigma')$ defined over $G$. 
In case where $S=\{v\}$, i.e., $S$ is a singleton, by switching the vertex $v$ we mean switching $S$.
The main point behind this switching operation is that it yields equivalence classes over the possible signed graphs defined over a given graph,
which, for some problems, showed up to be particularly judicious to consider.
Problems and notions for signed graphs have actually been, sometimes, defined accordingly to these equivalence classes.
The introduction of the switching operation itself was also justified by the need of a dynamic model to study particular real-life problems,
issued notably from the study of social networks.

The common chromatic theory of graphs being probably one of the most investigated fields of graph theory, 
a legitimate wonder is about its possible generalisations to signed graphs.
In particular, for a possible generalisation, it is interesting to investigate the possible discrepancies with the original theory.
As such, two main parallel lines of generalisation for the chromatic theory to signed graphs have been studied throughout the years,
leading to many interesting questions. 
These two lines involve different ways of adapting proper colourings to signed graphs,
attributed usually to Zaslavsky and Guenin, respectively.
One significant difference between these two variants lies within the switching operation.
In Guenin's variation, one downside is that proper colourings are not preserved by the switching operation,
while they are in Zaslavsky's variation.

\begin{itemize}
	\item The most recent of these two variations, introduced in the 2000's by Guenin~\cite{Gue05},
is with respect to the homomorphism point of view.
We will not elaborate further on this variant as it is rather distant from the second variation below, which is actually the main point of focus in this work.
Let us mention, however, that this branch has been developing into a rich and nice theory,
which already includes one way, introduced and studied by Lajou~\cite{Laj19}, of generalising complete colourings and the achromatic number to signed graphs.
Again, due to the quite different definitions involved, this variant hardly compares to the one we consider in the current work.
Nevertheless, we refer the interested reader to~\cite{RST21} for a recent survey on this interesting topic.

	\item The second, oldest, variation was introduced by Zaslavsky as early as in the 1980's~\cite{Zas82},
and has been regaining a lot of attention in the recent years, due notably to the work~\cite{MRS16} of M\'a\v{c}ajov\'a, Raspaud and \v{S}koviera.
The main idea, here, is to have a notion of proper colouring of signed graphs that 1) is reminiscent of the notion of proper colouring of unsigned graphs,
and 2) takes into account the very peculiarity of signed graphs, being the switching operation.
Particularly, Zaslavsky's colourings of signed graphs, which will be defined thoroughly in next Section~\ref{section:preliminaries}, 
are preserved under switching vertices.
We refer the reader to~\cite{SV21} for a recent survey on this field.
\end{itemize}

In this work, as a counterpart to Lajou's work, we introduce and study a generalisation of complete colourings and the achromatic number of graphs to signed graphs with respect to Zaslavsky's colourings.
Our main way to proceed is by going through the survey~\cite{HM97} by Hughes and MacGillivray 
and Chapter~12 of the monograph~\cite{CZ09} by Chartrand and Zhang
on the original achromatic number,
and to wonder about more or less natural ways to generalise known results to our context.
The main results we come up with are about the influence, on our modified definition of achromatic number, of modifying the structure of a given signed graph,
and about complexity aspects. 
We also prove that several results on unsigned graphs do not generalise to our context, at least not in an obvious and natural way.

This paper is organised as follows.
In Section~2, we make more formal some of the notions and definitions mentioned earlier,
and we state first properties and results on our notion of achromatic number of signed graphs.
In Section~3, we investigate the consequences on our parameter of modifying the structure of a given signed graph.
Section~4 is dedicated to complexity aspects.
In Sections~3 and~4, the results we exhibit show that our variant of the achromatic number and the original one behave in a very close way,
although there are some differences.
In Section~5, we point out results on the original achromatic number that do not extend, or only very partially, to our context.
We finish off in Section~6 with some directions for further work on the topic.

%%%%%%%%%%%%%%%%%%%%%%%%%%%%%%%%%%%%%%%%%%%%%%%%%%%%%%
%%%%%%%%%%%%%%%%%%%%%%%%%%%%%%%%%%%%%%%%%%%%%%%%%%%%%%
%%%%%%%%%%%%%%%%%%%%%%%%%%%%%%%%%%%%%%%%%%%%%%%%%%%%%%
%%%%%%%%%%%%%%%%%%%%%%%%%%%%%%%%%%%%%%%%%%%%%%%%%%%%%%
%%%%%%%%%%%%%%%%%%%%%%%%%%%%%%%%%%%%%%%%%%%%%%%%%%%%%%
%%%%%%%%%%%%%%%%%%%%%%%%%%%%%%%%%%%%%%%%%%%%%%%%%%%%%%
%%%%%%%%%%%%%%%%%%%%%%%%%%%%%%%%%%%%%%%%%%%%%%%%%%%%%%
%%%%%%%%%%%%%%%%%%%%%%%%%%%%%%%%%%%%%%%%%%%%%%%%%%%%%%
%%%%%%%%%%%%%%%%%%%%%%%%%%%%%%%%%%%%%%%%%%%%%%%%%%%%%%
%%%%%%%%%%%%%%%%%%%%%%%%%%%%%%%%%%%%%%%%%%%%%%%%%%%%%%
%%%%%%%%%%%%%%%%%%%%%%%%%%%%%%%%%%%%%%%%%%%%%%%%%%%%%%
%%%%%%%%%%%%%%%%%%%%%%%%%%%%%%%%%%%%%%%%%%%%%%%%%%%%%%
%%%%%%%%%%%%%%%%%%%%%%%%%%%%%%%%%%%%%%%%%%%%%%%%%%%%%%
%%%%%%%%%%%%%%%%%%%%%%%%%%%%%%%%%%%%%%%%%%%%%%%%%%%%%%

\section{Preliminaries} \label{section:preliminaries}

This preliminary section is organised as follows.
In Subsection~\ref{subsection:formal-definitions}, we start by recalling how Zaslavsky  defined proper colourings of signed graphs,
before defining formally the notion of complete colourings of signed graphs that we investigate throughout this paper.
Along the way, we also recall and point out some useful properties of both types of colourings.
In Subsection~\ref{subsection:first-results-signed}, we then prove first elementary results and properties of our variant of the achromatic number.
Lastly, we study, in Subsections~\ref{subsection:achromatic-cliques} and~\ref{subsection:achromatic-paths-cycles}, our notion of achromatic number for particular signed graphs, 
namely signed complete graphs, paths, and cycles. We determine its exact value in some cases.

\subsection{From proper colourings to complete colourings of signed graphs} \label{subsection:formal-definitions}

\subsubsection*{Proper colourings and chromatic number of signed graphs}

First things first, we need to make formal all notions that revolve around Zaslavsky's proper colourings of signed graphs~\cite{Zas82}.
Let $(G,\sigma)$ be a signed graph.
A \textit{proper colouring} $\phi: V(G) \rightarrow \mathbb{Z}$ of $(G,\sigma)$ is an assignment of colours to the vertices such that,
for every edge $uv$ of $G$, we have $\phi(u) \neq \sigma(uv)\phi(v)$ (where, naturally, $\sigma(uv)\phi(v)=\phi(v)$ if $\sigma(uv)=+$, and $\sigma(uv)\phi(v)=-\phi(v)$ otherwise).
As mentioned earlier in the introductory section, note that this definition meets several properties.
First, note that, by a proper colouring of a signed graph, any two adjacent vertices joined by a positive edge must not be assigned the same colour,
while any two adjacent vertices joined by a negative edge must not be assigned opposite colours.
This is reminiscent of the main property of proper colourings of graphs, which is that adjacent vertices must get assigned distinct colours.
Second, note, by carefully checking the distinction property, that proper colourings of signed graphs are preserved under switching vertices.
That is, if switching a vertex $v$ in a signed graph $(G,\sigma)$ results in a different signed graph $(G,\sigma')$ defined over $G$,
then, for every proper colouring $\phi$ of $(G,\sigma)$, it can be checked from the definitions that the colouring $\phi'$ of $(G,\sigma')$, 
defined as $\phi'(v)=-\phi(v)$ and $\phi'(u)=\phi(u)$ for every $u \in V(G) \setminus \{v\}$, is also proper.
This is a property that will be used throughout this work;
thus, whenever switching vertices in a signed graph that is already $k$-coloured,
implicitly it should be understood that their colours are negated.

\begin{figure}[!t]
 	\centering
 	
 	\subfloat[$K_1^*$]{
    \scalebox{0.8}{
	\begin{tikzpicture}[inner sep=0.7mm]	
	
	\node[draw, circle, black, line width=1pt](v1) at (0,0.7)[]{$0$};
	\node[draw, white](c1) at (-0.5,0)[]{};
	\node[draw, white](c2) at (0.5,0)[]{};

	\end{tikzpicture}
    }
    }
    \hspace{20pt} 	 	
 	\subfloat[$K_2^*$]{
    \scalebox{0.8}{
	\begin{tikzpicture}[inner sep=0.7mm]	
	
	\node[draw, circle, black, line width=1pt](v1) at (0,0)[]{$1$};
     		
    \draw [-, line width=1.5pt, Red, densely dashdotted] (v1) to[out=150,in=210,looseness=12] (v1);
	\end{tikzpicture}
    }
    }
    \hspace{20pt}
 	\subfloat[$K_3^*$]{
    \scalebox{0.8}{
	\begin{tikzpicture}[inner sep=0.7mm]	
	
	\node[draw, circle, black, line width=1pt](v1) at (0,0)[]{$1$};
	\node[draw, circle, black, line width=1pt](v0) at (3,0)[]{$0$};
     		
    \draw [-, line width=1.5pt, Blue] (v1) to[out=0,in=180,bend left=20] (v0);
    \draw [-, line width=1.5pt, Red, densely dashdotted] (v1) to[out=0,in=180,bend right=20] (v0);
    \draw [-, line width=1.5pt, Red, densely dashdotted] (v1) to[out=150,in=210,looseness=12] (v1);
	\end{tikzpicture}
    }
    }
    \hspace{20pt}
 	\subfloat[$K_4^*$]{
    \scalebox{0.8}{
	\begin{tikzpicture}[inner sep=0.7mm]	
	
	\node[draw, circle, black, line width=1pt](v1) at (0,0)[]{$1$};
	\node[draw, circle, black, line width=1pt](v2) at (3,0)[]{$2$};
     		
    \draw [-, line width=1.5pt, Blue] (v1) to[out=0,in=180,bend left=20] (v2);
    \draw [-, line width=1.5pt, Red, densely dashdotted] (v1) to[out=0,in=180,bend right=20] (v2);
    \draw [-, line width=1.5pt, Red, densely dashdotted] (v1) to[out=150,in=210,looseness=12] (v1);
    \draw [-, line width=1.5pt, Red, densely dashdotted] (v2) to[out=30,in=-30,looseness=12] (v2);
	\end{tikzpicture}
    }
    }
    \hspace{20pt}
 	\subfloat[$K_5^*$]{
    \scalebox{0.8}{
	\begin{tikzpicture}[inner sep=0.7mm]	
	\node[draw, circle, black, line width=1pt](v0) at (1.5,2)[]{$0$};
	\node[draw, circle, black, line width=1pt](v1) at (0,0)[]{$1$};
	\node[draw, circle, black, line width=1pt](v2) at (3,0)[]{$2$};
     		
    \draw [-, line width=1.5pt, Blue] (v1) to[out=0,in=180,bend left=20] (v2);
    \draw [-, line width=1.5pt, Red, densely dashdotted] (v1) to[out=0,in=180,bend right=20] (v2);
    \draw [-, line width=1.5pt, Red, densely dashdotted] (v1) to[out=150,in=210,looseness=12] (v1);
    \draw [-, line width=1.5pt, Red, densely dashdotted] (v2) to[out=30,in=-30,looseness=12] (v2);
    
    \draw [-, line width=1.5pt, Red, densely dashdotted] (v1) to[out=90,in=180,bend left=20] (v0);
    \draw [-, line width=1.5pt, Blue] (v1) to[out=90,in=180,bend right=20] (v0);
    \draw [-, line width=1.5pt, Red, densely dashdotted] (v2) to[out=90,in=180,bend right=20] (v0);
    \draw [-, line width=1.5pt, Blue] (v2) to[out=90,in=180,bend left=20] (v0);
	\end{tikzpicture}
    }
    }
    \hspace{20pt}
 	\subfloat[$K_6^*$]{
    \scalebox{0.8}{
	\begin{tikzpicture}[inner sep=0.7mm]	
	\node[draw, circle, black, line width=1pt](v0) at (1.5,2)[]{$3$};
	\node[draw, circle, black, line width=1pt](v1) at (0,0)[]{$1$};
	\node[draw, circle, black, line width=1pt](v2) at (3,0)[]{$2$};
     		
    \draw [-, line width=1.5pt, Blue] (v1) to[out=0,in=180,bend left=20] (v2);
    \draw [-, line width=1.5pt, Red, densely dashdotted] (v1) to[out=0,in=180,bend right=20] (v2);
    \draw [-, line width=1.5pt, Red, densely dashdotted] (v1) to[out=150,in=210,looseness=12] (v1);
    \draw [-, line width=1.5pt, Red, densely dashdotted] (v2) to[out=30,in=-30,looseness=12] (v2);
    \draw [-, line width=1.5pt, Red, densely dashdotted] (v0) to[out=120,in=60,looseness=12] (v0);
    
    \draw [-, line width=1.5pt, Red, densely dashdotted] (v1) to[out=90,in=180,bend left=20] (v0);
    \draw [-, line width=1.5pt, Blue] (v1) to[out=90,in=180,bend right=20] (v0);
    \draw [-, line width=1.5pt, Red, densely dashdotted] (v2) to[out=90,in=180,bend right=20] (v0);
    \draw [-, line width=1.5pt, Blue] (v2) to[out=90,in=180,bend left=20] (v0);
	\end{tikzpicture}
    }
    }

\caption{Examples of signed complete multigraphs $K_k^*$.
Dashed red edges are negative edges, while solid blue edges are positive edges. 
\label{figure:kns}}
\end{figure}
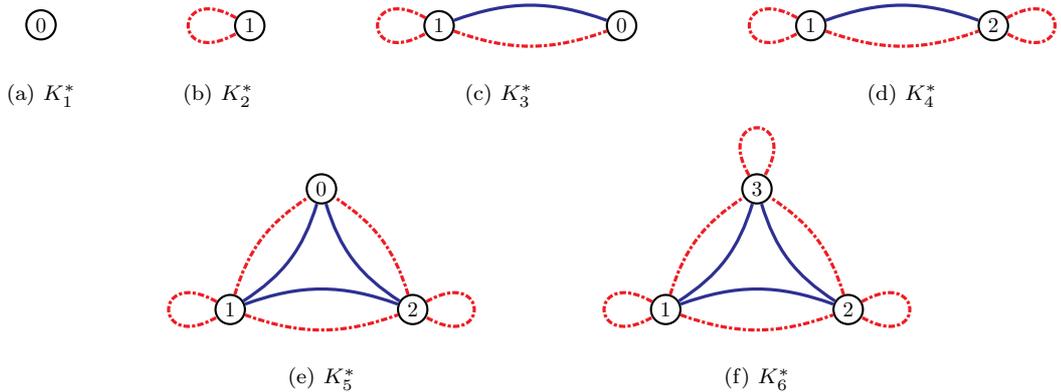

With the notion of proper colouring of signed graphs above in hand, 
the next step is to wonder about a natural way to define a dedicated parameter corresponding to the usual chromatic number of graphs.
From here on, we stick to some terminology introduced by M\'a\v{c}ajov\'a, Raspaud and \v{S}koviera,
which, for technicalities described in~\cite{MRS16} (dealing, notably, with the oddities that colour $\pm 0$ brings), 
fits better to the unsigned context than the terminology originally introduced by Zaslavsky in~\cite{Zas82}.
For every $k \geq 1$, we define a set $M_k \subset \mathbb{Z}$ being $M_k=\{-n, -(n-1), \dots, -1, +1, \dots, +(n-1), +n\}$ if $k=2n$,
and $M_k=\{-n, -(n-1), \dots, -1, \pm 0, +1, \dots, +(n-1), +n\}$ if $k=2n+1$ (where, note, $-0$ and $+0$ are regarded as a single colour $\pm 0$),
so that the set $M_k$ has size exactly $k$.
From here on, a \textit{$k$-colouring} of a signed graph will refer to a colouring assigning colours from $M_k$.
We finally get to generalising the chromatic number to the signed context,
the \textit{chromatic number} $\chi(G,\sigma)$ of a signed graph $(G,\sigma)$ being the smallest $k \geq 1$ such that $(G,\sigma)$ admits proper $k$-colourings.
As described earlier, it is important to emphasise that, although a signed graph $(G,\sigma)$ might have many \textit{equivalent signed graphs} (i.e., other signed graphs defined over $G$ reached through switching vertices in $(G,\sigma)$),
they are actually all equivalent in terms of chromatic number, as a proper $k$-colouring of any of them yields a proper $k$-colouring of any other equivalent signed graph.
Thus, in some sense, this parameter is actually defined for equivalence classes of signed graphs, rather than for individual signed graphs only.

Similarly to Guenin's way of defining proper colourings of signed graphs, 
the definitions above can also be seen through the lens of homomorphisms between signed graphs.
Particularly, a proper $k$-colouring of a signed graph $(G,\sigma)$ can be seen as a homomorphism of $(G,\sigma)$ to $K_k^*$,
i.e., preserving both edges and edge signs,
where $K_k^*$ is the signed complete multigraph where (see Figure~\ref{figure:kns} for an illustration):

\begin{itemize}
	\item $V(K_k^*)=\{i : +i \in M_k {\rm ~and~} i \geq 0\}$, 
	\item every two distinct vertices $i$ and $j$ are joined by a negative edge and a positive edge,
	\item there is a negative loop at every vertex $i \neq 0$.
\end{itemize}

\noindent Note that $K_k^*$ has essentially two possible structures depending on the parity of $k$,
which illustrates one subtlety of proper colourings of signed graphs,
being that the number of colours itself has a central place.
Nevertheless, from this, we can also define $\chi(G,\sigma)$ as the smallest $k \geq 1$ such that $(G,\sigma)$ admits a homomorphism to $K_k^*$.

\subsubsection*{Complete colourings and achromatic number of signed graphs}

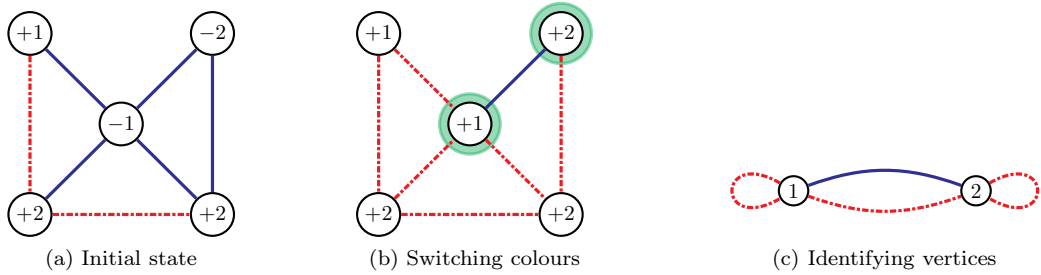
\begin{figure}[!t]
 	\centering
 	\subfloat[Initial state]{
    \scalebox{0.8}{
	\begin{tikzpicture}[inner sep=0.7mm]	
	
	\node[draw, circle, black, line width=1pt](v1) at (0,0)[]{$+1$};
	\node[draw, circle, black, line width=1pt](v2) at (0,-3)[]{$+2$};
	\node[draw, circle, black, line width=1pt](v3) at (1.5,-1.5)[]{$-1$};
	\node[draw, circle, black, line width=1pt](v4) at (3,0)[]{$-2$};
	\node[draw, circle, black, line width=1pt](v5) at (3,-3)[]{$+2$};
	
    \draw [-, line width=1.5pt, Red, densely dashdotted] (v1) -- (v2);
    \draw [-, line width=1.5pt, Red, densely dashdotted] (v2) -- (v5);
    \draw [-, line width=1.5pt, Blue] (v4) -- (v5);
    \draw [-, line width=1.5pt, Blue] (v1) -- (v3);
    \draw [-, line width=1.5pt, Blue] (v2) -- (v3);
    \draw [-, line width=1.5pt, Blue] (v3) -- (v4);
    \draw [-, line width=1.5pt, Blue] (v3) -- (v5);
	\end{tikzpicture}
    }
    }
    \hspace{30pt}
 	\subfloat[Switching colours]{
    \scalebox{0.8}{
	\begin{tikzpicture}[inner sep=0.7mm]	
	
	\draw[Green, line width=1.5pt,opacity=0.4, fill] (v4) circle (0.5cm);
	\draw[Green, line width=1.5pt,opacity=0.4, fill] (v3) circle (0.5cm);
	
	\node[draw, circle, black, line width=1pt](v1) at (0,0)[]{$+1$};
	\node[draw, circle, black, line width=1pt](v2) at (0,-3)[]{$+2$};
	\node[draw, circle, black, line width=1pt,fill=white](v3) at (1.5,-1.5)[]{$+1$};
	\node[draw, circle, black, line width=1pt,fill=white](v4) at (3,0)[]{$+2$};
	\node[draw, circle, black, line width=1pt](v5) at (3,-3)[]{$+2$};
	
    \draw [-, line width=1.5pt, Red, densely dashdotted] (v1) -- (v2);
    \draw [-, line width=1.5pt, Red, densely dashdotted] (v2) -- (v5);
    \draw [-, line width=1.5pt, Red, densely dashdotted] (v4) -- (v5);
    \draw [-, line width=1.5pt, Red, densely dashdotted] (v1) -- (v3);
    \draw [-, line width=1.5pt, Red, densely dashdotted] (v2) -- (v3);
    \draw [-, line width=1.5pt, Blue] (v3) -- (v4);
    \draw [-, line width=1.5pt, Red, densely dashdotted] (v3) -- (v5);
	\end{tikzpicture}
    }
    }
    \hspace{30pt}
 	\subfloat[Identifying vertices]{
    \scalebox{0.8}{
	\begin{tikzpicture}[inner sep=0.7mm]	
	\node[draw, circle, black, line width=1pt](v1) at (0,0)[]{$1$};
	\node[draw, circle, black, line width=1pt](v2) at (3,0)[]{$2$};
     		
    \draw [-, line width=1.5pt, Blue] (v1) to[out=0,in=180,bend left=20] (v2);
    \draw [-, line width=1.5pt, Red, densely dashdotted] (v1) to[out=0,in=180,bend right=20] (v2);
    \draw [-, line width=1.5pt, Red, densely dashdotted] (v1) to[out=150,in=210,looseness=12] (v1);
    \draw [-, line width=1.5pt, Red, densely dashdotted] (v2) to[out=30,in=-30,looseness=12] (v2);
	\end{tikzpicture}
    }
    }

\caption{Illustration of the reduction process. Given a signed graph with a given proper $4$-colouring (a), 
we start by switching all vertices (highlighted in green) with negative colours (b), before identifying vertices with the same colour and removing parallel edges (c). 
The reduced signed graph in (c) being $K_4^*$, the proper $4$-colouring in (a) is thus also complete.
Dashed red edges are negative edges, while solid blue edges are positive edges. 
\label{figure:reduced}}
\end{figure}

We now get to defining the notion of complete colouring (and of achromatic number) of signed graphs that we explore throughout this work.
As will be seen later on, the upcoming definitions admit other interpretations, which, depending on the context, 
might be more or less convenient to adopt.
For now, we need an auxiliary notion, illustrated in Figure~\ref{figure:reduced}.
Let $(G,\sigma)$ be a signed graph, and $\phi$ be a $k$-colouring of $(G,\sigma)$.
By \textit{reducing} $(G,\sigma)$ (according to $\phi$), we mean applying the following successive operations to $(G,\sigma)$:

\begin{enumerate}
	\item switching all vertices assigned negative colours by $\phi$;
	\item identifying all vertices assigned a given colour $+i$ by $\phi$;
	\item for every two vertices joined by an edge (loop or not), keeping at most one positive edge and at most one negative edge joining them;
\end{enumerate}

\noindent which result in a \textit{reduced signed graph} $R(G,\sigma,\phi)$.
Note that the vertices of $R(G,\sigma,\phi)$ get assigned non-negative colours of $M_k$ inherited from $\phi$;
for convenience, we also denote by $\phi$ the resulting $k$-colouring of $R(G,\sigma,\phi)$.
Similarly as in the unsigned context, in which applying similar operations from a complete colouring must end up in a complete graph,
we here say that $\phi$ is \textit{complete} if $R(G,\sigma,\phi)$ is $K_k^*$.
Note that, actually, this is similar to requiring that, after having applied the first operation above, the resulting signed graph admits an edge-surjective homomorphism to $K_k^*$,
which, in some sense, means we want all possible valid configurations of two colours and an edge sign by a proper $k$-colouring to appear along at least one edge each.
To be clear, the third operation above means making the resulting signed graph simple,
that is, having no two loops with the same sign at any vertex, neither two edges with the same sign linking any two vertices.

\begin{figure}[!t]
 	\centering
 	\subfloat[$(G,\sigma)$]{
    \scalebox{0.8}{
	\begin{tikzpicture}[inner sep=0.7mm]	
	
	\node[draw, circle, black, line width=1pt](v1) at (0,0)[]{$+1$};
	\node[draw, circle, black, line width=1pt](v2) at (0,-2)[]{$+1$};
	\node[draw, circle, black, line width=1pt](v3) at (-2,-4)[]{$\pm 0$};
	\node[draw, circle, black, line width=1pt](v4) at (2,-4)[]{$\pm 0$};
	
    \draw [-, line width=1.5pt, Red, densely dashdotted] (v1) -- (v2);
    \draw [-, line width=1.5pt, Red, densely dashdotted] (v2) -- (v3);
    \draw [-, line width=1.5pt, Blue] (v2) -- (v4);
	\end{tikzpicture}
    }
    }
    \hspace{20pt}
 	\subfloat[$(G,\sigma')$]{
    \scalebox{0.8}{
	\begin{tikzpicture}[inner sep=0.7mm]	
	
	\draw[Green, line width=1.5pt,opacity=0.4, fill] (v1) circle (0.5cm);	
	
	\node[draw, circle, black, line width=1pt,fill=white](v1) at (0,0)[]{$-1$};
	\node[draw, circle, black, line width=1pt](v2) at (0,-2)[]{$+1$};
	\node[draw, circle, black, line width=1pt](v3) at (-2,-4)[]{$\pm 0$};
	\node[draw, circle, black, line width=1pt](v4) at (2,-4)[]{$\pm 0$};
	
    \draw [-, line width=1.5pt, Blue] (v1) -- (v2);
    \draw [-, line width=1.5pt, Red, densely dashdotted] (v2) -- (v3);
    \draw [-, line width=1.5pt, Blue] (v2) -- (v4);
	\end{tikzpicture}
    }
    }
    \hspace{20pt}
 	\subfloat[$(G,\sigma'')$]{
    \scalebox{0.8}{
	\begin{tikzpicture}[inner sep=0.7mm]	
	
	\draw[Green, line width=1.5pt,opacity=0.4, fill] (v4) circle (0.5cm);	
	
	\node[draw, circle, black, line width=1pt](v1) at (0,0)[]{$-1$};
	\node[draw, circle, black, line width=1pt](v2) at (0,-2)[]{$+1$};
	\node[draw, circle, black, line width=1pt](v3) at (-2,-4)[]{$\pm 0$};
	\node[draw, circle, black, line width=1pt,fill=white](v4) at (2,-4)[]{$\pm 0$};
	
    \draw [-, line width=1.5pt, Blue] (v1) -- (v2);
    \draw [-, line width=1.5pt, Red, densely dashdotted] (v2) -- (v3);
    \draw [-, line width=1.5pt, Red, densely dashdotted] (v2) -- (v4);
	\end{tikzpicture}
    }
    }

\caption{Illustration of the fact that complete colourings are not necessarily preserved under switching vertices.
(a) depicts a signed graph $(G,\sigma)$ together with a complete $3$-colouring.
(b) depicts another signed graph $(G,\sigma')$ obtained from $(G,\sigma)$ by switching the vertex highlighted in green; it can be checked that the resulting $3$-colouring remains complete.
(c) depicts a third signed graph $(G,\sigma'')$ obtained from $(G,\sigma')$ by switching the vertex highlighted in green;
this time, the resulting $3$-colouring is not complete since the reduced graph has no positive edge joining its vertices $0$ and~$1$.
Dashed red edges are negative edges, while solid blue edges are positive edges. 
\label{figure:preservation}}
\end{figure}
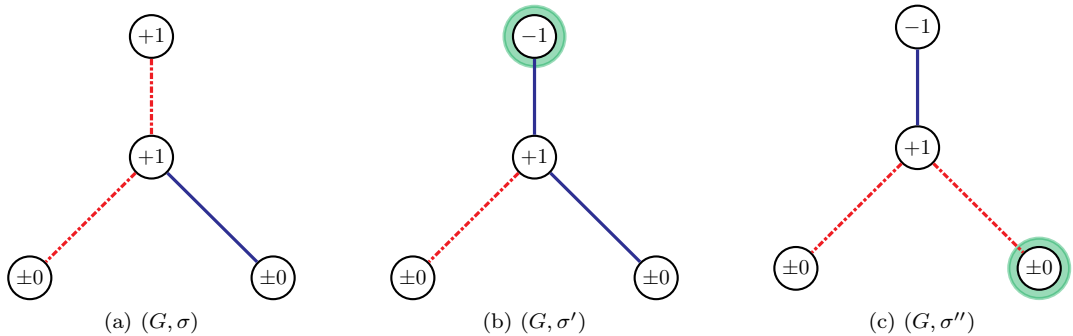

It is worth pointing out that complete colourings of signed graphs, as defined above, are quite different from proper colourings of signed graphs,
in that they might be not preserved by the switching operation (see Figure~\ref{figure:preservation} for an illustration of the following arguments).
More precisely, from a complete colouring $\phi$ of a signed graph $(G,\sigma)$, by switching vertices that have been assigned colour $\pm 0$ by $\phi$, we end up with an equivalent signed graph $(G,\sigma')$ and a proper colouring $\phi'$ that does not have to be complete. 
To make this point clear, it is worth emphasising that only switching vertices assigned colour $\pm 0$ can make a complete colouring lose its completeness.
A consequence is that complete colourings are preserved under switching vertices being assigned any other colour (essentially because this does not alter the resulting edges in the reduced graph; this will be made clearer through later Observation~\ref{observation:types-edges}), 
and, more generally speaking, complete $k$-colourings with $k$ even are consistent under switching vertices.
Nevertheless, in general all equivalent signed graphs defined over a given graph do not have to be all equivalent in terms of complete colourings.
Through the results we will establish in later sections (those in Subsection~\ref{subsection:achromatic-cliques} and Section~\ref{section:operations} being particularly significant),
we will establish that signed graphs that are close in terms of signatures
can actually be quite different in terms of achromatic number.

To take those concerns into consideration, we choose to define the \textit{achromatic number} of a signed graph $(G,\sigma)$, denoted by $\psi(G,\sigma)$, 
as the largest $k \geq 1$ such that there exists a signed graph $(G,\sigma')$ equivalent to $(G,\sigma)$ that admits complete $k$-colourings.
In other words, starting from $(G,\sigma)$, our goal is to switch vertices to get a signed graph that maps in an edge-surjective way to the largest $K_k^*$ possible (following the concepts above).
In cases where $H$ refers to the signed graph $(G,\sigma)$ (i.e., the signature of $H$ is implicit),
abusing the notation we will simply denote by $\psi(H)$ its achromatic number.
Note that a complete colouring is by definition proper, and, thus, we always have $\chi(G, \si) \le \psi(G, \si)$.

\subsubsection*{Some remarks and tools}

Let us now give a bit more insight into complete colourings and the achromatic number (as introduced above for signed graphs)
by introducing some more working terminology and pointing out some tools and results to be used in the main parts of this paper.

To make sure a given colouring of a signed graph is complete, the existence of edges with certain signs and colours must be verified.
To ease this part, we will make use of the following terminology.
Given a signed graph $(G,\sigma)$ and a $k$-colouring $\phi$ of $(G,\sigma)$,
for any two (possibly equal) colours $i,  j \in M_k$ and any edge sign $s \in \{-,+\}$,
we say that an edge $uv$ of $(G,\sigma)$ is of \textit{type $(i,j,s)$} if $\{\phi(u),\phi(v)\}=\{i,j\}$
and $\sigma(uv)=s$.
Assume now $i,j \in V(K_k^*)$.
If $i \neq j$, then, by a \textit{p-edge of type $(i,j)$} of $(G,\sigma)$, we refer to an edge of $(G,\sigma)$ that, essentially, gets mapped, through the reduction process,
to the positive edge joining $i$ and $j$ in $K_k^*$ (note that this edge exists in $K_k^*$ by definition).
Analogously, if $\{i,j\}\neq\{0\}$, then we define an \textit{n-edge of type $(i,j)$} of $(G,\sigma)$ as an edge of $(G,\sigma)$ that gets mapped to the negative edge joining $i$ and $j$ in $K_k^*$ (again, that edge exists).
Note here that if $\phi$ is a proper $k$-colouring of $(G,\sigma)$,
then $(G,\sigma)$ cannot contain what would be a ``p-edge of type $(i,i)$'', 
nor a ``p-edge or n-edge of type $(0,0)$''.

Thus, if $\phi$ is complete, then, in $(G,\sigma)$, there must be several types of p-edges and n-edges,
namely (1) a p-edge and an n-edge of type $(i,j)$ for every two distinct $i,j \in V(K_k^*)$, and (2) an n-edge of type $(i,i)$ for every strictly positive $i \in V(K_k^*)$.
By carefully checking the different types of edges of $(G,\sigma)$ giving a given edge type in the reduced signed graph $R(G,\sigma,\phi)$,
the following classification arises:

\begin{observation}\label{observation:types-edges}
Let $(G,\sigma)$ be a signed graph, and $\phi$ be a 
$k$-colouring of $(G,\sigma)$.
Assume $R(G,\sigma,\phi)$ has an edge of type $(i,j,s)$; then:
\begin{itemize}
	\item if $i \neq j$ and $s=+$, then $(G,\sigma)$ has an edge of type $(i,j,+)$, $(-i,-j, +)$, $(-i,j,-)$ or $(i,-j,-)$;
	\item if $i \neq j$ and $s=-$, then $(G,\sigma)$ has an edge of type $(-i,j,+)$, $(i,-j,+)$, $(i,j,-)$ or $(-i,-j,-)$;
	\item if $i=j$, $i \neq 0$ and $s=-$, then $(G,\sigma)$ has an edge of type $(i,-i,+)$, $(i,i,-)$ or $(-i,-i,-)$.
\end{itemize}
\end{observation}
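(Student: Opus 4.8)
The plan is to trace a single edge backwards through the three operations that define the reduction $R(G,\sigma,\phi)$. First I would observe that the third operation (making the graph simple) never creates an edge: it only discards duplicate positive or negative edges between a given pair of vertices. Hence any edge of type $(i,j,s)$ in $R(G,\sigma,\phi)$ already occurs, with the same endpoint-colours and sign, in the signed graph obtained after performing only the first two operations. The second operation (identifying all vertices sharing a colour $+t$) merely contracts colour classes, so an edge between the class of colour $i$ and the class of colour $j$ must arise from an edge joining some vertex whose post-switching colour is $i$ to some vertex whose post-switching colour is $j$, with the same sign. Everything therefore reduces to understanding the effect of the first operation alone, namely switching the set $S$ of vertices to which $\phi$ assigns a strictly negative colour.

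Next I would record two elementary facts about this switching. After switching $S$, the colour of each vertex equals the absolute value of its original colour: a vertex coloured $-t$ (with $t>0$) lies in $S$ and becomes coloured $+t$, while a vertex coloured $+t$ or $\pm 0$ lies outside $S$ and keeps its colour. Moreover, the sign of an edge $uv$ is reversed precisely when exactly one of $u,v$ belongs to $S$, that is, when exactly one of $\phi(u),\phi(v)$ is strictly negative; in the remaining situations (both negative, or both non-negative) the edge lies outside the switching cut and its sign is unchanged.

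With these facts, the statement becomes a finite enumeration. Fix an edge $e$ of $R(G,\sigma,\phi)$ of type $(i,j,s)$ and let $\bar s$ denote the sign opposite to $s$; by the reductions above, $e$ comes from an edge $uv$ of $(G,\sigma)$ with $\{|\phi(u)|,|\phi(v)|\}=\{i,j\}$, so $\{\phi(u),\phi(v)\}$ is one of $\{i,j\}$, $\{-i,-j\}$, $\{i,-j\}$, $\{-i,j\}$ (with $-0$ read as $\pm 0$). In the first two cases the two endpoints lie both outside $S$ or both inside $S$, the sign is not flipped and $\sigma(uv)=s$, yielding the types $(i,j,s)$ and $(-i,-j,s)$; in the last two cases exactly one endpoint lies in $S$, the sign is flipped and $\sigma(uv)=\bar s$, yielding the types $(i,-j,\bar s)$ and $(-i,j,\bar s)$. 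Reading off $\bar s=-$ when $s=+$ gives exactly the four types of the first bullet, and $\bar s=+$ when $s=-$ gives exactly those of the second. For the loop case $i=j\neq 0$ with $s=-$, the two mixed possibilities $\{i,-i\}$ and $\{-i,i\}$ coincide, collapsing the four candidates to the three types $(i,i,-)$, $(-i,-i,-)$, $(i,-i,+)$ of the third bullet.

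The argument is purely a bookkeeping exercise, so the only real care needed — the point I would treat as the main obstacle — is the sign tracking under switching combined with the degenerate colour $\pm 0$. Since a $\pm 0$-coloured vertex is never switched, some of the listed source-types collapse or become unattainable when $i=0$ or $j=0$ (for instance, only $(0,j,+)$ and $(0,-j,-)$ can actually produce an $R$-edge of type $(0,j,+)$). This merely makes the stated list a correct over-approximation in those degenerate cases rather than invalidating it, so the conclusion stands in full.
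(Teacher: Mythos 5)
Your proposal is correct and follows exactly the route the paper intends: the paper states this observation without a written proof, asserting it arises ``by carefully checking the different types of edges'' under the reduction, and your backwards trace through the three operations (simplification creates nothing, identification is a contraction, and switching negates colours while flipping an edge's sign precisely when exactly one endpoint has a strictly negative colour) is precisely that check. Your handling of the degenerate colour $\pm 0$ --- noting such vertices are never switched, so the stated lists are a harmless over-approximation of the attainable source types --- is also sound, since the observation only claims a disjunction.
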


Through playing with types of edges as just introduced,
it can be checked that the following two claims,
on ways to obtain complete colourings from existing ones,
hold easily.

\begin{observation}\label{observation:switch-negated-colours}
Let $(G,\sigma)$ be a signed graph, and $\phi$ be a complete $k$-colouring of $(G,\sigma)$.
By swapping any two opposite colours $+i$ and $-i$,
i.e., changing to $-i$ the colour by $\phi$ of every vertex assigned colour $+i$,
and \textit{vice versa},
we get a complete $k$-colouring of $(G,\sigma)$.
\end{observation}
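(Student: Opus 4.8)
The plan is to write the new colouring as $\phi' = \pi \circ \phi$, where $\pi \colon M_k \to M_k$ is the involution that swaps $+i$ and $-i$ and fixes every other colour (when $i=0$ the swap is vacuous, since $\pm 0$ is a single colour, so I would assume $i \geq 1$). First I would check that $\phi'$ is proper. The key property is that $\pi$ commutes with sign negation, i.e.\ $\pi(-c) = -\pi(c)$ for every $c \in M_k$ (a one-line verification on each of the cases $c = +i$, $c = -i$, and $c \notin \{+i,-i\}$), and that $\pi$ is a bijection. Hence for any edge $uv$ with $\sigma(uv)=s$ one has $\phi(u) = s\,\phi(v) \iff \pi(\phi(u)) = s\,\pi(\phi(v))$, so the defining inequality of a proper colouring transfers verbatim from $\phi$ to $\phi'$.

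For completeness, the idea I would follow is to compare the two reduced graphs $R(G,\sigma,\phi)$ and $R(G,\sigma,\phi')$ step by step. Writing $B_{i'}$ for the set of vertices coloured $+i'$ or $-i'$ by $\phi$, note first that, because $\pi$ permutes colours within each pair $\{+i',-i'\}$, the colour classes identified in step~2 are exactly the sets $B_{i'}$ for both $\phi$ and $\phi'$; only the set of vertices switched in step~1 changes. A short computation shows the two switching sets differ precisely by $B_i$: relative to $\phi$, the colouring $\phi'$ additionally switches every vertex that was coloured $+i$ and un-switches every vertex that was coloured $-i$. The point I would then make is that performing this extra switch of the whole class $B_i$ \emph{before} the identification has the same effect as switching the single vertex $i$ of the reduced graph \emph{afterwards}: edges inside $B_i$ flip sign twice and become unchanged loops at $i$, while edges leaving $B_i$ flip once and become the edges incident to $i$. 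Since switching a vertex also commutes with the simplification in step~3 (deduplicating parallel edges is unaffected by flipping the signs of all edges at a common endpoint), this yields $R(G,\sigma,\phi') = \mathrm{switch}_i\bigl(R(G,\sigma,\phi)\bigr)$.

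To conclude, I would invoke the hypothesis that $\phi$ is complete, i.e.\ $R(G,\sigma,\phi) = K_k^*$, and observe that $K_k^*$ is invariant under switching any single vertex: between any two distinct vertices there are both a positive and a negative edge, so flipping both signs preserves the unordered pair, and each negative loop flips twice and is therefore unchanged. Hence $R(G,\sigma,\phi') = \mathrm{switch}_i(K_k^*) = K_k^*$, which is exactly the statement that $\phi'$ is complete.

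The step I expect to be the main obstacle is the bookkeeping in the middle paragraph: justifying rigorously that switching the entire colour class $B_i$ prior to identification coincides with switching the reduced vertex $i$, and that this commutes with the simplification step. Everything else (properness, and the switching-invariance of $K_k^*$) is a routine case check. An alternative, slightly more abstract phrasing of the same argument would be to observe that $\pi$ is an automorphism of $K_k^*$ whose action on colourings corresponds, after reduction, to a single-vertex switch of $K_k^*$; I would nonetheless keep the explicit reduction comparison, as it makes the role of the $\pm 0$ subtlety most transparent.
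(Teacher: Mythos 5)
Your proof is correct. The paper itself gives no written proof of this observation: it is stated right after the sentence ``Through playing with types of edges as just introduced, it can be checked that the following two claims \dots hold easily,'' so the intended argument is a direct case check against Observation~\ref{observation:types-edges} --- swapping $+i$ and $-i$ turns each p-edge of type $(i,j)$ into an n-edge of type $(i,j)$ and \emph{vice versa}, fixes n-edges of type $(i,i)$, and leaves all other types untouched, and since $K_k^*$ carries both a positive and a negative edge between any two distinct vertices, completeness survives. Your route packages exactly this permutation of edge types into the single structural identity $R(G,\sigma,\phi') = \mathrm{switch}_i\bigl(R(G,\sigma,\phi)\bigr)$ together with the switch-invariance of $K_k^*$ at any vertex $i \neq 0$: the cut-counting in your middle paragraph (edges inside $B_i$ flip twice, edges leaving $B_i$ flip once, and deduplication commutes with flipping all signs at a common endpoint) is sound, and your properness argument via $\pi(-c) = -\pi(c)$ is the standard check. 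What your phrasing buys over the paper's implicit one is a reusable statement --- switching an entire colour class before identification equals switching the corresponding single vertex of the reduced graph --- which also makes transparent the paper's surrounding remark that only switching vertices coloured $\pm 0$ can destroy completeness, since vertex $0$ of $K_k^*$ is exactly the one whose loop structure is not protected by this symmetry (your argument is vacuous for $i=0$, as you correctly note). So: same underlying mechanism, presented at a more structural level, with no gaps.
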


\begin{observation}\label{observation:remove-class}
Let $(G,\sigma)$ be a signed graph, and $\phi$ be a complete $k$-colouring of $(G,\sigma)$.
For every $i \in M_k$, removing, from $(G,\sigma)$, all vertices assigned colour $\pm i$ by $\phi$,
results in a signed graph $(G',\sigma)$ in which the restriction of $\phi$ is a complete $(k-2)$-colouring (if $i \neq 0$) or a complete $(k-1)$-colouring (otherwise).
\end{observation}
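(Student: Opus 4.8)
The plan is to reduce completeness to the combinatorial criterion stated just before Observation~\ref{observation:types-edges}: a proper $k$-colouring $\phi$ of $(G,\sigma)$ is complete exactly when $(G,\sigma)$ contains a p-edge and an n-edge of type $(a,b)$ for every two distinct $a,b \in V(K_k^*)$, together with an n-edge of type $(a,a)$ for every strictly positive $a \in V(K_k^*)$. I would then verify that every witnessing edge demanded by this criterion for the \emph{surviving} colours is untouched by the deletion of the vertices coloured $\pm i$.

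First I would set up the relabelling. Deleting all vertices coloured $+i$ and $-i$ (or the single class $\pm 0$ when $i=0$) leaves a colouring using $M_k \setminus \{+i,-i\}$ (resp. $M_k \setminus \{\pm 0\}$), a set of size $k-2$ (resp. $k-1$) that is closed under negation and still contains $\pm 0$ precisely when the original did. Hence there is a natural order- and sign-preserving bijection $r$ from this set onto $M_{k-2}$ (resp. $M_{k-1}$), with $r(-x)=-r(x)$ and, when relevant, $r(\pm 0)=\pm 0$; I set $\phi'=r\circ\phi$ on $G'$. Since $r$ commutes with negation, we have $r(\phi(u))=\sigma(uv)\,r(\phi(v))$ iff $\phi(u)=\sigma(uv)\,\phi(v)$, so properness is preserved; and since deleting vertices only removes edges, $\phi'$ is a proper $(k-2)$- (resp. $(k-1)$-) colouring of $(G',\sigma)$.

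For completeness, I would fix any two distinct surviving colours $a,b$, and separately the loop requirement for each strictly positive surviving $a$. Completeness of $\phi$ supplies, in $(G,\sigma)$, a p-edge and an n-edge of type $(a,b)$ and an n-edge of type $(a,a)$; by Observation~\ref{observation:types-edges} each such edge has its two endpoints coloured among $\{+a,-a\}$ and $\{+b,-b\}$. As $a,b \neq i$, none of these colours equals $+i$ or $-i$ (nor $\pm 0$ when $i=0$), so the endpoints are not deleted and the edge persists in $(G',\sigma)$ with unchanged type; applying $r$ turns it into exactly the witnessing p-edge, n-edge, or loop-type n-edge required for the pair $(r(a),r(b))$ or the vertex $r(a)$ in $K_{k-2}^*$ (resp. $K_{k-1}^*$). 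Conversely, no forbidden edge, namely a p-edge of type $(c,c)$ or a p- or n-edge of type $(0,0)$, can arise, since $\phi'$ is proper. Hence $R(G',\sigma,\phi')=K_{k-2}^*$ (resp. $K_{k-1}^*$), i.e. $\phi'$ is complete.

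The only genuine care is in this last step: one must invoke Observation~\ref{observation:types-edges} to pin down that the endpoints of every witnessing edge carry colours in $\{+a,-a\}\cup\{+b,-b\}$, as this is precisely what guarantees that removing the class $\pm i$ cannot destroy any witness associated with a pair of surviving colours. The case split between even $k$, odd $k$, and $i=0$, as well as the verification of the bijection $r$, is then routine bookkeeping.
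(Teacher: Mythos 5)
Your proof is correct and follows exactly the route the paper intends: the paper dispenses with Observation~\ref{observation:remove-class} by remarking that it "can be checked" by playing with edge types, and your argument is precisely that check made explicit, using Observation~\ref{observation:types-edges} to confirm that every witnessing p-edge and n-edge for a pair of surviving colours has both endpoints coloured in $\{+a,-a\}\cup\{+b,-b\}$ and therefore survives the deletion, with the sign-preserving relabelling $r$ onto $M_{k-2}$ (resp.\ $M_{k-1}$) handling the bookkeeping. The only detail worth a passing word is that every surviving colour class is nonempty in $(G',\sigma)$ (so that $R(G',\sigma,\phi')$ has the right vertex set), but this is immediate from the very witnesses you exhibit, so the proposal stands as written.
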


At first glance, the switching operation might seem a bit hard to comprehend.
Particularly, the unfamiliar reader might think that, given two signed graphs $(G,\sigma)$ and $(G,\sigma')$ defined over a same graph $G$,
determining whether $(G,\sigma)$ and $(G,\sigma')$ are equivalent is not obvious.
Fortunately, this is something that can be decided in quadratic time (see~\cite{RST21}),
due to the following notions that lie behind an invariant shared by any two equivalent signed graphs.
Recall that, in a graph, a \textit{closed walk} $W$ is a cycle in which vertices and edges can be repeated.
Assuming the graph is signed, the \textit{balance} of $W$ refers to the parity of the number of negative edges it traverses:
The balance of $W$ is either \textit{odd} if it traverses an odd number of negative edges, or \textit{even} otherwise,
i.e., if it traverses an even number of negative edges. If $W$ has an odd balance, then the closed walk is said to be \textit{negative}, while it is said to be \textit{positive} otherwise.
The important property is that, in signed graphs, 
the balance of closed walks is preserved upon switching vertices.

\begin{lemma}[Zaslavsky~\cite{Zas82b}]\label{lemma:switching-balance}
Two signed graphs $(G,\sigma)$ and $(G,\sigma')$ are equivalent if and only if the sets of positive (and, thus, negative) cycles
of $(G,\sigma)$ and $(G,\sigma')$ are the same.
\end{lemma}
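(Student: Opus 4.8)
The plan is to treat the two signs $+,-$ as the elements $+1,-1$ of the multiplicative group $\{\pm 1\}$, so that the balance of a cycle $C$ is recorded by the product $\prod_{e \in C}\sigma(e)$, which equals $+1$ exactly when $C$ is positive, and to prove the two implications separately. For the forward direction, I would note that it suffices to understand the effect of a single switching, since equivalence is by definition obtained through a finite sequence of switchings. Switching a set $S$ flips the sign of an edge precisely when that edge lies in the cut $(S, V(G)\setminus S)$, and any cycle $C$ meets such a cut in an even number of edges (traversing $C$ one alternates between $S$ and its complement and returns to the start). Hence $\prod_{e \in C}\sigma(e)$ gets multiplied by $(-1)^{\text{even}}=+1$, so the balance of every cycle is unchanged by a single switching, and therefore by any sequence of switchings. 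This shows that equivalent signed graphs share the same positive (and thus negative) cycles.

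For the converse, I would assume that $(G,\sigma)$ and $(G,\sigma')$ have the same positive cycles and introduce the \emph{difference signature} $\tau$, defined by $\tau(e)=\sigma(e)\sigma'(e)$ for each $e \in E(G)$. For every cycle $C$ one then has $\prod_{e\in C}\tau(e)=\big(\prod_{e\in C}\sigma(e)\big)\big(\prod_{e\in C}\sigma'(e)\big)=+1$, the last equality holding because $C$ has the same balance under $\sigma$ and under $\sigma'$. Thus every cycle of $(G,\tau)$ is positive, i.e. $(G,\tau)$ is \emph{balanced}. The goal is now to produce a set $S$ whose switching turns $\sigma$ into $\sigma'$; unravelling the definition of switching, this happens exactly when the cut $(S, V(G)\setminus S)$ consists precisely of the $\tau$-negative edges (an edge outside the cut is unchanged, so must satisfy $\tau(e)=+1$, while an edge in the cut is flipped, so must satisfy $\tau(e)=-1$). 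Hence the converse reduces to the classical fact that in a balanced signed graph the negative edges form an edge cut.

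To establish this last point I would invoke a proof of Harary's balance theorem. Assuming $G$ connected (otherwise I argue componentwise, isolated components imposing no constraint), I fix a root $r$ and assign to each vertex $v$ a value $c(v)\in\{\pm 1\}$ equal to $\prod_{e \in P}\tau(e)$ along some $r$–$v$ path $P$. One then checks, for an arbitrary edge $uv$, that concatenating the chosen $r$–$u$ path, the edge $uv$, and the reverse of the chosen $r$–$v$ path gives a closed walk whose sign product is $c(u)\,\tau(uv)\,c(v)$; since this closed walk is positive, one concludes $\tau(uv)=c(u)c(v)$. Taking $S=\{v : c(v)=-1\}$, the edges of the cut $(S,V(G)\setminus S)$ are then exactly the $\tau$-negative edges, so switching $S$ in $(G,\sigma)$ yields $(G,\sigma')$, establishing equivalence.

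The main obstacle is the step asserting that the closed walks appearing above are positive, which is where the hypothesis is genuinely used and which also underlies the well-definedness of the values $c(v)$: one must pass from ``every cycle is positive'' to ``every closed walk traverses an even number of negative edges''. I would handle this by induction on the length of a closed walk, splitting it at any repeated vertex into two shorter closed walks and using that the parity of the number of negative edges is additive, with simple cycles positive by hypothesis as the base case. Once this is in place, the translation of a balanced $\tau$ into the explicit switching set $S$ is routine.
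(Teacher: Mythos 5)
The paper offers no proof of this lemma to compare against: it is quoted as a known result of Zaslavsky~\cite{Zas82b}, and the authors use it as a black box. Your argument is correct and is essentially the standard proof of this classical fact. The forward direction (every cycle meets the cut $(S,V(G)\setminus S)$ in an even number of edges, so switching preserves the sign product of every cycle) is the usual computation, and your converse --- forming the difference signature $\tau=\sigma\sigma'$, noting that the hypothesis forces $(G,\tau)$ to be balanced, and then extracting the switching set $S=\{v : c(v)=-1\}$ from a vertex potential defined by path products from a root --- is precisely Harary's balance theorem repackaged as Zaslavsky's switching lemma. Two small points to tidy in a full write-up: the induction showing that every closed walk is positive needs, besides simple cycles, the degenerate base case of a walk of length two traversing a single edge back and forth (product $\tau(e)^2=+1$, so harmless), since such a walk has no interior repeated vertex and is not a cycle; and the well-definedness of $c(v)$ should be recorded explicitly as a consequence of that same closed-walk positivity, before the edge identity $\tau(uv)=c(u)c(v)$ is derived from it. Neither point affects correctness, and the reduction of equivalence to ``the $\tau$-negative edges form a cut'' is exactly the right mechanism.
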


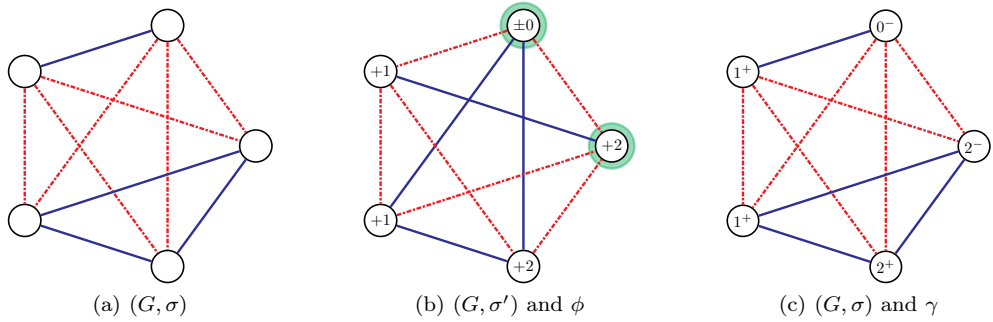
\begin{figure}[!t]
 	\centering
 	\subfloat[$(G,\sigma)$]{
    \scalebox{0.6}{
	\begin{tikzpicture}[inner sep=0.7mm]	
		\node[draw, circle, black, line width=1pt](v1) at (1*360/5:2.8cm){\textcolor{white}{+1}};	
		\node[draw, circle, black, line width=1pt](v2) at (2*360/5:2.8cm){\textcolor{white}{+1}};	
		\node[draw, circle, black, line width=1pt](v3) at (3*360/5:2.8cm){\textcolor{white}{+1}};	
		\node[draw, circle, black, line width=1pt](v4) at (4*360/5:2.8cm){\textcolor{white}{+1}};	
		\node[draw, circle, black, line width=1pt](v5) at (5*360/5:2.8cm){\textcolor{white}{+1}};	
		
    	\draw [-, line width=1.5pt, Blue] (v1) -- (v2);
    	\draw [-, line width=1.5pt, Red, densely dashdotted] (v1) -- (v3);
    	\draw [-, line width=1.5pt, Red, densely dashdotted] (v1) -- (v4);
    	\draw [-, line width=1.5pt, Red, densely dashdotted] (v1) -- (v5);
    	\draw [-, line width=1.5pt, Red, densely dashdotted] (v2) -- (v3);
    	\draw [-, line width=1.5pt, Red, densely dashdotted] (v2) -- (v4);
    	\draw [-, line width=1.5pt, Red, densely dashdotted] (v2) -- (v5);
    	\draw [-, line width=1.5pt, Blue] (v3) -- (v4);
    	\draw [-, line width=1.5pt, Blue] (v3) -- (v5);
    	\draw [-, line width=1.5pt, Blue] (v4) -- (v5);
	\end{tikzpicture}
    }
    }
    \hspace{20pt}
 	\subfloat[$(G,\sigma')$ and $\phi$]{
    \scalebox{0.6}{
	\begin{tikzpicture}[inner sep=0.7mm]	
		\draw[Green, line width=1.5pt,opacity=0.4, fill] (v1) circle (0.5cm);
		\draw[Green, line width=1.5pt,opacity=0.4, fill] (v5) circle (0.5cm);
		
		\node[draw, circle, black, line width=1pt,fill=white](v1) at (1*360/5:2.8cm){$\pm 0$};	
		\node[draw, circle, black, line width=1pt,fill=white](v2) at (2*360/5:2.8cm){$+1$};	
		\node[draw, circle, black, line width=1pt,fill=white](v3) at (3*360/5:2.8cm){$+1$};	
		\node[draw, circle, black, line width=1pt](v4) at (4*360/5:2.8cm){$+2$};	
		\node[draw, circle, black, line width=1pt,fill=white](v5) at (5*360/5:2.8cm){$+2$};	
		
    	\draw [-, line width=1.5pt, Red, densely dashdotted] (v1) -- (v2);
    	\draw [-, line width=1.5pt, Blue] (v1) -- (v3);
    	\draw [-, line width=1.5pt, Blue] (v1) -- (v4);
    	\draw [-, line width=1.5pt, Red, densely dashdotted] (v1) -- (v5);
    	\draw [-, line width=1.5pt, Red, densely dashdotted] (v2) -- (v3);
    	\draw [-, line width=1.5pt, Red, densely dashdotted] (v2) -- (v4);
    	\draw [-, line width=1.5pt, Blue] (v2) -- (v5);
    	\draw [-, line width=1.5pt, Blue] (v3) -- (v4);
    	\draw [-, line width=1.5pt, Red, densely dashdotted] (v4) -- (v5);
    	\draw [-, line width=1.5pt, Red, densely dashdotted] (v3) -- (v5);
	\end{tikzpicture}
    }
    }
    \hspace{20pt}
 	\subfloat[$(G,\sigma)$ and $\gamma$]{
    \scalebox{0.6}{
	\begin{tikzpicture}[inner sep=0.7mm]	
		\node[draw, circle, black, line width=1pt,fill=white](v1) at (1*360/5:2.8cm){$0^-$};	
		\node[draw, circle, black, line width=1pt,fill=white](v2) at (2*360/5:2.8cm){$1^+$};	
		\node[draw, circle, black, line width=1pt,fill=white](v3) at (3*360/5:2.8cm){$1^+$};	
		\node[draw, circle, black, line width=1pt](v4) at (4*360/5:2.8cm){$2^+$};	
		\node[draw, circle, black, line width=1pt](v5) at (5*360/5:2.8cm){$2^-$};	
		
    	\draw [-, line width=1.5pt, Blue] (v1) -- (v2);
    	\draw [-, line width=1.5pt, Red, densely dashdotted] (v1) -- (v3);
    	\draw [-, line width=1.5pt, Red, densely dashdotted] (v1) -- (v4);
    	\draw [-, line width=1.5pt, Red, densely dashdotted] (v1) -- (v5);
    	\draw [-, line width=1.5pt, Red, densely dashdotted] (v2) -- (v3);
    	\draw [-, line width=1.5pt, Red, densely dashdotted] (v2) -- (v4);
    	\draw [-, line width=1.5pt, Red, densely dashdotted] (v2) -- (v5);
    	\draw [-, line width=1.5pt, Blue] (v3) -- (v4);
    	\draw [-, line width=1.5pt, Blue] (v3) -- (v5);
    	\draw [-, line width=1.5pt, Blue] (v4) -- (v5);
	\end{tikzpicture}
    }
    }

\caption{Deducing an inferred complete $k$-colouring $\gamma$ of a signed graph $(G,\sigma)$ from a complete $k$-colouring $\phi$.
(a) depicts $(G,\sigma)$. (b) depicts an equivalent signed graph $(G,\sigma')$ obtained by switching two vertices (highlighted in green) of $(G,\sigma)$,
together with a complete $5$-colouring $\phi$. (c) shows the inferred complete $5$-colouring $\gamma$ of $(G,\sigma)$ that can be obtained from $\phi$.
Dashed red edges are negative edges, while solid blue edges are positive edges. 
\label{figure:inferred}}
\end{figure}

Regarding our notion of achromatic number of signed graphs,
it is more convenient, in some contexts we will experience through some of our proofs,
to manipulate complete $k$-colourings using a slightly different terminology.
This is based on the following observations (illustrated in Figure~\ref{figure:inferred}).
Let $(G,\sigma)$ be a signed graph,
and assume the vertices of $(G,\sigma)$ can be switched to reach an equivalent signed graph $(G,\sigma')$ admitting a complete $k$-colouring $\phi$.
To avoid having to deal with the switching operation, one can infer colours in $(G,\sigma)$ from $\phi$, taking into account that some vertices must be switched for $\phi$ to be complete.
Precisely, we denote by $M_k'$ the set of \textit{signed colours} being $\{n^-, (n-1)^-, \dots, 1^-,1^+, \dots, (n-1)^+,n^+\}$ if $k=2n$ is even,
and $\{n^-, (n-1)^-, \dots, 1^-, 0^-, 0^+, 1^+, \dots, (n-1)^+,n^+\}$ if $k=2n+1$ is odd.
The complete $k$-colouring $\phi$ of $(G,\sigma')$ now \textit{infers} a $k$-colouring $\gamma$ of $(G,\sigma)$ assigning colours from $M_k'$,
where assigning colour $i^+$ to a vertex $v$ means that we assign colour $+i$ to $v$ without switching it,
while assigning colour $i^-$ means we first switch $v$ before assigning colour $+i$ to it.
The notion of completeness now extends to inferred $k$-colourings; that is, we say that an inferred $k$-colouring is \textit{complete} if it verifies that no two adjacent vertices are
\begin{itemize}
	\item assigned colours in $\{0^-,0^+\}$,
	\item joined by a positive edge and assigned colour $i^+$ for $i>0$,
	\item joined by a negative edge and assigned colours $i^-$ and $i^+$ for $i>0$,
\end{itemize}
while there exist edges realising any other combination of an edge sign and two colours.
In particular, note that an inferred $k$-colouring yields a complete $k$-colouring (in the usual sense), after switching vertices and assigning colours as described in the previous paragraph.
One could expect to have colours of the form $(-i)^-$ and $(-i)^+$ in the definition above;
such colours are actually not required, as explained below.

Let us state a few remarks about these notions.
First, we note that the notions of p-edges and n-edges of type $(i,j)$ apply to inferred colourings,
and, particularly, that Observation~\ref{observation:types-edges} adapts naturally to inferred complete colourings.
Next, we also note that, for $k$ odd, it is important to have both colours $0^-$ and $0^+$ to take into account that, in a complete $k$-colouring,
assigning $\pm 0$ to a vertex might require that vertex to be switched beforehand.
Thus, while colours $-0$ and $+0$ are equivalent for complete colourings,
for inferred complete colourings there is real distinction between $0^-$ and $0^+$.
We note also that we have not included to the sets $M_k'$ colours of the form $(-i)^-$ and $(-i)^+$,
which would correspond, for a complete $k$-colouring, to assigning colour $-i$ to a vertex after possibly switching it.
This is because these configurations are covered by other colours of $M_k'$:

\begin{observation}\label{observation:not-minus-in-signed-colours}
Assume $(G,\sigma)$ is a signed graph, and $\phi$ is a complete $k$-colouring of a signed graph $(G,\sigma')$ equivalent to $(G,\sigma)$.
Let $v$ be a vertex of $G$. Then:
\begin{itemize}
	\item if $v$ was switched (as going from $(G,\sigma)$ to $(G,\sigma')$) and $\phi(v)=-i \neq 0$,
	then, when switching $v$ (in $(G,\sigma')$) and changing the colour of $v$ to $+i$ (and keeping the same colours by $\phi$ for all other vertices),
	we get a complete $k$-colouring of the resulting signed graph.
	
	\item if $v$ was not switched and $\phi(v)=-i \neq 0$,
	then, when switching $v$ (in $(G,\sigma')$) and changing the colour of $v$ to $+i$,
	we get a complete $k$-colouring of the resulting signed graph.
\end{itemize}
\end{observation}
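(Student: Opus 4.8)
The plan is to reduce both items to a single completeness-preservation statement: \emph{if $\phi$ is a complete $k$-colouring of a signed graph $H$ and $v$ is a vertex with $\phi(v)=-i$ for some $i \neq 0$, then switching $v$ in $H$ and reassigning it colour $+i$ yields a complete $k$-colouring of the resulting signed graph.} Both items have exactly this conclusion (switch $v$ in $(G,\sigma')$, relabel it $+i$); they differ only in the hypothesis recording whether $v$ had already been switched in passing from $(G,\sigma)$ to $(G,\sigma')$. That information is irrelevant to completeness itself and serves, in the surrounding discussion, only to interpret the outcome as an inferred colour $i^+$ (first item) or $i^-$ (second item). Hence a single argument settles both.

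First I would note that completeness is governed entirely by the reduced signed graph, since $\phi$ is complete precisely when $R(H,\phi)=K_k^*$. So it suffices to show the reduction is unchanged by the operation under consideration. Write $H=(G,\sigma')$, let $(G,\sigma'')$ be obtained from $H$ by switching $v$, and let $\phi'$ agree with $\phi$ except that $\phi'(v)=+i$. The claim becomes $R(G,\sigma'',\phi')=R(G,\sigma',\phi)$.

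To prove this I would track the three steps of the reduction, concentrating on step~1, which switches every vertex carrying a negative colour (negating that colour). In $(G,\sigma',\phi)$ the vertex $v$ has the negative colour $-i$, so step~1 switches $v$ and turns its colour into $+i$. In $(G,\sigma'',\phi')$ the vertex $v$ already carries the positive colour $+i$, so step~1 leaves $v$ untouched---but $v$ was switched once already when forming $\sigma''$. The two switches of $v$ thus cancel: sign-chasing the edges at $v$ shows that, after step~1, every edge incident to $v$ carries the same sign in both cases, while every edge not incident to $v$ is untouched in both. All vertices other than $v$ keep identical colours and so undergo identical switching decisions. Consequently the signed graphs obtained after step~1 coincide; the colour classes coincide as well (in both, $v$ lands in the class of colour $+i$, and every other class is the set of vertices coloured $\pm j$ under $\phi$, unchanged off $v$); and steps~2 and~3 (identifying colour classes, then keeping one edge of each sign) act identically. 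Hence $R(G,\sigma'',\phi')=R(G,\sigma',\phi)=K_k^*$, so $\phi'$ is complete.

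The only delicate point is the sign-chasing in step~1 confirming that switching $v$ in the graph exactly compensates for not switching it during reduction, and this is precisely where the hypothesis $i\neq 0$ is essential. For $i=0$ the colour $\pm 0$ is neither positive nor negative, so step~1 does not switch a $\pm 0$-vertex; switching such a vertex then genuinely alters the reduced graph (as Figure~\ref{figure:preservation} shows), and the cancellation above fails. This is exactly why $0^-$ and $0^+$ must remain distinct in $M_k'$ while $(-i)^-$ and $(-i)^+$ need not appear there at all.
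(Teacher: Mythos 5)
Your proof is correct and takes essentially the same route as the paper: the paper's one-line proof observes that every edge incident to $v$ keeps its p-edge/n-edge type under the modification, and your switch-cancellation argument (the switch of $v$ in the graph exactly compensates for step~1 of the reduction no longer switching $v$, so $R(G,\sigma'',\phi')=R(G,\sigma',\phi)$) is precisely the reason that type-invariance holds, just spelled out in full. Your unification of the two bullets into one statement and the remark on why $i\neq 0$ is needed are both accurate and consistent with the paper's surrounding discussion.
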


\begin{proof}
This follows from the fact that every edge incident to $v$ impacted by the said modifications
remains of the same type, i.e., remains a p-edge or n-edge of a given type.
\end{proof}

The next result stands as an analogue to Observation~\ref{observation:switch-negated-colours} for complete inferred colourings, and will be particularly useful when dealing with such colourings.

\begin{lemma}\label{lemma:inferred-force-colours}
Let $(G,\sigma)$ be a signed graph, $\phi$ be a complete $k$-colouring of $(G,\sigma)$, and $\gamma$ be the (complete) $k$-colouring of $(G,\sigma)$ inferred from $\phi$. By swapping any two ``opposite'' colours $i^+$ and $i^-$, \textit{i.e.}, changing to $i^-$ the colour by $\gamma$ of every vertex assigned colour $i^+$, and \textit{vice versa}, we get an inferred complete $k$-colouring of $(G,\sigma)$.
\end{lemma}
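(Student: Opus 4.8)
The plan is to verify the completeness conditions for the swapped colouring directly, by tracking how the swap affects the type of each edge through the reduction process. Write $\gamma'$ for the $k$-colouring of $(G,\sigma)$ obtained from $\gamma$ by exchanging the colours $i^+$ and $i^-$; since both lie in $M_k'$, the map $\gamma'$ is again an inferred $k$-colouring, and it remains to show it is complete (this being, as noted, the inferred-colouring analogue of Observation~\ref{observation:switch-negated-colours}). The key point is that swapping $i^+$ and $i^-$ changes neither the underlying (unsigned) colour class of any vertex nor the switching status of any vertex whose colour is not $i^\pm$: it merely flips the switching status of the vertices coloured $i^\pm$, while keeping them in that same class.

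First I would record the effect of this flip on a single edge. For an edge $uv$ with inferred colours $a^{\epsilon_u}$ and $b^{\epsilon_v}$ and sign $s$, the reduction sends $uv$ to an edge between the classes $a$ and $b$ whose sign is $s$ multiplied by the switching statuses $\epsilon_u$ and $\epsilon_v$ (reading $\pm$ as $\pm 1$), since switching a vertex negates its incident edges; Observation~\ref{observation:types-edges} records exactly which of these configurations realise a given p-edge or n-edge. Passing from $\gamma$ to $\gamma'$ flips $\epsilon_u$ precisely when $u$ is coloured $i^\pm$. Hence there are three cases: if neither endpoint is in class $i$, the edge keeps its class pair and its reduced sign, so its type is unchanged; if both endpoints are in class $i$, both switching statuses flip and the two sign changes cancel, so again the type is unchanged; and if exactly one endpoint is in class $i$, the class pair $(i,b)$ with $b\neq i$ is preserved but the reduced sign is flipped, so a p-edge of type $(i,b)$ becomes an n-edge of type $(i,b)$, and \textit{vice versa}.

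With this in hand, completeness of $\gamma'$ follows by checking each required type against the completeness of $\gamma$. Every type not involving the class $i$ is untouched, so all its required p-edges and n-edges survive, the required n-edge of type $(a,a)$ persists for each $a>0$ with $a\neq i$, and no forbidden p-edge of type $(a,a)$ or edge of type $(0,0)$ is created. For a pair $(i,b)$ with $b\neq i$, completeness of $\gamma$ provides both a p-edge and an n-edge of that type; after the swap these two edges exchange roles, so both a p-edge and an n-edge of type $(i,b)$ still exist. Finally, the type $(i,i)$ (relevant only when $i>0$) falls under the both-endpoints case, so its n-edge persists and no positive loop at $i$ is created. This exhausts all types, including $i=0$, where the same bookkeeping applies to the class of colour $0$ and the distinction between $0^+$ and $0^-$ is precisely the switching status that gets flipped; thus $\gamma'$ is complete.

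I would expect the only delicate point to be the between-class case: one must notice that flipping the reduced sign of every $(i,b)$-edge does not destroy completeness but merely swaps the roles of its p-edges and n-edges, so that at least one of each still remains. The two ``no-change'' cases and the absence of newly forbidden edges are then routine sign bookkeeping.
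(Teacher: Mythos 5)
Your proof is correct and takes essentially the same route as the paper's: the paper's entire argument is the single observation that swapping $i^+$ and $i^-$ turns each n-edge of type $(i,j)$ into a p-edge of the same type and \emph{vice versa}, except for edges of type $(i,i)$, which keep their type. Your three-case sign bookkeeping (no endpoint, exactly one endpoint, or both endpoints in class $i$) is just a more explicit verification of that same observation, with the both-endpoints cancellation corresponding precisely to the paper's $i=j$ exception.
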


\begin{proof}
This follows from the fact that an n-edge of type $(i,j)$ by $\gamma$ becomes, after swapping $i^+$ and $i^-$, a p-edge of type $(i,j)$, and \textit{vice versa}, except in the case where $i=j$, as n-edges and p-edges of type $(i,i)$ remain of the same type after swapping colours $i^+$ and $i^-$.
\end{proof}

To avoid ambiguities, throughout this work we will stick to the notion of complete $k$-colourings we have originally introduced, 
i.e., without using signed colours, as much as we can.
We will, however, allow ourselves to consider signed colours instead in restricted contexts,
in which their use permits to simplify proofs and arguments in a significant way.
Particularly, signed colours will be useful when dealing with a fixed signature of a graph, in which context they can be seen as the colours of a classic colouring, with the exception that colour $0^-$ corresponds to vertices with colour~$\pm 0$ that must be switched. 
In any case, the use of signed colours will be mentioned explicitly, every time they are employed.

\medskip
\noindent {\bf Remark.}
Still to avoid any possible confusion, let us emphasise that, throughout this work,
colours of the form $+i$ or $-i$ are assigned by $k$-colourings,
while colours of the form $i^-$ and $i^+$ are assigned by inferred $k$-colourings.
Colours of the form $i$ refer mainly to the vertices of $K_k^*$ and are mostly employed to deal with edge types.

\subsection{First bounds on the achromatic number of signed graphs}\label{subsection:first-results-signed}

We here initiate the study of the achromatic number of signed graphs (with respect to our conventions) by providing first upper bounds on $\psi(G,\sigma)$ for every signed graph $(G,\sigma)$. 

We start off with the following easy observation:

\begin{observation}\label{observation:large-matching}
Let $k \geq 2$. If a signed graph $(G, \sigma)$ admits complete $k$-colourings, then $G$ has a matching of size at least $\lfloor k/2 \rfloor$.
Consequently, if the maximum matching of $G$ has size $\alpha$, then $\psi(G,\sigma) \leq 2\alpha+1$.
\end{observation}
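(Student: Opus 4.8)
The plan is to exploit the negative loops present in $K_k^*$, which force a large collection of pairwise vertex-disjoint edges in $G$. First I would recall, from the description of $K_k^*$ given above, that $V(K_k^*)$ contains exactly $\lfloor k/2 \rfloor$ strictly positive vertices: when $k=2n$ these are $1,\dots,n$ (giving $n=k/2$), and when $k=2n+1$ they are again $1,\dots,n$ (giving $n=\lfloor k/2\rfloor$), the sole non-positive vertex being $0$ in the odd case. Since $\phi$ is complete, the reduced signed graph $R(G,\sigma,\phi)$ equals $K_k^*$, and in particular it carries a negative loop at each such strictly positive vertex $i$.

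Next I would pull each of these loops back to a genuine edge of $(G,\sigma)$. Applying Observation~\ref{observation:types-edges} (its third item, with $i=j\neq 0$ and $s=-$) to the negative loop at a strictly positive vertex $i$, I obtain an edge $e_i$ of $(G,\sigma)$ whose type is one of $(i,-i,+)$, $(i,i,-)$, or $(-i,-i,-)$. The point common to all three cases is that both endpoints of $e_i$ are assigned by $\phi$ a colour from $\{+i,-i\}$.

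The matching then follows from a disjointness argument: for two distinct strictly positive colours $i\neq j$, the colour sets $\{+i,-i\}$ and $\{+j,-j\}$ are disjoint, so $e_i$ and $e_j$ cannot share an endpoint. Hence $\{e_i : i \text{ strictly positive in } V(K_k^*)\}$ is a matching of $G$ of size $\lfloor k/2 \rfloor$, which proves the first claim. I expect this disjointness step to be the conceptual crux: the whole argument hinges on recognising that it is the negative loops of $K_k^*$, rather than the many cross-edges between distinct colour classes, that supply a matching.

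For the consequence, I would observe that the underlying graph $G$, and hence its maximum matching size $\alpha$, is common to every signed graph equivalent to $(G,\sigma)$, since switching never alters $G$. Thus any equivalent $(G,\sigma')$ admitting a complete $k$-colouring forces $\lfloor k/2 \rfloor \leq \alpha$, that is $k \leq 2\alpha+1$ (checking the even and odd cases of $k$ separately); taking the maximum over all such $k$ yields $\psi(G,\sigma) \leq 2\alpha+1$.
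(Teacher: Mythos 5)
Your proof is correct and follows essentially the same route as the paper: the paper likewise extracts, for each $i \in \{1,\dots,\lfloor k/2 \rfloor\}$, an n-edge of type $(i,i)$ (i.e., an edge with both ends coloured $\pm i$) from the negative loop at $i$ in $K_k^*$, and concludes disjointness from the distinctness of the colour pairs. Your explicit appeal to Observation~\ref{observation:types-edges} and your remark that switching preserves the underlying graph $G$ merely spell out steps the paper leaves implicit.
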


\begin{proof}
In a complete $k$-colouring of $(G,\sigma)$, there must be an n-edge of type $(i,i)$ for every $i \in \{1,\dots,\lfloor k/2 \rfloor\}$.
Thus, for every $i$, there must be an edge of $(G,\sigma)$ with ends assigned colours $\pm i$, and $G$ must thus have a set of $\lfloor k/2 \rfloor$ disjoint edges.
\end{proof}

Partly due to Observation~\ref{observation:large-matching}, we note now that the achromatic number of a signed graph cannot be too large.

\begin{proposition} \label{proposition:upper-bound-n}
For every signed graph $(G,\sigma)$, we have $\psi(G, \sigma) \leq |V(G)|$.
\end{proposition}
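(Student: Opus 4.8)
The plan is to reduce everything to a short counting argument built on Observation~\ref{observation:large-matching}. By definition of $\psi(G,\sigma)$, there is a signed graph $(G,\sigma')$ equivalent to $(G,\sigma)$ admitting a complete $k$-colouring $\phi$ with $k=\psi(G,\sigma)$. Since equivalent signed graphs share the same underlying graph $G$, it suffices to bound $k$ by $|V(G)|$ using $\phi$ together with the structure of $G$. First I would invoke Observation~\ref{observation:large-matching} to obtain a matching $M$ in $G$ of size $\lfloor k/2\rfloor$, coming from the n-edges of type $(i,i)$ required for each $i\in\{1,\dots,\lfloor k/2\rfloor\}$; in particular, the $2\lfloor k/2\rfloor$ endpoints of $M$ are pairwise distinct, and each is assigned a colour in $\{+i,-i\}$ with $i\geq 1$, hence a nonzero colour.

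If $k$ is even, then $M$ already saturates $k=2\lfloor k/2\rfloor$ distinct vertices, giving $|V(G)|\geq k$ immediately, which is exactly the desired inequality.

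The only genuine obstacle is the odd case $k=2m+1$, where the matching alone saturates only $2m=k-1$ vertices and thus yields merely $k\leq |V(G)|+1$. To recover the missing vertex, I would exploit completeness: since $R(G,\sigma',\phi)=K_k^*$ and $k$ is odd, $V(K_k^*)$ contains the vertex $0$, which forces at least one vertex $v_0$ of $(G,\sigma')$ to be assigned colour $\pm 0$. As every endpoint of $M$ carries a nonzero colour $\pm i$ with $i\geq 1$, the vertex $v_0$ is distinct from all $2m$ endpoints of $M$. Therefore $|V(G)|\geq 2m+1=k$, and combining the two cases gives $\psi(G,\sigma)=k\leq |V(G)|$.

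The argument is short, and the single subtlety worth checking carefully is the existence of the extra $\pm 0$-coloured vertex: it rests on the fact that completeness requires \emph{every} vertex of $K_k^*$ to be realised through the reduction, so that for odd $k$ the colour class $\pm 0$ cannot be empty. Everything else is direct counting of disjoint vertices, so I do not anticipate any real difficulty beyond this bookkeeping.
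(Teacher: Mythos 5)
Your proof is correct and follows essentially the same route as the paper's: both rest on Observation~\ref{observation:large-matching} (each non-zero colour $\pm i$ forces two distinct vertices via the required n-edge of type $(i,i)$) together with the observation that, for odd $k$, completeness forces a vertex coloured $\pm 0$, which is disjoint from the matching endpoints. The only difference is presentational --- the paper phrases the count as a contradiction for $k > |V(G)|$, while you argue directly --- so there is nothing substantive to add.
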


\begin{proof}
Set $n=|V(G)|$.
The claim follows from the fact that every assignment of colours from $M_k$ to the vertices of any signed graph equivalent to $(G,\sigma)$ cannot form a complete colouring, given that $k>n$.
Indeed, assume this is wrong, and suppose $\phi$ is a complete $k$-colouring of $(G,\sigma')$ with $k>n$, where $(G,\sigma')$ is a signed graph equivalent to $(G,\sigma)$.
Recall that, as seen in Observation~\ref{observation:large-matching}, in every complete colouring, every non-zero colour $\pm i$ must be assigned to at least two vertices, 
since the corresponding reduced signed graph must have an n-edge of type~$(i,i)$. 
In the current context, clearly this condition cannot be guaranteed if $k$ is even, since $k>n$.
Thus, $\phi$ must be a complete $k$-colouring of $(G,\sigma')$ with $k$ odd.
So that $\phi$ is complete, note that colour~$\pm 0$ must be assigned to at least one vertex.
But then, again, we deduce that, because $k>n$, there is a non-zero colour assigned to at most one vertex of $(G,\sigma')$,
contradicting that $\phi$ is complete. 
\end{proof}

Another easy, yet better, upper bound on $\psi(G,\sigma)$ can be deduced through another naive approach, 
which is based on the fact that, in a complete $k$-colouring of $(G,\sigma)$, we must have edges meeting nearly all possible combinations of a sign and of a pair of colours assigned to their ends, which is possible only if $G$ has sufficiently many edges.

\begin{proposition} \label{proposition:upper-bound-size}
Let $k \geq 1$. If $(G,\sigma)$ is a signed graph, then:
\begin{itemize}
	\item if $|E(G)| < k^2 -1$, then $\psi(G, \si) < 2k-1$;
	\item if $|E(G)| < k^2 $,  then $\psi(G, \si) < 2k$. 
\end{itemize}
\end{proposition}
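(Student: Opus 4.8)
The plan is to bound $|E(G)|$ from below by the number of edges of $K_m^*$ whenever $(G,\sigma)$ (or an equivalent signed graph) admits a complete $m$-colouring, and then to specialise $m$ appropriately. The starting point is that, if $\phi$ is a complete $m$-colouring of $(G,\sigma)$, then by definition $R(G,\sigma,\phi)=K_m^*$. Since the reduction process (switching, identifying vertices, and deleting parallel edges) neither creates new edges nor increases their number, and since switching leaves the underlying graph $G$ untouched, every edge of $K_m^*$ is the image of at least one edge of $G$; hence $|E(G)| \geq |E(K_m^*)|$. As switching to an equivalent signed graph does not change $G$, the same lower bound holds whenever $\psi(G,\sigma) \geq m$.

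Next I would compute $|E(K_m^*)|$ explicitly, distinguishing the two parities of $m$. For $m=2n$, the vertex set is $\{1,\dots,n\}$, there are $2\binom{n}{2}$ edges joining distinct vertices (one positive and one negative per pair) together with one negative loop at each of the $n$ vertices, giving $|E(K_{2n}^*)| = n(n-1)+n = n^2$. For $m=2n+1$, the vertex set is $\{0,1,\dots,n\}$, contributing $2\binom{n+1}{2}=n(n+1)$ edges between distinct vertices and $n$ negative loops (one at each vertex other than $0$), giving $|E(K_{2n+1}^*)| = n(n+1)+n = n^2+2n$. Comparing consecutive values then shows that $m \mapsto |E(K_m^*)|$ is strictly increasing for $m\geq 1$: passing from $2n$ to $2n+1$ adds $2n$ edges, while passing from $2n+1$ to $2n+2$ adds a single edge, both increments being positive.

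With these formulas, the two bullets follow by contraposition. Substituting $m=2k$ gives $|E(K_{2k}^*)| = k^2$, and substituting $m=2k-1=2(k-1)+1$ gives $|E(K_{2k-1}^*)| = (k-1)^2 + 2(k-1) = k^2-1$. Now suppose $\psi(G,\sigma) \geq 2k$; then some equivalent signed graph admits a complete $m$-colouring for some $m \geq 2k$, so by the first paragraph and the monotonicity just established, $|E(G)| \geq |E(K_m^*)| \geq |E(K_{2k}^*)| = k^2$. The contrapositive is exactly the second bullet. Likewise, $\psi(G,\sigma) \geq 2k-1$ forces $|E(G)| \geq |E(K_{2k-1}^*)| = k^2-1$, which contraposes to the first bullet.

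The only genuinely delicate point is the inequality $|E(G)| \geq |E(K_m^*)|$: one must check that the identification of vertices (step~2 of the reduction) and the deletion of parallel edges (step~3) can only merge or delete edges, never create them, so that the induced map from $E(G)$ onto $E(K_m^*)$ is surjective and $|E(G)| \geq |E(K_m^*)|$ follows. Everything else is a routine edge count, and it is precisely the monotonicity of $|E(K_m^*)|$ that lets us reduce the general case $\psi(G,\sigma) \geq m$ to the single threshold value of $m$ appearing in each bullet.
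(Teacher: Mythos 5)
Your proof is correct and takes essentially the same approach as the paper's: both argue that a complete $m$-colouring forces $|E(G)| \geq |E(K_m^*)|$ because every edge of $K_m^*$ must be realised by at least one edge of $G$ through the reduction, and then conclude from the counts $|E(K_{2k}^*)| = k^2$ and $|E(K_{2k-1}^*)| = k^2 - 1$. Your explicit checks of the monotonicity of $m \mapsto |E(K_m^*)|$ and of the surjectivity of the induced edge map merely make precise details the paper leaves implicit.
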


\begin{proof}
This follows directly from the definitions.
Indeed, if $\phi$ is a complete $k$-colouring of $(G,\sigma')$, a signed graph equivalent to $(G,\sigma)$, 
then, for every edge of $K_k^*$ with sign $s$ joining two vertices $i$ and $j$ (being possibly the same),
we must have in the reduced signed graph $R(G,\sigma',\phi)$ an edge with sign $s$ joining two vertices being assigned colour $+i$ and $+j$, respectively, by $\phi$ 
(i.e., corresponding, in $(G,\sigma')$, to either a p-edge of type $(i,j)$ or an n-edge of type $(i,j))$.
From this, we deduce that, for a complete $k$-colouring of $(G,\sigma')$ to exist, the size of $G$ must be at least that of $K_k^*$.
The result now follows, since $K^*_{2k}$ and $K^*_{2k-1}$ have size $k^2$ and $k^2-1$, respectively.
\end{proof}

\subsection{Achromatic number of some signed complete graphs}\label{subsection:achromatic-cliques}

For an $n \geq 1$, we denote by $(K_n,+)$ the signed graph defined over $K_n$, the complete graph of order $n$, with all edges positive.
We denote by $(K_n,-)$ the signed graph defined over $K_n$ with all edges negative.
For a given matching $M$ of $K_n$, we also denote by $(K_n-M,-)$ the signed graph obtained from $(K_n,-)$ by removing the edges of $M$.

Complete graphs tend to be an interesting class of graphs to investigate when generalising a problem from graphs to signed graphs.
Indeed, complete graphs have a very symmetrical structure which is not preserved by most of their signatures.
Partly for this reason, signed complete graphs are promising candidates to look at, especially when looking at significant discrepancies between a graph problem and its signed counterparts.

Regarding complete colourings and the achromatic number, unsigned complete graphs behave in a quite expected way,
as, for every complete graph $K_n$ with order $n$, clearly $\psi(K_n)=\chi(K_n)=n$.
Through the next two results, we show that signing the edges of a complete graph can have different consequences,
from preserving the achromatic number to the highest  possible  value (that indicated in Proposition~\ref{proposition:upper-bound-n}),
to lowering it down to the smallest possible value (namely~2, since only edgeless signed graphs have achromatic number~$1$).
This also illustrates that, for a given graph, two signed graphs defined over it might behave quite differently in terms of the achromatic number.
 
\begin{theorem}\label{theorem:kn-positive}
For every $n \geq 1$, we have $\psi(K_n,+) = n$.
\end{theorem}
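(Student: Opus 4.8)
The plan is to prove the two inequalities $\psi(K_n,+) \leq n$ and $\psi(K_n,+) \geq n$ separately. The upper bound is immediate from Proposition~\ref{proposition:upper-bound-n}, which gives $\psi(G,\sigma) \leq |V(G)|$ for every signed graph; here $|V(K_n)| = n$. So the entire content lies in the lower bound, that is, in exhibiting a complete $n$-colouring of some signed graph equivalent to $(K_n,+)$. In fact I would exhibit one of $(K_n,+)$ itself, so that no preliminary switching of the whole graph is needed.

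The observation driving the lower bound is that, since every edge of $(K_n,+)$ is positive, properness along an edge $uv$ reduces to $\phi(u) \neq \phi(v)$; hence any injective colour assignment is automatically proper. Since $|M_n| = n = |V(K_n)|$, I would take $\phi$ to be a bijection from $V(K_n)$ onto $M_n$, so that every colour of $M_n$ is used exactly once. The remaining task is to check that this $\phi$ is complete, i.e.\ that the reduced graph $R(K_n,+,\phi)$ equals $K_n^*$.

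To verify completeness I would track the reduction directly. Let $S$ be the set of vertices receiving a negative colour $-i$ with $i>0$; these are exactly the vertices switched in the first reduction step, while the (at most one) vertex coloured $\pm 0$ is not switched. Switching $S$ turns every edge of the cut $(S, V(K_n)\setminus S)$ negative and leaves the edges within $S$ and within $V(K_n)\setminus S$ positive. For each $i>0$ there are then exactly two vertices coloured $+i$ (the original $+i$-vertex and the switched $-i$-vertex), joined by a cut edge, hence negative: this yields the required negative loop of type $(i,i)$ and, crucially, forbids any spurious positive loop. For distinct positive colours $i,j$, the four relevant vertices supply both a within-side (positive) edge and a cut (negative) edge, giving a p-edge and an n-edge of type $(i,j)$; and when $n$ is odd, the single $\pm 0$-vertex pairs with the $+i$- and $-i$-vertices to produce the positive and negative edges of type $(i,0)$, with no loop at $0$. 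After discarding parallels in the third reduction step, the reduced graph has precisely the edges of $K_n^*$, so $\phi$ is complete and $\psi(K_n,+) \geq n$.

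The argument is mostly bookkeeping, and the point demanding care is the role of colour $\pm 0$ in the odd case: one must confirm that leaving its vertex unswitched is exactly what realises the edges of type $(i,0)$ while creating no loop at vertex $0$ (as $K_n^*$ has none), so that the even and odd analyses stay uniform. A cleaner alternative to the hands-on reduction would be to invoke Observation~\ref{observation:types-edges} in reverse, reading off, for each required edge of $K_n^*$, a witnessing edge of $(K_n,+)$ of the appropriate $(i,j,s)$-type; since $K_n$ contains every pair, all such witnesses are present. Either route closes the lower bound and, together with Proposition~\ref{proposition:upper-bound-n}, yields the theorem.
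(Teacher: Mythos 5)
Your proposal is correct and matches the paper's proof in essence: the upper bound is taken from Proposition~\ref{proposition:upper-bound-n}, and the lower bound uses the same bijective assignment of the $n$ colours of $M_n$ to the $n$ vertices, with completeness verified by the same case analysis (your cut-based bookkeeping after switching the negatively coloured vertices is just a more explicit rendering of the paper's check that each required edge type $(i,j,s)$ of $K_n^*$ is witnessed by a positive edge with ends coloured $+i,-i$, or $+i,-j$, or $+i,+j$). Your extra care about spurious loops is harmless but already guaranteed by properness, as the paper notes when introducing edge types.
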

 
\begin{proof}
Set $K=(K_n,+)$.
Let us consider any $n$-colouring $\phi$ of $K$ obtained by assigning a distinct colour from $M_n$ to every vertex.
Note that $\phi$ is clearly proper.
We claim that $\phi$ is actually also complete.
To see this is true, consider the reduced signed graph $R$ obtained from $K$ and $\phi$.
Consider an edge of $K_n^*$ with sign $s$ joining two vertices $i$ and $j$.
We need to show that $R$ has an edge of type $(i,j,s)$.

\begin{itemize}
	\item If $i=j$ and $i \neq 0$ (and thus $s=-$), then $R$ has the claimed edge due to $K$ having a positive edge $uv$ with $\phi(u)=+i$ and $\phi(v)=-i$.
	\item If $i \neq j$ and $s=-$ (where, possibly, $i=0$), then $R$ has the claimed edge due to $K$ having a positive edge $uv$ with $\phi(u)=+i$ and $\phi(v)=-j$.
	\item If $i \neq j$ and $s=+$ (where, possibly, $i=0$), then $R$ has the claimed edge due to $K$ having a positive edge $uv$ with $\phi(u)=+i$ and $\phi(v)=+j$.
\end{itemize}

Thus, $\phi$ is complete and $\psi(K) \geq n$.
Proposition~\ref{proposition:upper-bound-n} now implies the exact statement of the theorem.
 \end{proof}
 
\begin{theorem}\label{theorem:kn-negative}
For every $n \geq 5$, we have $\psi(K_n,-) = 2$.
Furthermore, for every non-empty (not necessarily maximal) matching $M$ of $K_n$, we have $\psi(K_n-M,-)=3$.
\end{theorem}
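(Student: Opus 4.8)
The plan is to prove both equalities by combining an easy lower-bound colouring with an upper bound that rests on a single structural feature of the equivalence class of an all-negative signature. The key tool is that every signed graph equivalent to $(K_n,-)$ (resp.\ $(K_n-M,-)$) is obtained by switching some vertex set $S$ starting from the all-negative signature, so that an edge $uv$ becomes positive if and only if exactly one of $u,v$ lies in $S$; equivalently, by Lemma~\ref{lemma:switching-balance}, the negative cycles are exactly the odd ones. Since the achromatic number quantifies over the whole equivalence class, it suffices to reason about such ``switch-$S$'' signatures together with a proper complete colouring, rather than about individual signed graphs.

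For the first equality, the lower bound $\psi(K_n,-)\ge 2$ is immediate: colouring every vertex $+1$ is proper (each negative edge joins two vertices with colour $+1\ne -1$) and yields an n-edge of type $(1,1)$, so the reduced graph is $K_2^*$. For the upper bound I would suppose, for contradiction, that some equivalent signature admits a complete $3$-colouring with colours in $\{-1,\pm 0,+1\}$. As two $\pm 0$-coloured vertices cannot be adjacent in a proper colouring and $K_n$ is complete, at most one vertex $c$ receives colour $\pm 0$, and completeness forces exactly one (otherwise vertex $0$ of $K_3^*$ is absent). After switching the $-1$-coloured vertices, I may assume every vertex of $V\setminus\{c\}$ has colour $+1$; properness then forces every edge inside $V\setminus\{c\}$ to be negative, which for a switch-$S$ signature means $V\setminus\{c\}$ lies entirely on one side of $S$. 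Consequently all edges from $c$ carry the same sign, so the reduced graph cannot contain both a p-edge and an n-edge of type $(0,1)$, contradicting completeness. Hence $\psi(K_n,-)<3$, giving equality with $2$.

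For the second equality I would treat the two directions separately. For $\psi(K_n-M,-)\ge 3$, the point is that removing a matching edge lets two non-adjacent vertices both take colour $\pm 0$. Concretely, I would fix an edge $\{c_1,c_2\}\in M$, switch the single vertex $c_1$ (so that exactly its incident edges become positive), and colour $c_1,c_2$ with $\pm 0$ and every other vertex with $+1$. This is proper, and in the reduced graph the edges inside $W=V\setminus\{c_1,c_2\}$ give the negative loop at $1$, the positive $c_1$-edges give the p-edge of type $(0,1)$, and the negative $c_2$-edges give the n-edge of type $(0,1)$; since $c_1c_2$ is absent there is no loop at $0$, so the reduced graph is exactly $K_3^*$. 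Here $n\ge 5$ guarantees $|W|\ge 3$, so that $W$ still spans an edge after deleting the remaining edges of $M$, which is precisely what is needed for the negative loop at $1$ and the reason the bound fails for small $n$. For the upper bound $\psi(K_n-M,-)\le 3$ I would rule out a complete $4$-colouring: after switching so that the two classes $X_1,X_2$ (colours $\pm 1$ and $\pm 2$) become monochromatic $+1$ and $+2$, properness forces all edges inside each $X_i$ to be negative, hence each $X_i$ lies on one side of the switching set (using that $K_{|X_i|}$ minus a matching is connected). Then every edge between $X_1$ and $X_2$ has the same sign, so a p-edge and an n-edge of type $(1,2)$ cannot coexist, a contradiction.

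The main obstacle, and the crux shared by both upper bounds, is the confinement step: showing that a colour class forced to be internally all-negative must sit on a single side of the switching set. This is where completeness (which demands both a positive and a negative edge of a given type between two classes) collides with the rigidity of the all-negative equivalence class, and it is what one must set up with care, handling the degenerate cases $|X_i|=2$ by noting that a missing internal edge would already destroy the corresponding negative loop, so that completeness itself supplies the connectivity needed. The remaining bookkeeping, namely checking properness and verifying that each required edge of $K_k^*$ is realised while no forbidden one is, is routine via Observation~\ref{observation:types-edges}.
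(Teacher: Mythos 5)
Your lower-bound colourings and your rigidity mechanism are essentially the paper's own proof in different clothing: where you argue that an internally all-negative, connected colour class must lie on one side of the switching set, the paper normalises via inferred colourings (Lemma~\ref{lemma:inferred-force-colours}) and argues that every further vertex coloured $\pm i$ must then carry $i^+$, so that the required p-edge of type $(1,2)$ (resp.\ $(0,1)$) cannot exist; these are the same argument, and your version of it is sound, including the degenerate case $|X_i|=2$, which you correctly rescue via the n-edge of type $(i,i)$ demanded by completeness. Your $\psi(K_n-M,-)\ge 3$ construction (colour $\pm 0$ the two ends of a deleted matching edge after switching one of them) is also the paper's, and your accounting of where $n\ge 5$ and the matching hypothesis enter is accurate.

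The genuine gap is in the quantification over $k$. You rule out a complete $3$-colouring of $(K_n,-)$ and conclude $\psi(K_n,-)<3$, and you rule out a complete $4$-colouring of $(K_n-M,-)$ and conclude $\psi(K_n-M,-)\le 3$. For this signed achromatic number that inference is invalid: interpolation fails in this setting (the paper exhibits, in Theorem~\ref{theorem:no-interpolation-signed}, signed graphs with $\chi(G,\sigma)<k<\psi(G,\sigma)$ admitting no complete $k$-colouring), so nonexistence of complete colourings with exactly $3$ (resp.\ exactly $4$) colours does not cap $\psi$. In particular, your $k=3$ argument for $(K_n,-)$ hinges on the unique $\pm 0$ vertex and says nothing about even $k\ge 4$, where the colour $\pm 0$ does not exist. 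The repair lies inside your own proposal: the confinement argument for the classes $X_1,X_2$ of colours $\pm 1,\pm 2$ uses nothing about $k=4$ beyond $k\ge 4$, and it applies verbatim to $K_n$ itself (where the class-induced subgraphs are complete, hence connected). You must therefore state and run it for every $k\ge 4$ and for both graphs, which is exactly how the paper organises its proof: it first shows $\psi<4$ simultaneously for $(K_n,-)$ and $(K_n-M,-)$ by assuming a complete $k$-colouring with $k\ge 4$, and only then disposes of $k=3$ for $(K_n,-)$. As written, your proof of the first equality is incomplete, and your proof of the second leaves $k\ge 5$ formally untreated.
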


\begin{proof}
Let $K$ and $K'$ denote the signed graphs $(K_n,-)$ and $(K_n-M,-)$, respectively,
where $n$ and $M$ fulfill the conditions in the statement.
We start by showing that $\psi(K) < 4$ and $\psi(K') < 4$.
Assume this is wrong, and, first (the case of $K'$ will be discussed later), 
that %$K$ admits a complete $k$-colouring $\phi$ with $k \geq 4$.
some vertices of $K$ can be switched to attain an equivalent signed graph admitting a complete $k$-colouring $\phi$ with $k \geq 4$.
We consider signed colours instead, and consider $\gamma$ the complete $k$-colouring of $K$ inferred from $\phi$.
Recall that this $\gamma$ holds in $K$ directly, i.e., we can assume all edges of $K$ are negative,
while $\gamma$ assigns colours from $M_k'$.
Since $k \geq 4$, note that $\gamma$ must induce both an n-edge $u_1v_1$ of type $(1,1)$ and an n-edge $u_2v_2$ of type $(2,2)$.
Because all edges of $K$ are negative, note that either $\gamma(u_1)=\gamma(v_1)=1^-$ or $\gamma(u_1)=\gamma(v_1)=1^+$,
and, similarly, either $\gamma(u_2)=\gamma(v_2)=2^-$ or $\gamma(u_2)=\gamma(v_2)=2^+$.
By Lemma~\ref{lemma:inferred-force-colours},
we may assume that $\gamma(u_1)=\gamma(v_1)=1^+$ and $\gamma(u_2)=\gamma(v_2)=2^+$.
Because all edges of $K$ joining a vertex in $\{u_1,v_1\}$ and a vertex in $\{u_2,v_2\}$ are negative,
note that all these edges are n-edges of type $(1,2)$.
Since $\gamma$ is complete, there must also be p-edges of type $(1,2)$.
However, note that any other vertex $w$ with $\gamma(w) \in \{1^-,1^+,2^-,2^+\}$ must actually be assigned a colour in $\{1^+,2^+\}$,
as otherwise, because all edges of $K$ are negative, we would deduce that there is a p-edge $wu_i$ or $wv_i$ of type $(1,1)$ or $(2,2)$.
So we are in a situation where all edges of the signed graph are negative, and colours $1^-$ and $2^-$ cannot be assigned;
note that $\gamma$ cannot have a p-edge of type $(1,2)$ in those conditions, a contradiction.
Furthermore, note that these arguments hold for $K'$ as well, as, because we removed the edges of a matching, 
vertex $w$ above remains adjacent to at least one of $u_1$ and $v_1$, and to at least one of $u_2$ and $v_2$.
Thus, also $K'$ cannot %admit complete $k$-colourings with $k \geq 4$.
have its vertices switched to reach an equivalent signed graph admitting complete $k$-colourings with $k \geq 4$.

We now determine the exact values of $\psi(K)$ and $\psi(K')$.

\begin{itemize}
	\item It can be noted first that $\psi(K) < 3$, essentially through the same arguments as earlier.
	Towards a contradiction, assume indeed that $K$ has an equivalent signed graph that
	 admits a complete $3$-colouring $\phi$.
	We can again consider the complete $3$-colouring $\gamma$ of $K$ inferred from $\phi$.
	Thus, no vertex of $K$ was switched.
	As earlier, we may assume there is an n-edge $u_1v_1$ of type $(1,1)$, where $u_1$ and $v_1$ are assigned colour $1^+$ by $\gamma$,
	and, from this, we deduce that all other vertices assigned a colour in $\{1^-,1^+\}$ are assigned colour~$1^+$.
	Because there should be neither p-edges nor n-edges of type $(0,0)$, note that a colour in $\{0^-,0^+\}$ can be assigned to at most one vertex of $K$.
	Since all edges of $K$ are negative, we then deduce that there are p-edges of type $(0,1)$ but no n-edges of type $(0,1)$, or \textit{vice versa}.
	Then $\gamma$ and $\phi$ cannot be complete, a contradiction.
	
	Recall that $K_2^*$ is the signed graph with only one vertex, $1$, incident to the only edge, being a negative loop.
	It is now easy to see that by assigning colour~$+1$ to all vertices of $K$, we get a complete $2$-colouring.
	Thus, $\psi(K)=2$.
	
	\item We claim that $\psi(K')=3$.
	To see this is true, consider the following arguments.
	Let $u$ and $v$ be two non-adjacent vertices of $K'$ (they exist since $M$ is non-empty).
	Because $n \geq 5$ and $M$ is a matching, note that $u$ and $v$ share a common neighbour $w$.
	Let us now denote by $K''$ the signed graph equivalent to $K'$ obtained by switching $v$,
	and consider the $3$-colouring $\phi$ of $K''$ obtained by assigning colour~$\pm 0$ to $u$ and  $v$,
	and colour~$+1$ to all other vertices.
	Since colour $\pm 0$ is assigned to $u$ and $v$ only, note that this yields neither p-edges nor n-edges of type $(0,0)$.
	Furthermore, the edges $uw$ and $vw$ are an n-edge of type $(0,1)$ and a p-edge of type $(0,1)$, respectively.
	Finally, every other edge, i.e., not incident to $u$ and $v$ (there is at least one such, since $n \geq 5$ and $M$ is a matching), is negative and has both ends assigned colour $+1$ by $\phi$, 
	and is thus an n-edge of type $(1,1)$.
	Thus, $\phi$ is complete.\qedhere
\end{itemize}
\end{proof}

\subsection{Achromatic number of signed paths and cycles}\label{subsection:achromatic-paths-cycles}

As reported in~\cite{HM97}, the achromatic number of paths and cycles was investigated by a few groups of authors,
resulting in the following two results (in which $P_n$ and $C_n$ refer to the path and cycle of order $n$, respectively):

\begin{theorem}[Hell, Miller~\cite{HM76}]
For every $n \geq 1$, we have $$\psi(P_n)=\max\left\{k : \left(\left\lfloor \frac{k}{2} \right\rfloor +1\right)(k-2)+1 \leq n-1\right\}.$$
\end{theorem}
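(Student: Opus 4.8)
The plan is to reformulate complete $k$-colourings of $P_n$ as edge-covering walks in the complete graph $K_k$, and then to read off the answer from the theory of Eulerian trails. Writing $P_n$ as $v_1v_2\cdots v_n$, a $k$-colouring is just a sequence $c_1,\dots,c_n$ of colours in $\{1,\dots,k\}$; it is proper exactly when $c_i\neq c_{i+1}$ for all $i$, i.e.\ when $c_1,\dots,c_n$ is a walk of length $n-1$ in $K_k$. The colouring is complete precisely when this walk traverses every edge of $K_k$ at least once (covering every pair of colours also forces every colour to be used, since $K_k$ is connected for $k\ge 2$). Hence a complete $k$-colouring of $P_n$ exists if and only if $K_k$ admits an edge-covering walk of length exactly $n-1$.

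First I would compute the minimum length $\ell_{\min}(K_k)$ of an edge-covering walk in $K_k$, which is a route-inspection (open Chinese postman) quantity. A walk covering all edges yields, via its edge multiplicities, a connected multigraph on the colours that contains $K_k$ and admits an Eulerian trail; its length is therefore $\binom{k}{2}$ plus the number of extra edge-traversals needed to leave at most two vertices of odd degree. Every vertex of $K_k$ has degree $k-1$ and every pairwise distance is $1$. So when $k$ is odd the graph is already Eulerian and $\ell_{\min}(K_k)=\binom{k}{2}$, while when $k$ is even all $k$ vertices are odd, and the cheapest fix duplicates one edge for each of $\lfloor k/2\rfloor-1$ pairs among them (leaving two odd endpoints), giving $\ell_{\min}(K_k)=\binom{k}{2}+\lfloor k/2\rfloor-1$. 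A short calculation shows that in both parities $\ell_{\min}(K_k)=(\lfloor k/2\rfloor+1)(k-2)+1$.

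Next I would upgrade this minimum into the exact threshold. For necessity, any complete $k$-colouring gives an edge-covering walk, so $n-1\ge\ell_{\min}(K_k)$, i.e.\ $(\lfloor k/2\rfloor+1)(k-2)+1\le n-1$. For sufficiency, an edge-covering walk of minimum length can be lengthened one edge at a time: from its current endpoint, of colour $c$, append any colour distinct from $c$ (possible since $k\ge 2$), which preserves both properness and coverage. Thus edge-covering walks of every length $\ell\ge\ell_{\min}(K_k)$ exist, so a complete $k$-colouring of $P_n$ exists for every $n$ with $n-1\ge\ell_{\min}(K_k)$. Combining the two directions, a complete $k$-colouring of $P_n$ exists if and only if $(\lfloor k/2\rfloor+1)(k-2)+1\le n-1$, and taking the largest such $k$ yields the claimed formula for $\psi(P_n)$.

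The main obstacle I anticipate is the even-$k$ case of the route-inspection count: I must argue both that $\lfloor k/2\rfloor-1$ duplicated edge-traversals suffice (exhibiting the pairing explicitly and checking the resulting multigraph stays connected with exactly two odd-degree vertices) and that no fewer suffice (each added traversal flips the parity of only two vertices, so reducing $k$ odd-degree vertices down to two requires at least $\lfloor k/2\rfloor-1$ of them). The remaining work, namely the arithmetic verification that $\binom{k}{2}+\lfloor k/2\rfloor-1$ and $\binom{k}{2}$ both equal $(\lfloor k/2\rfloor+1)(k-2)+1$ in their respective parities, is routine and I would not dwell on it.
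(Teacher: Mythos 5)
Your proof is correct, including the two delicate points you flag: the parity count showing that $\lfloor k/2\rfloor-1$ duplicated traversals are necessary and sufficient for even $k$ (a matching among $k-2$ of the odd vertices keeps the multigraph connected with exactly two odd vertices, and each extra traversal reduces the number of odd-degree vertices by at most two), and the arithmetic identity (for $k=2m+1$ both sides equal $m(2m+1)$; for $k=2m$ both equal $2m^2-1$). Note, however, that the paper does not prove this statement at all -- it is quoted from Hell and Miller~\cite{HM76} as known background -- so the relevant comparison is with the paper's proof of the signed analogue (Theorem~\ref{theorem:pn}), which uses exactly your viewpoint: a proper colouring of the path is a walk in the colour graph, and completeness means the walk covers every edge. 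The signed setting there is strictly easier on two counts: $K_k^*$ has all degrees even regardless of the parity of $k$, so a Eulerian trail exists outright and no route-inspection correction is needed, and consequently the matching upper bound follows from pure edge counting (Proposition~\ref{proposition:upper-bound-size}). In the unsigned case that edge-counting bound would be too weak for even $k$, since $\ell_{\min}(K_k)=\binom{k}{2}+\lfloor k/2\rfloor-1>\binom{k}{2}$; your Chinese-postman lower bound is precisely the extra ingredient that closes this gap, and your one-edge-at-a-time lengthening argument (valid for $k\ge 2$; the case $k=1$ only arises for $n=1$, where the formula checks directly) supplies the interpolation between the minimum length and every larger $n-1$. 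So your route is the natural ``unsigned'' completion of the paper's Eulerian argument, and it is sound as written.
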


\begin{theorem}[Hell, Miller~\cite{HM76}]
For every $n \geq 3$, we have $$\psi(C_n)=\max\left\{k : k \left\lfloor \frac{k}{2} \right\rfloor \leq n\right\} - s(n),$$
where $s(n) \in \{0,1\}$ is the number of positive integer solutions of the equation $n=2x^2+x+1$.
\end{theorem}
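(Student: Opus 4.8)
The plan is to recast the problem combinatorially and reduce it to an edge-counting question about multigraphs on the colour set. A proper colouring $\phi$ of $C_n = v_1 v_2 \cdots v_n v_1$ with colours $\{1,\dots,k\}$ reads, around the cycle, as a closed walk $\phi(v_1)\phi(v_2)\cdots\phi(v_n)$ in $K_k$; let $H_\phi$ be the multigraph on $\{1,\dots,k\}$ whose edges are the consecutive pairs $\{\phi(v_i),\phi(v_{i+1})\}$. Then $H_\phi$ is loopless (properness), connected, and has exactly $n$ edges; moreover every vertex has even degree, since colour $i$ is visited $s_i$ times and each visit contributes $2$ to its degree, so $H_\phi$ is Eulerian. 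Completeness of $\phi$ says precisely that every pair $\{i,j\}$ occurs, i.e. $H_\phi \supseteq K_k$. Conversely, any connected loopless Eulerian multigraph $H \supseteq K_k$ on $k$ vertices with $n$ edges yields, via an Eulerian circuit, a complete $k$-colouring of $C_n$. Hence $\psi(C_n)$ is the largest $k$ for which such an $H$ with $n$ edges exists, and I would prove the formula through this correspondence.

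First I would establish the upper bound $\psi(C_n) \leq \max\{k : k\lfloor k/2\rfloor \leq n\}$ directly from the colouring. Each colour class is independent, so the $s_i$ vertices of colour $i$ are incident to exactly $2s_i$ edges, each reaching another colour; completeness forces all $k-1$ other colours to appear among them, so $2s_i \geq k-1$, i.e. $s_i \geq \lceil (k-1)/2\rceil$. Summing over the $k$ classes gives $n = \sum_i s_i \geq k\lceil (k-1)/2\rceil = k\lfloor k/2\rfloor$, the last equality holding for both parities of $k$.

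Next I would identify the minimum number of edges $m_0(k)$ of an admissible $H$ and show it equals $k\lfloor k/2\rfloor$. For $k$ odd, $K_k$ is already Eulerian (all degrees $k-1$ are even), so $m_0(k)=\binom{k}{2}=k\lfloor k/2\rfloor$. For $k$ even, all degrees of $K_k$ are odd, and adding a perfect matching restores even degrees at a cost of $k/2$ edges, giving $m_0(k)=\binom{k}{2}+k/2=k^2/2=k\lfloor k/2\rfloor$, with a parity argument showing one cannot do better. I would then note that from any admissible $H$ one may add a digon or a triangle without spoiling the Eulerian, loopless, and covering properties, raising the edge count by $2$ or $3$; combining these, every value $m_0(k)+t$ with $t\in\{0,2,3,4,\dots\}$ is realisable. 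Thus for each admissible $k$, all $n\geq k\lfloor k/2\rfloor$ are attainable except possibly the single value $n=k\lfloor k/2\rfloor+1$.

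The crux, and the origin of the correction term $s(n)$, is exactly this value $n=m_0(k)+1$. Here I would argue by parity on degrees: since $H \supseteq K_k$ forces each degree to be at least $k-1$ and even, the total excess of $\sum_i \deg_H(i)=2(m_0(k)+1)$ over the minimum is precisely $2$, split into even nonnegative parts, so one vertex carries all of it. For $k$ even (and $k\geq 4$) this is realisable (lift one vertex to degree $k+2$ and attach a matching on the rest), but for $k$ odd it would demand a single added edge raising one vertex's degree by $2$ and no other's, impossible without a loop; hence $n=m_0(k)+1$ is infeasible with $k$ colours exactly when $k$ is odd. Writing $k=2x+1$ gives $m_0(k)+1=2x^2+x+1$, so the obstruction strikes precisely when $n$ has that form, forcing $\psi(C_n)=K-1$ there, while a short check shows $2x^2+x+1$ never equals the even-$k$ value $k^2/2+1$ and that $K-1$ is attainable, yielding the $-s(n)$ term. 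The main obstacle I anticipate is making this parity dichotomy fully rigorous while simultaneously controlling connectivity and the covering condition as degrees are adjusted, together with dispatching the small-order base cases by hand.
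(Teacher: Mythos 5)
Your proposal is correct, and its central device --- recasting complete $k$-colourings of $C_n$ as Eulerian circuits of connected, loopless, even multigraphs on the colour set containing $K_k$, with the $-s(n)$ term arising from the parity obstruction at $n=k\lfloor k/2\rfloor+1$ for odd $k$ --- is essentially the same mechanism this paper uses: it cites Hell--Miller for this statement without proof, but its proofs of the signed analogues (Theorems~\ref{theorem:pn} and~\ref{theorem:cn}) run on exactly this correspondence, with Eulerian trails/tours in $K_k^*$ and the balance of the signed cycle playing the role of your degree-parity dichotomy. The only points needing care are the ones you already flag (the unavailability of the triangle gadget when $k=2$, which is harmless since $K(n)\geq 3$ for $n\geq 3$ and the drop to $k=2$ occurs only at $n=4$, where two digons suffice), so no genuine gap remains.
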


These results are actually reminiscent of the corresponding ones for signed paths and cycles,
as proved in what follows below.

\begin{theorem}\label{theorem:pn}
Let $n \geq 1$, and $(P_n,\sigma)$ be a signed graph defined over $P_n$. We have: 
$$ \psi(P_n, \sigma) = k = \max \left( \left\{ 2x : x^2 \le n-1\right\} \cup \left\{ 2x+1 : \left(x+1\right)^2-1 \le n-1\right\}  \right).$$
\end{theorem}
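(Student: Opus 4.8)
The plan is to establish the two inequalities $\psi(P_n,\sigma)\le k$ and $\psi(P_n,\sigma)\ge k$ separately, after first collapsing the dependence on $\sigma$. Since $P_n$ is a tree it contains no cycle, so by Lemma~\ref{lemma:switching-balance} all signatures of $P_n$ are switching-equivalent; in particular every $(P_n,\sigma)$ lies in a single equivalence class, and $\psi(P_n,\sigma)$ does not depend on $\sigma$. As $\psi$ is defined over the whole class, it therefore suffices, for the lower bound, to exhibit \emph{one} signed path on $n$ vertices carrying a complete $k$-colouring, while the upper bound may be proved for any signature at once.

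For the upper bound I would apply Proposition~\ref{proposition:upper-bound-size} directly, using that $P_n$ has exactly $n-1$ edges and that, as recorded in that proof, $K_{2x}^*$ has $x^2$ edges and $K_{2x+1}^*$ has $(x+1)^2-1$ edges. Writing $t=\psi(P_n,\sigma)$ and applying Proposition~\ref{proposition:upper-bound-size} with $|E(P_n)|=n-1$: if $t=2x$ is even then $t$ feasible forces $x^2\le n-1$, and if $t=2x+1$ is odd it forces $(x+1)^2-1\le n-1$. In either case $t$ satisfies exactly the membership condition of its parity in the set defining $k$, so $t\le k$. Observe that these two conditions $x^2\le n-1$ and $(x+1)^2-1\le n-1$ in the statement are precisely the requirements $|E(P_n)|\ge|E(K_{2x}^*)|$ and $|E(P_n)|\ge|E(K_{2x+1}^*)|$, which is what makes the formula the natural guess.

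For the lower bound, the key observation is that $K_k^*$, viewed as a multigraph with loops, has all vertex-degrees even: for even $k=2x$ each vertex has degree $2(x-1)+2=2x$ (two edges to each of the $x-1$ other vertices, plus its negative loop), and for odd $k=2x+1$ the vertex $0$ has degree $2x$ while each non-zero vertex has degree $2x+2$. Being connected, $K_k^*$ thus admits an Eulerian circuit $v_0,v_1,\dots,v_m$ with $m=|E(K_k^*)|$. I would read this circuit as a colouring of $P_{m+1}$: assign the non-negative colour $+v_i$ to the $i$-th vertex, and give each edge joining consecutive vertices the sign of the edge of $K_k^*$ traversed at that step. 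Since no colour is negative, the reduction performs no switching; it merely identifies equal colours and deletes duplicate edges, so $R(P_{m+1},\sigma,\phi)$ is exactly the submultigraph of $K_k^*$ spanned by the traversed edges, namely all of $K_k^*$. (Traversing the negative loop at $a$ corresponds to two consecutive vertices both coloured $+a$ joined by a negative edge, a legitimate n-edge of type $(a,a)$; no positive loop is ever created, as $K_k^*$ has none.) Hence $\phi$ is complete, giving $\psi\ge k$ already on $P_{m+1}$.

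It remains to pass from $P_{m+1}$ to $P_n$, which is possible because $m=|E(K_k^*)|\le n-1$ by the definition of $k$. I would append $n-1-m$ further vertices that oscillate along one fixed edge of $K_k^*$ joining two distinct colours (or, when $k=2$, along the negative loop at colour~$1$), so that each added edge realises an already-present type and creates no forbidden configuration; the case $k=1$ forces $n=1$ and needs no padding. The step I expect to require the most care is the verification in the previous paragraph that the Eulerian circuit reduces to \emph{exactly} $K_k^*$ (in particular the correct treatment of loops and of the vertex $0$, and that \emph{completeness}, not merely properness, holds), together with the bookkeeping of the small cases $x\le 1$; the even-degree computation and the padding are then routine.
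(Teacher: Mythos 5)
Your proposal is correct and follows essentially the same route as the paper: the upper bound via Proposition~\ref{proposition:upper-bound-size} (comparing $|E(P_n)|=n-1$ with the sizes $x^2$ and $(x+1)^2-1$ of $K_{2x}^*$ and $K_{2x+1}^*$), and the lower bound by reading off an Eulerian circuit of $K_k^*$, which exists since all degrees are even. Your version is in fact slightly more careful than the paper's, which switches vertices on the fly rather than invoking the switching-equivalence of all signatures on a tree, and which glosses over the padding needed when $n-1>m_k$ as well as the degenerate cases $k\le 2$ that you treat explicitly.
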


\begin{proof}
By Proposition~\ref{proposition:upper-bound-size}, we have $\psi(P_n,\sigma) \leq k$.
We now prove that some vertices of $(P_n,\sigma)$ can be switched so that the resulting signed graph admits a complete $k$-colouring $\phi$.
We construct $\phi$ (and switch vertices of $(P_n,\sigma)$) through the following procedure below.
Because $K_k^*$ is connected and has all of its vertices being of even degree,
by Euler's Theorem there is a Eulerian trail $W=(x_1, x_2, \dots, x_1)$ in $K_k^*$.
By definition, every edge of $K_k^*$ appears exactly once in $W$.
The procedure now considers the vertices $v_1,\dots,v_n$ of $(P_n,\sigma)$ one after the other, from one end-vertex $v_1$ to the other one $v_n$.
For the first vertex $v_1$ of $(P_n,\sigma)$, we set $\phi(v_1)=+x_1$.
Considering now the $i$th vertex $v_i$ of $(P_n,\sigma)$ such that $v_{i-1}$ has been treated, we set $\phi(v_i)=+x_i$,
and, if necessary, we switch $v_i$ so that, with respect to $x_{i-1}x_i$, the sign of $v_{i-1}v_i$ matches that of $x_{i-1}x_i$ 
(i.e., $v_{i-1}v_i$ is positive if $x_{i-1}x_i$ is positive in $K_k^*$, or $v_{i-1}v_i$ is negative otherwise).
Once the procedure ends, $\phi$ is clearly a complete $k$-colouring,
as, for every edge $x_ix_j$ of $K_k^*$ with sign $s$, there is, in the resulting equivalent signed graph defined over $P_n$,
an edge with sign $s$ having its ends assigned colours $+x_i$ and $+x_j$.
\end{proof}

In the next result, the \textit{balance} of a signed graph, being either \textit{even} or \textit{odd}, 
refers to the parity of its number of negative edges. That is, a signed graph is of \textit{even balance} if its number of negative edges is even, while it is of \textit{odd balance} otherwise. 
For simplicity, in the next result, for a $k \geq 1$ we denote by $m_k$ the number of edges of $K_k^*$.

\begin{theorem}\label{theorem:cn}
Let $n \geq 3$, and $(C_n,\sigma)$ be a signed graph defined over $C_n$. Let also $k_0$ denote the largest integer $k$ such that $n \ge m_k$. Then:
\begin{itemize}
    \item if either 
    \begin{itemize}
        \item $n \ge m_{k_0} + 2$,
        \item $n = m_{k_0} + 1$ and $K_k^*$ and $(C_n,\si)$ have different balance, or
        \item $n = m_{k_0}$ and $K_k^*$ and $(C_n,\si)$ have the same balance,
    \end{itemize}
    then $\psi(C_n, \si) = k_0$;
    
    \item if either 
    \begin{itemize}
        \item $n = m_{k_0}+1$ and $K_k^*$ and $(C_n,\si)$ have the same balance, or
        \item $n = m_{k_0}$, $k_0$ is odd, and $K_k^*$ and $(C_n, \si)$ have different balance,
    \end{itemize}
    then $\psi(C_n,\si) = k_0 - 1$;
    
    \item otherwise, $\psi(C_n, \si) = k_0 -2$.
\end{itemize}
\end{theorem}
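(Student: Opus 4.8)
**The plan is to mirror the structure of the path result (Theorem~\ref{theorem:pn}), using an Eulerian-trail construction in $K_k^*$ as the main engine, but now paying careful attention to the closing edge of the cycle and to the \emph{balance} invariant preserved by switching (Lemma~\ref{lemma:switching-balance}).** The upper bound on $\psi(C_n,\si)$ in every case is essentially the edge-count bound of Proposition~\ref{proposition:upper-bound-size}: a complete $k$-colouring requires $|E(C_n)|=n \geq m_k$, so $\psi(C_n,\si)\leq k_0$ by definition of $k_0$. The substance of the theorem lies in deciding, for each of the three regimes, exactly how far below $k_0$ the true value sits, and this is governed by a parity obstruction that has no analogue in the path case because a cycle is a \emph{closed} walk.

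First I would establish the key new invariant. Since $K_k^*$ is connected with all vertices of even degree, it has an Eulerian \emph{circuit} (not merely a trail); I will check that $K_k^*$ itself, viewed as a signed graph, has a well-defined balance (its number of negative edges has a fixed parity), and that an Eulerian circuit of $K_k^*$ traverses each edge once, so its balance equals that of $K_k^*$. The crucial point is that when I wrap such a circuit $W=(x_1,x_2,\dots,x_{m_k},x_1)$ around $C_n$ — assigning $\phi(v_i)=+x_i$ and switching $v_i$ as in the path proof to match signs edge by edge — the \emph{last} edge $v_nv_1$ closing the cycle is forced: its sign in the constructed signed graph is determined by the signs already placed, and for $\phi$ to be a genuine complete colouring of a signed graph \emph{equivalent to} $(C_n,\si)$, the balance of the constructed cycle must equal the balance of $(C_n,\si)$. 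By Lemma~\ref{lemma:switching-balance}, switching cannot change balance, so a complete $k$-colouring on exactly $m_k$ edges exists only if $K_k^*$ and $(C_n,\si)$ share the same balance; this is precisely the constraint appearing in the $n=m_{k_0}$ subcases.

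Next I would handle the three regimes by construction. When $n\geq m_{k_0}+2$, I have at least two ``spare'' vertices beyond the Eulerian circuit; I would route the circuit and then pad the remaining path of length $\geq 2$ with repeated colours and freely chosen switches, which gives enough slack to fix any balance mismatch on the closing edge (an extra edge can absorb one unit of parity), so $\psi=k_0$ unconditionally. When $n=m_{k_0}+1$, there is exactly one spare edge: it can correct a balance mismatch but cannot be wasted, so the construction succeeds at level $k_0$ precisely when the balances \emph{differ} (the spare edge flips parity to match), and when the balances agree the spare edge has nowhere useful to go and we must drop to $k_0-1$. When $n=m_{k_0}$, there is no slack: a $k_0$-colouring needs matching balance, and when balances differ we descend, the amount depending on the parity of $k_0$ (the odd case loses only one level because $K_{k_0-1}^*$ has $m_{k_0}-(k_0-1)$ fewer edges, leaving room, whereas the even case may lose two). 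For the lower bounds at $k_0-1$ and $k_0-2$ I would exhibit explicit Eulerian-circuit constructions on $K_{k_0-1}^*$ and $K_{k_0-2}^*$, verifying in each case that the now-larger edge surplus $n-m_{k_0-1}$ or $n-m_{k_0-2}$ provides enough spare edges to reconcile the balance with $(C_n,\si)$.

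\textbf{The hard part will be the exact bookkeeping in the $n=m_{k_0}$, $k_0$ odd versus even split, and confirming that dropping to $k_0-1$ (rather than $k_0-2$) always succeeds when $k_0$ is odd.} The delicate step is showing that the surplus $n - m_{k_0-1}$ in each descent case is both nonzero and of the right parity to absorb the balance discrepancy — this requires comparing $m_{k_0}$, $m_{k_0-1}$, and $m_{k_0-2}$ (which differ by $k_0-1$ and $k_0-2$ respectively, quantities whose parities depend on that of $k_0$) against the fixed balance of each $K_k^*$. I would isolate this as a short arithmetic lemma computing the balance of $K_k^*$ and the gaps $m_k-m_{k-1}=k-1$, and then verify case by case that the spare edges suffice; I expect the even-$k_0$ case with differing balance to be the one genuinely forcing a two-level drop, and I would double-check there is no clever switching that recovers $k_0-1$ there, since that is where the theorem's asymmetry between the odd and even subcases is most easily gotten wrong.
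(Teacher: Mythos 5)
Your overall strategy---Eulerian circuits in $K_k^*$, the balance invariant of Lemma~\ref{lemma:switching-balance}, and the closed-walk correspondence governing the tight cases---is the same engine the paper's proof runs on, but two concrete problems keep the proposal from being a proof. First, your arithmetic lemma is wrong: you assert that $m_{k_0}-m_{k_0-1}=k_0-1$ and $m_{k_0-1}-m_{k_0-2}=k_0-2$, but since $m_{2p}=p^2$ and $m_{2p+1}=p^2+2p$, the gaps alternate: $m_k-m_{k-1}=k-1$ when $k$ is odd, and $m_k-m_{k-1}=1$ when $k$ is even. This alternation is exactly what produces the theorem's odd/even dichotomy. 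With your values, in the case $n=m_{k_0}$ with $k_0$ even and different balance, you would compute a surplus $n-m_{k_0-1}=k_0-1\geq 2$ and, by your own slack heuristic, wrongly conclude $\psi(C_n,\sigma)=k_0-1$; the true surplus is $1$, and one surplus edge is precisely the situation where the balance obstruction bites again at level $k_0-1$, forcing the drop to $k_0-2$.

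Second, the impossibility halves of the single-surplus cases are asserted rather than proved (``cannot be wasted'', ``nowhere useful to go'', ``double-check there is no clever switching''). The missing argument, which the paper supplies, is this: when the cycle has exactly $m_k+1$ edges and the colouring is complete, exactly one edge of $K_k^*$ is traversed twice by the corresponding closed walk; since adding one copy of a non-loop edge destroys the even-degree condition, the doubled edge must be a loop of $K_k^*$, and all loops are negative---equivalently, in the paper's formulation, deleting one of the two duplicate edges leaves a path whose two endpoints, being its only odd-degree vertices, receive the same colour, so the closing edge must be negative to preserve properness. Hence the achievable balance at level $k$ with $n=m_k+1$ edges is exactly the opposite of that of $K_k^*$, which rules out $k_0$ in the same-balance case, and (combined with the fact that the numbers of negative edges of $K_{k_0}^*$ and $K_{k_0-1}^*$ differ by exactly one when $k_0$ is even) rules out $k_0-1$ in the even, different-balance case. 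A smaller flaw: your padding for $n\geq m_{k_0}+2$ invokes ``freely chosen switches'' to absorb a parity mismatch, which contradicts your own correct observation that switching preserves balance; the real slack comes from choosing which edge types the surplus edges realise (a positive--negative pair between two vertices of $K_k^*$ flips the parity, two loop traversals do not), or one can sidestep the issue entirely, as the paper does, by applying Theorem~\ref{theorem:pn} to the induced path $P_{n-1}$ via Corollary~\ref{corollary:achromatic-signed-monotonous-subgraphs}.
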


\begin{proof}
Since the claim can easily be verified for the cases where $n \leq 5$, we focus on cases where $n>5$. Since $C_n$ has size $n$, note that, by definition of $m_{k_0}$, we have $\psi(C_n,\sigma) \leq k_0$. We split the proof into several cases.

\begin{itemize}
    \item \textit{$n \ge m_{k_0} +2$.}
    
    Then, note that $C_n$ contains $P_{n-1}$ as an induced subgraph. In that case, Theorem~\ref{theorem:pn}, applied to that subgraph (see later Corollary~\ref{corollary:achromatic-signed-monotonous-subgraphs}), implies that $\psi(C_n, \si) = k_0$.
    
    \item \textit{$n = m_{k_0} +1$, and $K_{k_0}^*$ and $(C_n,\sigma)$ have different balance. }
    
    Here, consider $P_n^*$, the signed graph obtained from $(C_n,\sigma)$ by removing a negative edge (which can be assumed to exist, up to switching a vertex -- for convenience, we still denote by $(C_n,\sigma)$ the eventual signed graph). By Theorem~\ref{theorem:pn}, we have $\psi(P_n^*)=k_0$, and, since $P_n^*$ has exactly $m_{k_0}$ edges and all vertices of $K_{k_0}^*$ have even degree, as described in the proof of Theorem~\ref{theorem:pn} a complete $k_0$-colouring $\phi$ of $P_n^*$ can be derived to a Eulerian trail in $K_{k_0}^*$. Furthermore, the two end-vertices of $P_n^*$ must be assigned the same colour by $\phi$, as they are the only two vertices with odd degree of $P_n^*$, while all vertices of $K_{k_0}^*$ have even degree and thus must appear an even number of times along the trail. So, because, in $(C_n,\sigma)$, these two vertices are joined by a negative edge, we deduce that $\phi$ also forms a complete $k_0$-colouring of $(C_n,\sigma)$. Thus, $\psi(C_n, \si) = k_0$.
    
    \item \textit{$n = m_{k_0} +1$, and $K_{k_0}^*$ and $(C_n,\sigma)$ have the same balance.}
    
    Note first that $C_n$ contains $P_{n-1}$ as an induced subgraph. Thus, from that subgraph, as earlier we deduce that $\psi(C_n, \si) \ge k_0 -1$. 
    Towards a contradiction to the claim, assume $(C_n, \si)$ (or, rather, a signed graph obtained from $(C_n, \si)$ by switching vertices -- in what follows and later on, for simplicity we still refer to this signed graph as $(C_n,\sigma)$) admits a complete $k_0$-colouring $\phi$.
    Since $C_n$ has $n = m_{k_0} +1$ edges and $K_{k_0}^*$ has $n-1$ edges, we deduce that two edges, $e$ and $e'$, of $(C_n,\sigma)$ correspond to the same edge of $K_{k_0}^*$ through $\phi$. Now consider $P_n^* = (C_n,\sigma) - e$.
    Clearly, $\phi$ is also complete in $P_n^*$, meaning that $P_n^*$ and $K_{k_0}^*$ must have the same balance. Since $(C_n,\sigma)$ and $K_{k_0}^*$ also have the same balance, we deduce that $e$ must be a positive edge. But, now, since, again, $P_n^*$ has exactly $m_{k_0}$ edges, by $\phi$ necessarily its two end-vertices must be assigned the same colour. This contradicts the fact that $\phi$ is complete in $(C_n,\sigma)$, since, in that signed graph, these two vertices are joined by a positive edge. Thus, $\psi(C_n, \si) = k_0 -1$.
    
    \item \textit{$n = m_{k_0}$, and $K_{k_0}^*$ and $(C_n,\sigma)$ have the same balance.}
    
    Consider a Eulerian tour $C$ in $K_{k_0}^*$. Because $C$ and $(C_n,\sigma)$ are two signed cycles (defined over $C_n$) with the same balance, they are equivalent by Lemma~\ref{lemma:switching-balance}.
    It is thus possible, through switching vertices of $(C_n,\sigma)$, to reach another signed graph $(C_n,\sigma')$ which is precisely $C$. By then assigning colours to the vertices of $(C_n,\sigma')$ as the vertices of $K_{k_0}^*$ are traversed by $C$, we obtain a complete $k_0$-colouring of $(C_n,\sigma')$. Thus, $\psi(C_n, \si) = k_0$. 
    
    \item \textit{$n = m_{k_0}$, and $K_{k_0}^*$ and $(C_n,\sigma)$ have different balance.}
    
    Note that $\psi(C_n, \si) < k_0$. Indeed, towards a contradiction suppose $(C_n, \si)$ admits a complete $k_0$-colouring $\phi$. Since $C_n$ has size precisely $m_{k_0}$, the size of $K_{k_0}^*$, to $\phi$ corresponds a Eulerian tour in $K_{k_0}^*$, which is not possible since $K_{k_0}^*$ and $(C_n,\sigma)$ have different balance. Thus, $\phi$ cannot be complete and $\psi(C_n, \si) < k_0$.
    
    We now consider two last cases.
    
    \begin{itemize}
        \item If $k_0$ is odd, then $K_{k_0-1}^*$ has less than $m_{k_0} - 2$ edges. Since $C_n$ contains $P_{n-1}$, which has size $m_{k_0} -2$, as an induced subgraph, we directly deduce that $\psi(C_n, \si) = k_0 -1$. 
        
        \item If $k_0$ is even, then $K_{k_0-2}^*$ has less than $m_{k_0} - 2$ edges. Since $C_n$ contains $P_{n-1}$, which has size $m_{k_0} -2$, as an induced subgraph, we deduce that $\psi(C_n, \si) \ge k_0 - 2$. Towards a contradiction to the statement, assume $(C_n,\sigma)$ admits a complete $(k_0-1)$-colouring $\phi$. Said contradiction can then be obtained similarly as in a previous case (that where $n = m_{k_0} +1$, and $K_{k_0}^*$ and $(C_n,\sigma)$ have the same balance), by removing a positive edge from $(C_n,\sigma)$, looking at the correspondence between the remaining signed path (with respect to $\phi$) and a Eulerian tour in $K_{k_0-1}^*$, and deducing the existence of two adjacent vertices of $(C_n,\sigma)$ being assigned the same colour by $\phi$ while being joined through a positive edge. Thus, $\psi(C_n, \si)=k_0 - 2$. \qedhere
    \end{itemize}
\end{itemize}
\end{proof}

%%%%%%%%%%%%%%%%%%%%%%%%%%%%%%%%%%%%%%%%%%%%%%%%%%%%%%
%%%%%%%%%%%%%%%%%%%%%%%%%%%%%%%%%%%%%%%%%%%%%%%%%%%%%%
%%%%%%%%%%%%%%%%%%%%%%%%%%%%%%%%%%%%%%%%%%%%%%%%%%%%%%
%%%%%%%%%%%%%%%%%%%%%%%%%%%%%%%%%%%%%%%%%%%%%%%%%%%%%%
%%%%%%%%%%%%%%%%%%%%%%%%%%%%%%%%%%%%%%%%%%%%%%%%%%%%%%
%%%%%%%%%%%%%%%%%%%%%%%%%%%%%%%%%%%%%%%%%%%%%%%%%%%%%%
%%%%%%%%%%%%%%%%%%%%%%%%%%%%%%%%%%%%%%%%%%%%%%%%%%%%%%
%%%%%%%%%%%%%%%%%%%%%%%%%%%%%%%%%%%%%%%%%%%%%%%%%%%%%%
%%%%%%%%%%%%%%%%%%%%%%%%%%%%%%%%%%%%%%%%%%%%%%%%%%%%%%
%%%%%%%%%%%%%%%%%%%%%%%%%%%%%%%%%%%%%%%%%%%%%%%%%%%%%%
%%%%%%%%%%%%%%%%%%%%%%%%%%%%%%%%%%%%%%%%%%%%%%%%%%%%%%
%%%%%%%%%%%%%%%%%%%%%%%%%%%%%%%%%%%%%%%%%%%%%%%%%%%%%%
%%%%%%%%%%%%%%%%%%%%%%%%%%%%%%%%%%%%%%%%%%%%%%%%%%%%%%
%%%%%%%%%%%%%%%%%%%%%%%%%%%%%%%%%%%%%%%%%%%%%%%%%%%%%%

\section{Operations on signed graphs} \label{section:operations}

The results of this section are split into two distinct parts.
We start, in Subsection~\ref{subsection:elementary-operations}, 
by studying how modifying a signed graph (by removing vertices or edges, or by changing its signature) can alter its achromatic number.
Then, in Subsection~\ref{subsection:homomorphisms-signed-graphs}, we explore the connection, in terms of achromatic number, 
between a signed graph and its homomorphic images.
In each case, every bound we establish is proved to be tight, in a strong way.

\subsection{Elementary operations} \label{subsection:elementary-operations}

We start by investigating the possible consequences, on the achromatic number,
of removing vertices in signed graphs.
In the unsigned context, this very question was considered by Geller and Kronk, who proved the following:

\begin{theorem}[Geller, Kronk \cite{GK74}] \label{theorem:removing-vertex-unsigned}
Let $G$ be a graph. For every $v \in V(G)$, we have
$$ \psi(G) -1 \le \psi(G-v) \le \psi(G).$$
Furthermore, there are contexts where both bounds can be attained.
\end{theorem}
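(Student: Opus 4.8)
The plan is to prove the two inequalities separately and then to exhibit tight examples for each. Throughout, fix a vertex $v$ of $G$, write $k=\psi(G)$, and recall that a complete colouring is a proper colouring in which every pair of distinct colours appears on some edge; I will freely speak of the colour classes $V_1,\dots,V_m$ of a complete $m$-colouring, each of which is an independent set.

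For the upper bound $\psi(G-v)\le\psi(G)$, I would start from a complete colouring $c$ of $G-v$ using $m=\psi(G-v)$ colours and try to extend it to $G$ by colouring $v$. Two cases arise depending on the colours appearing on $N(v)$. If $N(v)$ misses some colour $i$, then setting $c(v)=i$ keeps the colouring proper and complete, so $\psi(G)\ge m$. If instead $N(v)$ meets all $m$ colours, then assigning $v$ a fresh colour $m+1$ yields a proper colouring that is still complete: every old pair survives, and for each $i\le m$ the edge from $v$ to a neighbour of colour $i$ realises the new pair $(i,m+1)$; hence $\psi(G)\ge m+1$. In both cases $\psi(G)\ge\psi(G-v)$.

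For the lower bound $\psi(G)-1\le\psi(G-v)$, the key observation I would isolate first is that deleting $v$ can only remove edges incident to $v$, and every such edge realises a pair of the form $(a,\cdot)$ where $a=c(v)$; consequently, in the restriction to $G-v$ of a complete $k$-colouring $c$ of $G$, every pair of colours not involving $a$ is still realised. From here I would distinguish whether completeness is broken. If no pair is lost, the restriction is already complete on at least $k-1$ colours. If some pair $(a,j_0)$ is lost, then by the observation there is no edge between the classes $V_a\setminus\{v\}$ and $V_{j_0}$ in $G-v$, so I may legally recolour all remaining colour-$a$ vertices with $j_0$; this stays proper, eliminates colour $a$, and leaves every pair among the remaining $k-1$ colours realised---a pair avoiding $j_0$ survives on its old edge, and a pair $(i,j_0)$ is realised by the surviving edge between classes $V_i$ and $V_{j_0}$---giving a complete $(k-1)$-colouring. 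The point I expect to be the main obstacle is precisely guaranteeing that the recolouring used to get rid of colour $a$ stays proper: a naive greedy recolouring of a colour-$a$ vertex can fail if its neighbourhood already meets all other colours. The resolution is that I do not recolour arbitrarily but merge $a$ into the specific missing partner $j_0$, for which the absence of an $a$--$j_0$ edge is exactly what makes the merge proper.

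Finally, for tightness I would take $G=K_n$ with $v$ any vertex, where $\psi(K_n)=n$ and $\psi(K_n-v)=\psi(K_{n-1})=n-1$, attaining the lower bound; and $G$ obtained from any graph by adding an isolated vertex $v$, where that vertex can be absorbed into any colour class of a complete colouring, giving $\psi(G-v)=\psi(G)$ and attaining the upper bound.
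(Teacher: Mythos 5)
Your proof is correct. Note that the paper does not prove this statement at all---it is quoted from Geller and Kronk \cite{GK74}---so the only internal point of comparison is the paper's proof of the signed analogue, Theorem~\ref{theorem:removing-vertex-signed}. There, the upper bound is obtained essentially as in your argument: extend a complete colouring of the vertex-deleted graph by either reusing a colour absent (in the appropriate sense) from $N(v)$ or introducing a fresh colour that $v$ realises against all others; the signed setting merely adds case analysis around the colour $\pm 0$ and switching. For the lower bound, however, the mechanisms genuinely differ. You merge the class $V_a$ of $v$ into the specific class $V_{j_0}$ witnessing a lost pair, which is legitimate precisely because the lost pair means there is no $a$--$j_0$ edge in $G-v$; this yields a complete $(k-1)$-colouring in one step, and you correctly identify and resolve the only delicate point (properness of the merge). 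The paper's signed proof instead deletes the entire colour class of $v$ (its Observation~\ref{observation:remove-class}) and re-inserts the deleted vertices one at a time via the upper-bound extension step, which is why it only recovers $k-2$ there rather than $k-1$. Your merging argument is the classical unsigned one and is sharper in that setting; it does not transfer to signed graphs, where two classes with no edge between them cannot in general be merged (edge signs, switching, and the special role of $\pm 0$ obstruct it), which is exactly what the remove-and-re-add route buys in the paper. Your tightness examples ($K_n$ for the lower bound, an added isolated vertex for the upper bound) are correct, and the isolated-vertex remark is the same one the paper uses for its signed version.
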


The remarkable point in Theorem~\ref{theorem:removing-vertex-unsigned},
and also in close results we will consider later on,
is that both bounds can be attained for some graphs $G$ and vertices $v$.
This is also an aspect that we will strive to obtain in our results throughout this section.
Particularly, in the signed context, note that the parity of the number $k$ of colours in a complete $k$-colouring is a crucial parameter,
as the difference between the size of any $K_n^*$ and the size of the next $K_{n+1}^*$ depends mostly on the parity on $n$.
Mainly for this reason, we think an interesting aspect behind our bounds in this section
is to investigate whether they are tight regardless of the parity of the number of assigned colours.
As will be apparent later on, this requires to design several different types of configurations and examples.

More precisely, our analogue of Theorem~\ref{theorem:removing-vertex-unsigned} reads as follows:

\begin{theorem}\label{theorem:removing-vertex-signed}
Let $(G, \si)$ be a signed graph.
For every $v \in V(G)$, we have
$$ \psi(G,\si) -2 \le \psi(G-v,\si) \le \psi(G,\si) .$$
Furthermore, there are contexts where both bounds can be attained.
Particularly, for every $\psi(G,\sigma) \geq 3$, the lower bound can be attained.
\end{theorem}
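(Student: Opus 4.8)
The plan is to prove the statement in two parts. First I would establish the inequalities $\psi(G,\sigma)-2 \le \psi(G-v,\sigma) \le \psi(G,\sigma)$, and then I would construct families of signed graphs witnessing tightness of each bound, paying attention to both parities of the number of colours.

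For the \emph{upper bound} $\psi(G-v,\sigma) \le \psi(G,\sigma)$, the idea is that any complete $k$-colouring of a signed graph equivalent to $(G-v,\sigma)$ extends to $(G,\sigma)$. Given a switching $\sigma'$ of $(G-v,\sigma)$ admitting a complete $k$-colouring $\phi$, I would lift $\sigma'$ to a switching of $(G,\sigma)$ by leaving the edges incident to $v$ unchanged (or, more carefully, by using the fact that switching is determined on $G-v$ and the incident edges of $v$ follow along), and then assign $v$ an arbitrary colour of $M_k$. Since adding a vertex and its edges can only add edge types to the reduced graph but never destroy an existing one, the restriction to $G-v$ already realises every edge of $K_k^*$, so the extended colouring stays complete. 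Hence $\psi(G,\sigma) \ge \psi(G-v,\sigma)$.

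For the \emph{lower bound} $\psi(G-v,\sigma) \ge \psi(G,\sigma)-2$, I would start from a switching $\sigma'$ of $(G,\sigma)$ with a complete $k$-colouring $\phi$, where $k=\psi(G,\sigma)$. The vertex $v$ is assigned some colour $\pm i$ (or $\pm 0$). By Observation~\ref{observation:remove-class}, deleting \emph{all} vertices coloured $\pm i$ yields a complete $(k-2)$-colouring (or $(k-1)$-colouring if $i=0$) of the remaining signed graph; removing the single vertex $v$ is even less destructive, so intuitively we lose at most the two colours $+i,-i$. More precisely, I would apply Observation~\ref{observation:remove-class} to the colour class of $v$ to obtain a complete $(k-2)$-colouring $\psi'$ of the graph $G''$ obtained by deleting that whole class, then re-insert all the vertices of the class other than $v$ without changing colours: as in the upper bound argument, re-adding vertices and edges preserves completeness, so $G-v$ (in the appropriate switching) admits a complete $(k-2)$-colouring and $\psi(G-v,\sigma) \ge k-2$. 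The slightly delicate point is the $i=0$ case, where Observation~\ref{observation:remove-class} gives a $(k-1)$-colouring, which is even stronger than needed.

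The main work, and the expected obstacle, lies in proving \emph{tightness}, especially the claim that for every target value $\psi(G,\sigma) \ge 3$ the lower bound is attained. For the upper bound I would exhibit, for each target parity, a signed graph together with a non-cut, "redundant" vertex $v$ whose deletion leaves the achromatic number unchanged (e.g.\ taking a signed complete graph or a padded path/cycle from Subsections~\ref{subsection:achromatic-cliques}--\ref{subsection:achromatic-paths-cycles} with one extra vertex of low degree). For the harder lower-bound tightness, I would need, for each $k \ge 3$, a signed graph $(G,\sigma)$ with $\psi(G,\sigma)=k$ containing a vertex $v$ whose removal forces a drop of exactly $2$, i.e.\ $\psi(G-v,\sigma)=k-2$. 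The natural strategy is to take a signed graph that realises $K_k^*$ "tightly" (size exactly $m_k$, so by Proposition~\ref{proposition:upper-bound-size} no $k$-colouring can survive losing any positive contribution), arranged so that a single vertex $v$ carries the unique witnesses for both signs of a colour pair $(i,i)$, so that its deletion simultaneously kills the last $(k-1)$- and $(k-2)$-realisability and the matching bound of Observation~\ref{observation:large-matching} or the edge-count bound of Proposition~\ref{proposition:upper-bound-size} then forces $\psi(G-v,\sigma) \le k-2$. Handling the parity of $k$ here is the crux: since the size gap between consecutive $K_k^*$ depends on the parity of $k$, I would split into $k$ even and $k$ odd and build separate gadgets, verifying in each case both that $\psi(G,\sigma)=k$ (via an explicit complete $k$-colouring, e.g.\ an Eulerian-trail construction as in Theorem~\ref{theorem:pn}) and that $\psi(G-v,\sigma) \le k-2$ (via Proposition~\ref{proposition:upper-bound-size} applied to the reduced edge count after deletion).
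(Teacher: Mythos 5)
Your overall architecture is the same as the paper's: extend a complete colouring over a re-added vertex for the upper bound; for the lower bound, delete the whole colour class of $v$ via Observation~\ref{observation:remove-class} and re-insert the remaining class members one at a time using the extension step; and for tightness, split by parity, exhibit an explicit complete $k$-colouring, and force $\psi(G-v,\sigma)\le k-2$ via the edge-count bound of Proposition~\ref{proposition:upper-bound-size} (even case) or the matching bound of Observation~\ref{observation:large-matching} (odd case) --- which is exactly how the paper's signed $K_{p,p}$-based gadgets work. So the skeleton is right.

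However, there is a genuine gap at the heart of the extension step, and it propagates. You claim one can ``assign $v$ an arbitrary colour of $M_k$'' because adding a vertex ``can only add edge types \ldots never destroy an existing one.'' But completeness presupposes properness: if every colour of $M_k$ is blocked at $v$ --- that is, for each $i$, $v$ has a positive neighbour coloured $+i$ or a negative neighbour coloured $-i$, and symmetrically --- then every choice of colour for $v$ creates a forbidden edge type (a p-edge of type $(i,i)$, or an edge of type $(0,0)$), so the reduced graph acquires a positive loop or a loop at $0$ and is no longer $K_k^*$. This blocked case genuinely occurs (think of $v$ of high degree), and escaping it is where the paper's real work lies: one must pass to $k+1$ colours, assigning $\pm 0$ to $v$ when $k$ is even; when $k$ is odd, $\pm 0$ is already in use, and the paper first switches $v$'s $\pm 0$-coloured neighbours so that all edges from $v$ to them become negative, then recolours the entire $\pm 0$ class together with $v$ by the fresh colour $+(k+1)$, verifying that all new types $(i,k+1)$ and the n-edge of type $(k+1,k+1)$ are realised. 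This parity-dependent rescue is absent from your proposal, and without it your lower-bound re-insertion loop is also incomplete --- note in addition that the re-inserted class members cannot keep their colours ``unchanged,'' since $\pm i$ is no longer in the palette $M_{k-2}$ and the surviving class members need not realise the required edge types; each re-insertion must go through the full extension argument. Finally, your tightness gadgets are only sketched, but since the sketch mirrors the paper's actual constructions, that part is a matter of execution rather than a missing idea.
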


\begin{proof}
We prove first that $\psi(G-v,\sigma) \le \psi(G,\sigma)$.
Let $\phi$ be a complete $k$-colouring of $(G-v,\sigma)$.
Our goal is to extend $\phi$ to a complete $k'$-colouring $\phi'$ of $(G,\sigma)$, where $k' \geq k$, by essentially finding a valid colour for $v$.
To that aim, let us start by setting $\phi'(u)=\phi(u)$ for every $u \in V(G) \setminus \{v\}$.
Note that if there is an $i$ such that neither $-i$ nor $+i$ is assigned by $\phi$ to any of the neighbours of $v$ in $(G,\sigma)$,
then, upon setting $\phi'(v)=+i$, we get that $\phi'$ is a complete $k$-colouring of $(G,\sigma)$.
A similar conclusion holds if, for any $i \in V(K_k^*)$,
vertex $v$ has neither a neighbour assigned colour $+i$ connected via a positive edge nor a neighbour assigned colour $-i$ connected via a negative edge.

Thus, assume now all colours $\pm i$ of $M_k$ are assigned by $\phi$ to some neighbours of $v$ in $(G,\sigma)$,
and, more precisely, that the neighbours of $v$ verify the colour conditions above.
Now, if $\pm 0 \not \in M_k$ (which occurs if $k=2n$), then note that assigning colour~$\pm 0$ by $\phi'$ to $v$ results in a complete $(k+1)$-colouring of $(G,\sigma)$.
Particularly, note that $(G,v)$ has both a p-edge and an n-edge of type $(0,i)$ for every $i \in \{1,\dots,k\}$.
So, assume last that $\pm 0 \in M_k$ (and thus $k=2n+1)$.
By the arguments above, because we are not done when just setting $\phi'(v)=\pm 0$, it means that $v$ has neighbours assigned label~$\pm 0$ by $\phi$.
We start by switching some of these neighbours, if necessary, so that $v$ is connected to all its neighbours assigned colour~$\pm 0$ via negative edges.
We next modify $\phi'$ by setting $\phi'(u)=+(k+1)$ for every vertex $u \in V(G) \setminus \{v\}$ verifying $\phi(u)=\pm 0$,
before setting $\phi'(v)=+(k+1)$.
As a result, $\phi'$ is a complete $(k+1)$-colouring of $(G,\sigma)$.
In particular, there is an n-edge of type $(k+1,k+1)$ due to $v$ having all of its neighbours of colour~$+(k+1)$ (there is at least one such) being joined to $v$ via negative edges.
Also, for every $\pm i \in M_{k+1} \setminus \{\pm(k+1)\}$, there is a p-edge and an n-edge of type $(i,k+1)$: the p-edge because, for every non-negative colour $+i$,
vertex $v$ is connected to a neighbour with colour~$+i$ via a positive edge or to a neighbour with colour $-i$ via a negative edge, and the n-edge because, for every non-negative colour $+i$, vertex $v$ is connected to a neighbour with colour~$-i$ via a positive edge or to a neighbour with colour $+i$ via a negative edge.
Thus, we have $\psi(G-v,\sigma) \le \psi(G,\sigma)$.

We now prove that $\psi(G,\sigma) -2 \le \psi(G-v,\sigma)$,
which follows essentially by the same arguments as earlier.
Indeed, assume that $\psi(G-v,\sigma)=k$, and let $\phi$ be a complete $k$-colouring of $(G,\sigma)$.
Suppose $\phi(v)=+i$.
Let $G'$ be the graph obtained from $G$ by removing all vertices, including $v$, being assigned colour~$\pm i$ by $\phi$.
Let us denote by $u_1,\dots,u_d$ the vertices different from $v$ that were removed.
By Observation~\ref{observation:remove-class}, the restriction of $\phi$ to $(G',\sigma)$ forms a complete $(k-2)$-colouring $\phi'$ (or a complete $(k-1)$-colouring if $\phi(v) = \pm 0$).
Using the same arguments as in the first case of the current proof, $\phi'$ can first be extended to a complete colouring of $(G'+ u_1, \sigma)$ using the same number of colours ($k-2$ or $k-1$). After that, the resulting colouring can then be extended to a complete colouring of $(G'+ \{u_1,u_2\}, \sigma)$ using the same number of colours. Repeating those arguments by adding back all $u_i$'s one by one, we end up, from $\phi'$, with a complete $k'$-colouring of $(G-v,\sigma)$ with $k' \ge k-2$. Thus, $\psi(G,\sigma) -2 \le \psi(G-v,\sigma)$.

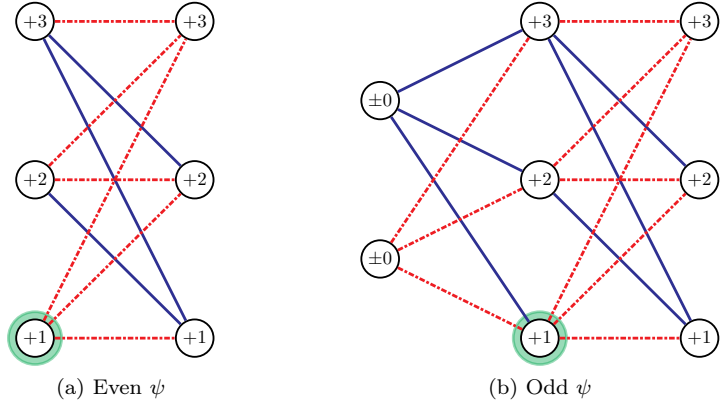
\begin{figure}[!t]
 	\centering
 	\subfloat[Even $\psi$]{
    \scalebox{0.7}{
	\begin{tikzpicture}[inner sep=0.7mm]	
	
    \draw[Green, line width=1.5pt,opacity=0.4, fill] (0,-6) circle (0.5cm);
	
	\node[draw, circle, black, line width=1pt, fill=white](u1) at (0,0)[]{$+3$};
	\node[draw, circle, black, line width=1pt,fill=white](u2) at (0,-3)[]{$+2$};
	\node[draw, circle, black, line width=1pt,fill=white](u3) at (0,-6)[]{$+1$};
	\node[draw, circle, black, line width=1pt,fill=white](v1) at (3,0)[]{$+3$};
	\node[draw, circle, black, line width=1pt,fill=white](v2) at (3,-3)[]{$+2$};
	\node[draw, circle, black, line width=1pt,fill=white](v3) at (3,-6)[]{$+1$};
	
    \draw [-, line width=1.5pt, Red, densely dashdotted] (u1) -- (v1);
    \draw [-, line width=1.5pt, Blue] (u1) -- (v2);
    \draw [-, line width=1.5pt, Blue] (u1) -- (v3);
	
    \draw [-, line width=1.5pt, Red, densely dashdotted] (u2) -- (v1);
    \draw [-, line width=1.5pt, Red, densely dashdotted] (u2) -- (v2);
    \draw [-, line width=1.5pt, Blue] (u2) -- (v3);
	
    \draw [-, line width=1.5pt, Red, densely dashdotted] (u3) -- (v1);
    \draw [-, line width=1.5pt, Red, densely dashdotted] (u3) -- (v2);
    \draw [-, line width=1.5pt, Red, densely dashdotted] (u3) -- (v3);
	\end{tikzpicture}
    }
    }
    \hspace{40pt}
 	\subfloat[Odd $\psi$]{
    \scalebox{0.7}{
	\begin{tikzpicture}[inner sep=0.7mm]	
	
    \draw[Green, line width=1.5pt,opacity=0.4, fill] (u3) circle (0.5cm);
	
	\node[draw, circle, black, line width=1pt,fill=white](w1) at (-3,-1.5)[]{$\pm 0$};
	\node[draw, circle, black, line width=1pt,fill=white](w2) at (-3,-4.5)[]{$\pm 0$};
	\node[draw, circle, black, line width=1pt,fill=white](u1) at (0,0)[]{$+3$};
	\node[draw, circle, black, line width=1pt,fill=white](u2) at (0,-3)[]{$+2$};
	\node[draw, circle, black, line width=1pt,fill=white](u3) at (0,-6)[]{$+1$};
	\node[draw, circle, black, line width=1pt,fill=white](v1) at (3,0)[]{$+3$};
	\node[draw, circle, black, line width=1pt,fill=white](v2) at (3,-3)[]{$+2$};
	\node[draw, circle, black, line width=1pt,fill=white](v3) at (3,-6)[]{$+1$};
	
    \draw [-, line width=1.5pt, Red, densely dashdotted] (u1) -- (v1);
    \draw [-, line width=1.5pt, Blue] (u1) -- (v2);
    \draw [-, line width=1.5pt, Blue] (u1) -- (v3);
	
    \draw [-, line width=1.5pt, Red, densely dashdotted] (u2) -- (v1);
    \draw [-, line width=1.5pt, Red, densely dashdotted] (u2) -- (v2);
    \draw [-, line width=1.5pt, Blue] (u2) -- (v3);
	
    \draw [-, line width=1.5pt, Red, densely dashdotted] (u3) -- (v1);
    \draw [-, line width=1.5pt, Red, densely dashdotted] (u3) -- (v2);
    \draw [-, line width=1.5pt, Red, densely dashdotted] (u3) -- (v3);
	
    \draw [-, line width=1.5pt, Blue] (w1) -- (u1);
    \draw [-, line width=1.5pt, Blue] (w1) -- (u2);
    \draw [-, line width=1.5pt, Blue] (w1) -- (u3);
	
    \draw [-, line width=1.5pt, Red, densely dashdotted] (w2) -- (u1);
    \draw [-, line width=1.5pt, Red, densely dashdotted] (w2) -- (u2);
    \draw [-, line width=1.5pt, Red, densely dashdotted] (w2) -- (u3);
	\end{tikzpicture}
    }
    }

\caption{Signed graphs attesting of the tightness of the lower bound in Theorem~\ref{theorem:removing-vertex-signed}.
Vertices highlighted in green are vertices which, when removed, make the achromatic number decrease by~$2$.
In (a) is also depicted a complete $6$-colouring, 
and in (b) a complete $7$-colouring.
%and in (d) a complete $5$-colouring.
Dashed red edges are negative edges, while solid blue edges are positive edges.  
\label{figure:remove-vertex}}
\end{figure}

\medskip

Let us now focus on the tightness of the bounds.
Regarding the upper bound, we simply remark that adding an isolated vertex $v$ to a signed graph $(G,\sigma)$ does not modify its achromatic number.
Thus, finding signed graphs $(G, \si)$ having a vertex $v \in V(G)$ such that $\psi(G,\sigma)=\psi(G-v,\sigma)$ is possible for any value of $\psi(G,\si)$.
We now provide examples showing that removing a vertex from a signed graph can decrease its achromatic number by exactly~$2$.
Precisely, for every $k \geq 3$, we prove that there is a signed graph $(G,\sigma)$ such that $\psi(G,\sigma)=k$,
and having a vertex $v$ such that $\psi(G-v,\sigma)=k-2$.
The constructions we provide are illustrated in Figure~\ref{figure:remove-vertex}.

\begin{itemize}
	\item We start by providing a construction for even values of $k$.
	Set $p=k/2$, where $k \geq 4$. We consider the following signed graph $(G,\sigma)$.
	The graph $G$ is the complete bipartite graph $K_{p,p}$, in which the vertices from the two partite sets are denoted $u_1,\dots,u_p$ and $v_1,\dots,v_p$, respectively.
	The signature $\sigma$ of $G$ we consider, is obtained by considering every edge $u_iv_j$,
	and setting $\sigma(u_iv_j)=-$ if $i \leq j$, and $\sigma(u_iv_j)=+$ otherwise.
	
	We claim that $\psi(G,\sigma)=k$, while $\psi(G-u_1,\sigma)=k-2$.
	First off, recall that $\psi(G,\sigma) \leq k$ by Theorem~\ref{proposition:upper-bound-n}, since $k=|V(G)|$.
	Furthermore, $\psi(G,\sigma)$ admits complete $k$-colourings, as, due to the signature $\sigma$,
	it can be checked that assigning colour~$+i$ to both $u_i$ and $v_i$ for every $i \in \{1,\dots,p\}$ results in a said colouring.
	Precisely, for every two $x,y \in \{1,\dots,p\}$,
	there is an edge $u_iv_j$ of any sign in $\{-,+\}$ such that $u_i$ and $u_j$ are assigned colour~$+x$ and $+y$, respectively,
	with the exception of a positive edge when $x=y$.
	Thus, $\psi(G,\sigma)=k$.
	To see now that $\psi(G-u_1,\sigma)=k-2$ holds, note that this signed graph $(G-u_1,\sigma)$ has precisely $p(p-1)$ edges.
	By Proposition~\ref{proposition:upper-bound-size}, we thus get that $\psi(G-u_1,\sigma)$ is at most $2p-2=k-2$,
	since $p(p-1)<p^2-1$.
	Since we have proved earlier that removing a vertex from a signed graph can decrease its achromatic number by at most~$2$,
	we thus get that $\psi(G-u_1,\sigma)=k-2$.
	
	\item We now provide a similar result for odd values of $k \geq 3$.
	Set $p=\lfloor k/2 \rfloor$, and consider the signed graph $(G,\sigma)$ obtained in the following way.
	We start from the same signature of $K_{p,p}$ as in the previous case, and add two new vertices,
	$w_p$ and $w_n$, where $w_p$ is joined to all $u_i$'s via positive edges,
	while $w_n$ is joined to all $u_i$'s via negative edges.
	
	We claim that, again, $\psi(G,\sigma)=k$, while $\psi(G-u_1,\sigma)=k-2$.
	Let us first consider $\psi(G,\sigma)$.
	By Proposition~\ref{proposition:upper-bound-size}, we deduce that $\psi(G,\sigma) \leq k$.
	We note that, to obtain a complete $k$-colouring of $\psi(G,\sigma)$,
	we can just assign colour~$\pm 0$ to both $w_p$ and $w_n$, and colour $+i$ to both $u_i$ and $v_i$, for every $i \in \{1,\dots,p\}$.
	Particularly, $(G,\sigma)$ has, for every $i \in \{1,\dots,p\}$, both a p-edge of type $(0,i)$ (due to the edge $w_pu_i$) and an n-edge of type $(0,i)$ (due to the edge $w_nu_i)$,
	while all other required types of edges are realised due to the edges joining the $u_i$'s and the $v_i$'s (recall the arguments used to deal with the previous case). Also, there are neither n-edges nor p-edges of type $(0,0)$ since $w_p$ and $w_n$ are not adjacent.
	Thus, $\psi(G,\sigma)=k$.
	We now prove that $\psi(G-u_1,\sigma)=k-2$.
	Note that the maximum matching of $G-u_1$, due to its bipartiteness, has size $p-1$.
	By Observation~\ref{observation:large-matching}, we deduce that $\psi(G-u_1,\sigma) \leq k-2$.
	Now, since we have proved earlier that removing a vertex from a signed graph can decrease its achromatic number by at most~$2$,
	we get that $\psi(G-u_1,\sigma) = k-2$. \qedhere
\end{itemize}
\end{proof}

An interesting consequence of Theorem~\ref{theorem:removing-vertex-signed}, is the following:

\begin{corollary} \label{corollary:achromatic-signed-monotonous-subgraphs}
If $(H,\pi)$ is an induced signed subgraph of a signed graph $(G,\sigma)$, then
$$ \psi(H, \pi) \leq \psi(G, \sigma).$$
\end{corollary}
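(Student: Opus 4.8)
The plan is to derive Corollary~\ref{corollary:achromatic-signed-monotonous-subgraphs} as an immediate consequence of the upper bound established in Theorem~\ref{theorem:removing-vertex-signed}, namely that $\psi(G-v,\sigma) \leq \psi(G,\sigma)$ for every vertex $v$. The key observation is that an induced signed subgraph $(H,\pi)$ is obtained from $(G,\sigma)$ by deleting the vertices of $V(G) \setminus V(H)$ one at a time, and that the signature $\pi$ on $H$ is exactly the restriction of $\sigma$ to the edges of $H$, since signatures transpose freely to subgraphs. So the whole argument reduces to iterating the single-vertex deletion bound.

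Concretely, I would write $V(G) \setminus V(H) = \{w_1, \dots, w_t\}$ and define the chain of induced signed subgraphs $(G_0,\sigma) = (G,\sigma)$ and $(G_{i},\sigma) = (G_{i-1} - w_i, \sigma)$ for $i = 1, \dots, t$, so that $(G_t, \sigma) = (H,\pi)$. Each step removes a single vertex, so Theorem~\ref{theorem:removing-vertex-signed} gives
$$ \psi(G_i, \sigma) \leq \psi(G_{i-1}, \sigma) $$
for every $i \in \{1,\dots,t\}$. Chaining these inequalities together yields $\psi(H,\pi) = \psi(G_t,\sigma) \leq \psi(G_0,\sigma) = \psi(G,\sigma)$, which is exactly the claim. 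The only point worth stating explicitly is that the achromatic number is monotone under single-vertex deletion regardless of the order in which the vertices are removed, but this is built into the theorem since it holds for an arbitrary vertex of an arbitrary signed graph.

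There is essentially no obstacle here: the corollary is a routine induction on the number $t$ of deleted vertices, with the base case $t=0$ being trivial (where $H = G$) and the inductive step being a direct application of the upper bound half of Theorem~\ref{theorem:removing-vertex-signed}. The one subtlety I would take care to note is terminological rather than mathematical, namely that ``induced signed subgraph'' means $H$ is an induced subgraph of $G$ and $\pi = \sigma|_{E(H)}$; this guarantees that each intermediate $(G_i,\sigma)$ is indeed a signed graph to which the theorem applies, with its signature inherited from $\sigma$ in the manner described in the preliminaries.
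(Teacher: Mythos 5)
Your proposal is correct and follows exactly the paper's route: the paper presents Corollary~\ref{corollary:achromatic-signed-monotonous-subgraphs} as an immediate consequence of the upper bound $\psi(G-v,\sigma) \leq \psi(G,\sigma)$ from Theorem~\ref{theorem:removing-vertex-signed}, obtained by deleting the vertices of $V(G)\setminus V(H)$ one at a time. You merely make the iteration explicit, which is fine.
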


Particularly, Corollary~\ref{corollary:achromatic-signed-monotonous-subgraphs} can be combined with results on the achromatic number of particular signed graphs.
For instance, having a long induced path in a signed graph, guarantees large achromatic number, according to Theorem~\ref{theorem:pn}.
Similarly, having a large positive clique (up to switching vertices) guarantees the same, by Theorem~\ref{theorem:kn-positive}.

Still about Corollary~\ref{corollary:achromatic-signed-monotonous-subgraphs}, a legitimate question to wonder is whether the ``induced'' requirement can be dropped from the statement.
One way to investigate this, is through studying the consequences, on the achromatic number, of removing an edge from a signed graph.
A similar question was actually studied in the unsigned context, with Geller and Kronk proving the following:

\begin{theorem}[Geller, Kronk \cite{GK74}] \label{theorem:removing-edge-unsigned}
Let $G$ be a graph. For every $e \in E(G)$, we have
$$ \psi(G) - 1 \leq \psi(G-e) \leq \psi(G) +1.$$
Furthermore, there are contexts where both bounds can be attained.
\end{theorem}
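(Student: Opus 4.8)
The plan is to prove both inequalities through the same elementary principle: a proper colouring is complete precisely when every pair of distinct colour classes is joined by at least one edge, and consequently any proper colouring can be \emph{coarsened} into a complete one, without increasing the number of colours, by repeatedly merging any two colour classes that happen to share no edge. I would keep this coarsening operation in mind throughout, together with the observation that deleting an edge can only destroy the witness of the single pair $\{\phi(u),\phi(v)\}$ corresponding to its endpoints, whereas adding an edge can only create new witnesses and never destroys completeness (it may, however, destroy properness).

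For the lower bound $\psi(G-e)\ge\psi(G)-1$, I would start from an optimal complete $k$-colouring $\phi$ of $G$, with $k=\psi(G)$, and delete $e=uv$. If $\phi$ remains complete on $G-e$, we are done with $k$ colours. Otherwise the only pair that can have lost its witness is $\{\phi(u),\phi(v)\}$, and the fact that it is now unrealised means $e$ was its \emph{unique} witnessing edge; hence in $G-e$ there is no edge at all between the classes of colours $\phi(u)$ and $\phi(v)$. I would then merge these two classes into one. This stays proper and uses $k-1$ colours, and it is complete: for any third colour $x$, the edge witnessing $\{\phi(u),x\}$ in $G$ was not $e$ and therefore survives, witnessing the merged pair, while all pairs avoiding $\phi(u)$ and $\phi(v)$ are untouched. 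Thus $\psi(G-e)\ge k-1$.

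For the upper bound $\psi(G-e)\le\psi(G)+1$, equivalently $\psi(G)\ge\psi(G-e)-1$, I would take an optimal complete $k$-colouring $\phi$ of $G-e$ and add $e=uv$ back. If $\phi(u)\ne\phi(v)$, then $\phi$ stays proper and complete on $G$, giving $\psi(G)\ge k$. If $\phi(u)=\phi(v)=c$, I would restore properness by recolouring $u$ with a brand-new colour, obtaining a proper $(k+1)$-colouring $\phi'$ of $G$, and then coarsen $\phi'$ to a complete colouring. The crucial point is that the $k-1$ colour classes other than $c$ remain pairwise adjacent in $\phi'$: any pair $\{a,b\}$ with $a,b\ne c$ was witnessed in $G-e$ by an edge avoiding $u$ (both of its endpoints had colours different from $c$), and this edge still witnesses $\{a,b\}$ after recolouring $u$. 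Hence no two of these $k-1$ classes are ever merged by the coarsening, so the resulting complete colouring of $G$ keeps them distinct and uses at least $k-1$ colours, giving $\psi(G)\ge k-1$.

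Finally, for tightness, the lower bound is attained by $G=K_n$ and any edge $e$: here $\psi(K_n)=n$, whereas colouring the two endpoints of $e$ alike and all other vertices distinctly yields a complete $(n-1)$-colouring of $K_n-e$, and no $n$-colouring can be complete since the endpoints' colour pair has no witness, so $\psi(K_n-e)=n-1$. I expect the genuinely delicate part to be exhibiting a graph attaining the \emph{upper} bound, that is, one in which deleting a single edge \emph{increases} the achromatic number; this reflects the well-known non-monotonicity of $\psi$ under taking subgraphs, and any construction must arrange that the extra edge of $G$ forces two colour classes of an otherwise-larger complete colouring of $G-e$ to collapse. For such an explicit example I would follow the construction of Geller and Kronk~\cite{GK74}.
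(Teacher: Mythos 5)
The paper does not prove this statement at all: Theorem~\ref{theorem:removing-edge-unsigned} is quoted verbatim from Geller and Kronk~\cite{GK74} as background for the signed analogue (Theorem~\ref{theorem:removing-edge-signed}), so there is no in-paper argument to compare against. Judged on its own, your proof of both inequalities is correct and self-contained. For the lower bound, the key observations hold: $e=uv$ can only witness the single pair $\{\phi(u),\phi(v)\}$, so if completeness fails in $G-e$ then $e$ was the unique edge between those two classes, merging them preserves properness in $G-e$, and completeness of the merged colouring follows because every pair $\{\phi(u),x\}$ or $\{\phi(v),x\}$ had a witness other than $e$. For the upper bound, your ``protected classes'' argument is the right fix for the one delicate point: after recolouring $u$ with a fresh colour, the $k-1$ classes avoiding $c$ are pairwise joined by edges disjoint from $u$, and since the coarsening only merges classes with no edge between them, no two protected classes can ever land in the same final class (any merged class containing $a$ retains the $a$--$b$ witness edge to any merged class containing $b$), so the resulting complete colouring of $G$ has at least $k-1$ colours. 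Your $K_n$ example correctly attains the lower bound. The only incomplete piece is the tightness of the upper bound, which you defer to~\cite{GK74}; since the theorem explicitly asserts it, you should close this, and it takes two lines: $\psi(C_4)=2$ (no $3$-colouring of $C_4$ realises all three colour pairs on its four edges), while deleting any edge yields $P_4$, which admits the complete $3$-colouring $1,2,3,1$ along the path, so $\psi(P_4)=3=\psi(C_4)+1$.
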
 

Thus, in the unsigned context, removing a vertex from a graph cannot increase the achromatic number,
while removing just an edge can increase it.
Put differently, the achromatic number of unsigned graphs is monotonic with respect to the induced subgraphs,
but this is not true with respect to subgraphs in general.

In the case of signed graphs, any edge can be modified in two relevant ways,
being that it can be deleted or have its sign changed.
Note that, in general, changing the sign of a single edge cannot be obtained through switching vertices. 
We prove that these two types of edge modifications affect the achromatic number of signed graphs quite similarly as in the unsigned context.
We start off by considering sign modifications.

\begin{theorem} \label{theorem:change-edge-sign}
Let $(G,\sigma)$ be a signed graph, and let $(G,\sigma')$ be a signed graph obtained from $(G,\sigma)$ by changing the sign of a single edge.
Then, we have $$\psi(G,\sigma) - 2 \le \psi(G,\sigma') \leq \psi(G,\sigma)+2.$$
Furthermore, there are contexts where both bounds can be attained.
Particularly, for every $\psi(G,\sigma) \geq 4$, both the lower bound and the upper bound can be attained.
\end{theorem}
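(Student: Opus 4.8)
The plan is to prove the two inequalities together by exploiting a symmetry, and then to obtain both tightness statements from families where a single flip changes $\psi$ by exactly $2$. Observe first that passing from $(G,\sigma)$ to $(G,\sigma')$ by flipping the sign of $e$ is an involution: flipping $e$ again recovers $(G,\sigma)$. Hence it suffices to prove the one-sided statement \emph{flipping the sign of one edge decreases the achromatic number by at most $2$}; applying it to $(G,\sigma)\to(G,\sigma')$ yields $\psi(G,\sigma')\ge\psi(G,\sigma)-2$, and applying it to $(G,\sigma')\to(G,\sigma)$ yields $\psi(G,\sigma)\ge\psi(G,\sigma')-2$, i.e.\ $\psi(G,\sigma')\le\psi(G,\sigma)+2$. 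To prove the one-sided statement, set $k=\psi(G,\sigma)$ and pick $(G,\sigma_S)$ equivalent to $(G,\sigma)$ admitting a complete $k$-colouring $\phi$; applying the same switching set $S$ to $\sigma'$ gives $(G,\sigma'_S)$, which differs from $(G,\sigma_S)$ only on $e$. Let $x$ be an endpoint of $e$ and remove the colour class of $\phi(x)$: by Observation~\ref{observation:remove-class} the restriction of $\phi$ is a complete $(k-2)$- or $(k-1)$-colouring of the resulting signed graph $H_0$, and since $x$ (hence $e$) has been deleted, $H_0$ is a common signed subgraph of $(G,\sigma_S)$ and $(G,\sigma'_S)$. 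I would then re-introduce the deleted vertices, including $x$, one at a time into $(G,\sigma'_S)$, using the single-vertex extension from the proof of Theorem~\ref{theorem:removing-vertex-signed} (which adds one vertex to a completely coloured signed graph without decreasing the number of colours), ending with a complete colouring of a graph equivalent to $(G,\sigma'_S)$ using at least $k-2$ colours. This gives $\psi(G,\sigma')\ge k-2$ and completes the bounds.

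For tightness, the same involution shows that any example in which one flip changes $\psi$ by exactly $2$, say from $M$ to $M-2$, simultaneously attests that the lower bound is attained at the value $M$ and that the upper bound is attained at the value $M-2$ (flip back to gain $+2$). For even target values I would use cycles together with Theorem~\ref{theorem:cn}: take $n=m_{k_0}$ with $k_0$ even and a signature $\sigma$ making $(C_n,\sigma)$ share the balance of $K_{k_0}^*$, so that $\psi(C_n,\sigma)=k_0$; flipping any edge reverses the balance, and the case ``$n=m_{k_0}$, $k_0$ even, different balance'' of Theorem~\ref{theorem:cn} gives $\psi(C_n,\sigma')=k_0-2$. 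Letting $k_0$ range over the even integers at least $4$ and reading each pair in both directions covers the lower and the upper bound at every even value at least $4$.

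The odd case is where a new idea is needed, since a single cycle can never realise a drop of $2$ between two odd values (Theorem~\ref{theorem:cn} produces drops of $2$ only between two even values). My plan is to augment an even cycle $(C_n,\sigma)$ with $n=m_{k_0}$, $k_0$ even, by a gadget introducing the colour $\pm 0$: attach vertices joined by positive and by negative edges to representatives of each nonzero colour class of the Eulerian colouring, so that exactly $2p$ edges are added (with $k_0=2p$), the total size becoming $m_{k_0+1}=p^2+2p$. One then gets $\psi=k_0+1$, the matching upper bound coming from Proposition~\ref{proposition:upper-bound-size} since the size is $m_{k_0+1}<(p+1)^2$. The intended conclusion is that flipping the same cycle edge drops $\psi$ to $k_0-1$, yielding the desired odd-to-odd drop of $2$.

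The main obstacle is precisely this last upper bound. Because flipping an edge preserves both the edge count and the matching number, neither Proposition~\ref{proposition:upper-bound-size} nor Observation~\ref{observation:large-matching} can witness the decrease: it must be forced by balance alone. The vertex-removal bound of Theorem~\ref{theorem:removing-vertex-signed}, applied to the gadget vertices, rules out the top value $k_0+1$, so the real difficulty is to exclude the intermediate \emph{even} value $k_0$, where the gadget edges threaten to ``rescue'' the type that the now-unbalanced cycle can no longer realise (by Theorem~\ref{theorem:cn}). Making this work requires engineering the gadget's incidences so that no $k_0$-colouring survives: I would restrict any putative complete $k_0$-colouring to the cycle, use that it necessarily misses at least one required type (wrong balance), and combine this with a counting argument showing the gadget's limited edges, all meeting a single colour class, cannot repair the deficiency. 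I expect designing this odd-case gadget and verifying the post-flip upper bound to be the technical heart of the proof.
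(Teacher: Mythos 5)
Your bounds argument is sound, and it is essentially the paper's: since $(G-x,\sigma)$ and $(G-x,\sigma')$ are the same signed graph for an endpoint $x$ of $e$, the two-sided bound follows from the vertex-removal theorem, and your class-deletion-plus-reinsertion argument is just an unfolding of the proof of Theorem~\ref{theorem:removing-vertex-signed} (the paper instead applies that theorem as a black box: $\psi(G,\sigma')-2\le\psi(G-x,\sigma')=\psi(G-x,\sigma)\le\psi(G,\sigma)$, and symmetrically). Your involution reading of tightness --- one flip that drops $\psi$ from $M$ to $M-2$ witnesses both bounds, at the values $M$ and $M-2$ respectively --- is also how the paper proceeds. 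Your even-parity family is correct and genuinely different from the paper's: a cycle of length $m_{k_0}$ with $k_0$ even and balance agreeing with $K_{k_0}^*$ has $\psi=k_0$ by Theorem~\ref{theorem:cn}, and flipping any edge reverses the balance and lands in the case giving $\psi=k_0-2$. This buys you the even case with no new construction, at the cost of importing the full strength of Theorem~\ref{theorem:cn}.

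The genuine gap is the odd case, which you leave as an engineering plan rather than a proof. The decisive step --- showing that after the flip the augmented cycle admits no complete $k_0$-colouring, where $k_0$ is the intermediate \emph{even} value --- is exactly what is missing, and, as you yourself observe, none of the available tools can supply it: the flip preserves the edge count and the matching number, so Proposition~\ref{proposition:upper-bound-size} and Observation~\ref{observation:large-matching} are blind to it, and vertex removal applied to $g$ gadget vertices only yields $\psi\le(k_0-2)+2g$, which fails to exclude even $k_0+1$ unless $g=1$ and never excludes $k_0$. Worse, the Eulerian-trail correspondence that forces the balance obstruction on a bare cycle breaks once gadget edges are attached, so it is not even clear your intended exclusion is true for an arbitrary gadget. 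The paper sidesteps all of this with a bootstrap that treats both parities uniformly: for odd $k=2p+1$ it starts from $(K_{k+1}-e,-)$, which has $\psi=3$ by Theorem~\ref{theorem:kn-negative}, turns $p-1$ matching edges positive (with one switch) and exhibits an explicit complete $k$-colouring using $\pm 0$ on the two ends of the removed edge; flipping one of these edges back yields a signed graph obtainable from $(K_{k+1}-e,-)$ by only $p-2$ flips, so iterating the theorem's own upper bound gives $\psi\le 3+2(p-2)=k-2$. This ``iterate the bound from a clique base case'' trick is what makes the odd case cheap; to complete your proof you would either have to carry out the delicate gadget analysis you defer, or adopt a construction of this kind.
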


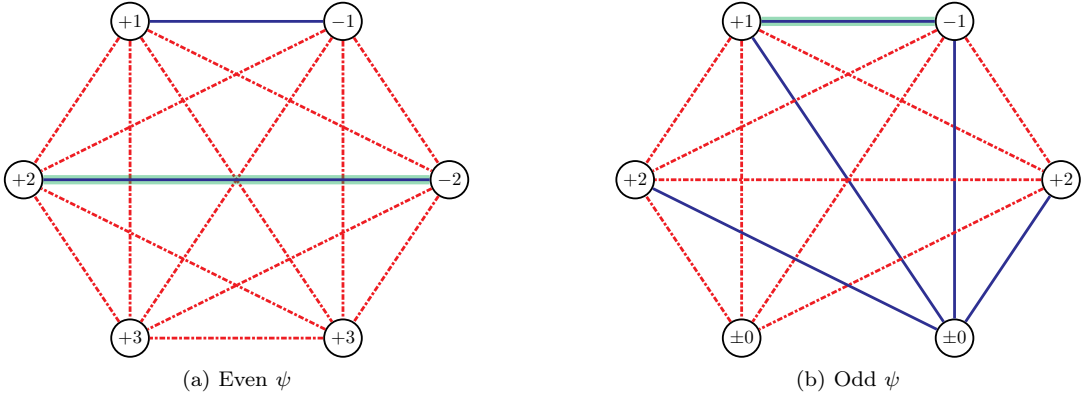
\begin{figure}[!t]
 	\centering
 	\subfloat[Even $\psi$]{
    \scalebox{0.7}{
	\begin{tikzpicture}[inner sep=0.7mm]	
	
	\node[draw, circle, black, line width=1pt](u1) at (0,0)[]{$+1$};
	\node[draw, circle, black, line width=1pt](u2) at (4,0)[]{$-1$};
	\node[draw, circle, black, line width=1pt](u3) at (6,-3)[]{$-2$};
	\node[draw, circle, black, line width=1pt](u4) at (4,-6)[]{$+3$};
	\node[draw, circle, black, line width=1pt](u5) at (0,-6)[]{$+3$};
	\node[draw, circle, black, line width=1pt](u6) at (-2,-3)[]{$+2$};
	
    \draw [-, line width=1.5pt, Red, densely dashdotted] (u1) -- (u3);
    \draw [-, line width=1.5pt, Red, densely dashdotted] (u1) -- (u4);
    \draw [-, line width=1.5pt, Red, densely dashdotted] (u1) -- (u5);
    \draw [-, line width=1.5pt, Red, densely dashdotted] (u1) -- (u6);
    \draw [-, line width=1.5pt, Red, densely dashdotted] (u2) -- (u3);
    \draw [-, line width=1.5pt, Red, densely dashdotted] (u2) -- (u4);
    \draw [-, line width=1.5pt, Red, densely dashdotted] (u2) -- (u5);
    \draw [-, line width=1.5pt, Red, densely dashdotted] (u2) -- (u6);
    \draw [-, line width=1.5pt, Red, densely dashdotted] (u3) -- (u4);
    \draw [-, line width=1.5pt, Red, densely dashdotted] (u3) -- (u5);
    \draw [-, line width=1.5pt, Red, densely dashdotted] (u4) -- (u5);
    \draw [-, line width=1.5pt, Red, densely dashdotted] (u4) -- (u6);
    \draw [-, line width=1.5pt, Red, densely dashdotted] (u5) -- (u6);
    
    \draw [-, line width=1.5pt, Blue] (u1) -- (u2);
    \draw [-, line width=5pt, Green, opacity=0.4] (u3) -- (u6);
    \draw [-, line width=1.5pt, Blue] (u3) -- (u6);
	\end{tikzpicture}
    }
    }
    \hspace{40pt}
 	\subfloat[Odd $\psi$]{
    \scalebox{0.7}{
	\begin{tikzpicture}[inner sep=0.7mm]	
	
	\node[draw, circle, black, line width=1pt](u1) at (0,0)[]{$+1$};
	\node[draw, circle, black, line width=1pt](u2) at (4,0)[]{$-1$};
	\node[draw, circle, black, line width=1pt](u3) at (6,-3)[]{$+2$};
	\node[draw, circle, black, line width=1pt](u4) at (4,-6)[]{$\pm 0$};
	\node[draw, circle, black, line width=1pt](u5) at (0,-6)[]{$\pm 0$};
	\node[draw, circle, black, line width=1pt](u6) at (-2,-3)[]{$+2$};
	
    \draw [-, line width=1.5pt, Red, densely dashdotted] (u1) -- (u3);
    \draw [-, line width=1.5pt, Blue] (u1) -- (u4);
    \draw [-, line width=1.5pt, Red, densely dashdotted] (u1) -- (u5);
    \draw [-, line width=1.5pt, Red, densely dashdotted] (u1) -- (u6);
    \draw [-, line width=1.5pt, Red, densely dashdotted] (u2) -- (u3);
    \draw [-, line width=1.5pt, Blue] (u2) -- (u4);
    \draw [-, line width=1.5pt, Red, densely dashdotted] (u2) -- (u5);
    \draw [-, line width=1.5pt, Red, densely dashdotted] (u2) -- (u6);
    \draw [-, line width=1.5pt, Blue] (u3) -- (u4);
    \draw [-, line width=1.5pt, Red, densely dashdotted] (u3) -- (u5);
    \draw [-, line width=1.5pt, Red, densely dashdotted] (u3) -- (u6);
    \draw [-, line width=1.5pt, Blue] (u4) -- (u6);
    \draw [-, line width=1.5pt, Red, densely dashdotted] (u5) -- (u6);
    
    \draw [-, line width=5pt, Green, opacity=0.4] (u1) -- (u2);
    \draw [-, line width=1.5pt, Blue] (u1) -- (u2);
	\end{tikzpicture}
    }
    }

\caption{Signed graphs attesting of the tightness of the lower bound in Theorem~\ref{theorem:change-edge-sign}.
Edges highlighted in green are edges which, when having their sign changed, make the achromatic number decrease by~$2$.
In (a) is also depicted a complete $6$-colouring, and in (b) a complete $5$-colouring.
Dashed red edges are negative edges, while solid blue edges are positive edges.  
\label{figure:resign-edge}}
\end{figure}

\begin{proof}
The claimed bounds can be established through applying Theorem~\ref{theorem:removing-vertex-signed} twice.
Indeed, denote by $v$ any one end of $e$.
By Theorem~\ref{theorem:removing-vertex-signed}, we have $\psi(G,\sigma')-2 \leq \psi(G-v,\sigma')$,
while, by Corollary~\ref{corollary:achromatic-signed-monotonous-subgraphs}, we have $\psi(G-v,\sigma')=\psi(G-v,\sigma) \leq \psi(G,\sigma)$, 
from which we deduce the claimed upper bound.
The lower bound follows from the same operations by inverting $\si$ and $\si'$.

We now prove the tightness of the lower bound (the upper bound following similarly).
That is, for every $k \geq 4$, we give a signed graph $(G,\sigma)$ such that $\psi(G,\sigma)=k$,
and $(G,\sigma)$ has an edge which, when changing its sign, results in another signed graph $(G,\sigma')$ with $\psi(G,\sigma')=k-2$.
We give two possible constructions, depicted in Figure~\ref{figure:resign-edge}, depending on the parity of $k$.

\begin{itemize}
    \item We first provide a construction for even values of $k \ge 4$. Let $p = k/2$. Consider the signed graph $(G, \sigma)$ constructed as follows. We start from a signed complete graph $(K,-)$ on $k$ vertices in which all edges are negative. Consider now a matching $M = \{u_1v_1, \dots, u_{p-1}v_{p-1}\}$ of size $p-1$ in $(K,-)$. To obtain $(G,\si)$, we just turn to positive the $p-1$ edges of $M$, and let the other edges of $(K,-)$ negative.
    
    We first prove  that $\psi(G,\sigma)=k$.
    Let us denote by $u_p$ and $v_p$ the two vertices of $G$ that are not covered by $M$. 
    Since $(G, \si)$ can be obtained from a negative complete graph, which has achromatic number $2$ by Theorem~\ref{theorem:kn-negative}, by changing the sign of $p-1$ of its edges, 
    due to the upper bound of the current theorem we have proved earlier we deduce that $\psi(G, \si) \le 2p=k$. 
    Now, consider the colouring $\phi$ of $(G,\sigma)$ assigning colour $+i$ to every vertex $u_i$ and  colour $-i$ to every vertex $v_i$ for $1 \le i \le p-1$, 
    and colour $+p$ to both $u_p$ and $v_p$. This colouring is a complete $k$-colouring of $(G, \si)$, as, for every $1 \le i < j \le p$, the edge $v_iu_j$ is a p-edge of type $(i,j)$ and the edge $u_iu_j$ is an n-edge of type $(i,j)$, and, for every $1 \le i \le p$, the edge $u_iv_i$ is an n-edge of type $(i,i)$ (in particular, no edge is a p-edge of type $(i,i)$).
	Thus, $\psi(G,\sigma)=k$ due to the upper bound proved earlier.
	
	Finally, we note that for $(G,\sigma')$, the signed graph obtained from $(G,\sigma)$ by changing the sign of $u_1v_1$,
	the upper bound we have proved earlier gives that $\psi(G, \si') \le k-2$,
	since $(G,\sigma')$ can be obtained from a negative complete graph, which has achromatic number $2$ by Theorem~\ref{theorem:kn-negative}, 
	by changing the sign of $p-2$ of its edges.
	Thus, we obtain our desired conclusion.

    \item We now consider odd values of $k \geq 5$, the construction we provide being almost the same as in the even case.
    Let $p = \lfloor k/2 \rfloor$.
    We obtain $(G,\sigma)$ as follows.
    We start from $(K,-)$, the negative complete graph on $k+1$ vertices, in which we consider a matching $M = \{u_0v_0, \dots, u_{p-1}v_{p-1}\}$ of size $p$.
    We denote by $u_p$ and $v_p$ the two vertices of $(K,-)$ that are not in $M$. 
    To obtain $(G, \si)$, we remove $u_0v_0$ from $(K,-)$, and turn to positive all other edges of $M$ (and keeping all other edges negative). 
    
    We claim that $\psi(G,\sigma)=k$.
    To see this is true, observe first that $(G,\sigma)$ can be obtained from a negative complete graph with an edge removed, 
    which has achromatic number $3$ by Theorem~\ref{theorem:kn-negative}, 
    by changing the sign of $p-1$ edges. 
    By the upper bound we have proved earlier, we get that $\psi(G, \si) \le 3 + 2(p-1) = 2p+1 = k$. 
    Now, to obtain a complete $k$-colouring of $(G,\sigma)$, or, rather, of an equivalent signed graph $(G,\sigma')$, one can first switch $u_0$, and then consider the colouring of $(G,\sigma')$ assigning colour $\pm 0$ to both $u_0$ and $v_0$, colour $+i$ to $u_i$ and colour $-i$ to $v_i$ for every $1 \le i \le p-1$, 
    and colour $+p$ to both $u_p$ and $v_p$.
    This is indeed a complete $k$-colouring for the same reasons as in the previous case for even values of $k$, and, also, because of all the edges incident to $u_p$ and $v_p$ (which provide all required n-edges and p-edges of type $(0,i)$ for $i \in \{1,\dots,p-1\}$), and the edge $u_0v_0$ is not part of $G$ here (thus there are neither n-edges nor p-edges of type $(0,0)$).
    
    Finally, note that $(G,\sigma')$, the signed graph obtained from $(G,\sigma)$ by changing the sign of $u_1v_1$,
    is obtained from a negative complete graph with an edge removed by changing the sign of exactly $p-2$ edges.
    By arguments used earlier, and in particular by the upper bound proved earlier, we deduce that $\psi(G, \si') \le k-2$.
    This concludes the whole proof.\qedhere
\end{itemize}
\end{proof}

Let us mention that our proof of Theorem~\ref{theorem:change-edge-sign} can be used to prove a similar result for the second type of edge modifications mentioned earlier.
Precisely, it can be used to prove that removing an edge in a signed graph can, as a consequence, alter the achromatic number by $\pm 2$ at most.
The fact that this can be proved in this way is because, in a signed graph,
removing or changing the sign of a given edge is, in essence, similar to removing a vertex and putting it back with different adjacencies and incidences.

\begin{theorem}\label{theorem:removing-edge-signed}
Let $(G, \sigma)$ be a signed graph.
For every $e \in E(G)$, we have
$$ \psi(G,\sigma) -2 \leq \psi(G-e,\sigma) \leq \psi(G,\sigma) +2.$$
Furthermore, there are contexts where both bounds can be attained.
Particularly, the lower bound can be attained for every $\psi(G,\sigma) \geq 5$, and the upper bound can be attained for every $\psi(G,\sigma) \geq 6$.
\end{theorem}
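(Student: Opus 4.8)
The plan is to obtain the two inequalities exactly as in the proof of Theorem~\ref{theorem:change-edge-sign}, by applying Theorem~\ref{theorem:removing-vertex-signed} twice and exploiting the elementary observation that, if $v$ is an endpoint of $e$, then $(G-e)-v = G-v$ as signed graphs (removing $v$ deletes $e$ anyway). For the upper bound I would chain $\psi(G-e,\sigma)-2 \le \psi((G-e)-v,\sigma) = \psi(G-v,\sigma) \le \psi(G,\sigma)$, the first inequality being the lower bound of Theorem~\ref{theorem:removing-vertex-signed} applied to $(G-e,\sigma)$ and the last being its upper bound applied to $(G,\sigma)$, which yields $\psi(G-e,\sigma) \le \psi(G,\sigma)+2$. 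For the lower bound I would instead chain $\psi(G,\sigma)-2 \le \psi(G-v,\sigma) = \psi((G-e)-v,\sigma) \le \psi(G-e,\sigma)$, again using Theorem~\ref{theorem:removing-vertex-signed} for the two outer inequalities. This matches the remark preceding the statement, namely that deleting an edge amounts to removing a vertex and reinserting it with altered incidences.

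The heart of the proof is the tightness, and here the single-edge setting is genuinely more delicate than the vertex or sign-change settings: deleting one edge changes $|E(G)|$ by exactly $1$, and since $m_{2p}-m_{2p-1}=1$ the size bound of Proposition~\ref{proposition:upper-bound-size} can, on its own, only ever push $\psi$ down by one unit. To force a change of exactly $2$ I must therefore work with \emph{edge-minimal} signed graphs, having precisely $m_k$ edges for the target value $k=\psi(G,\sigma)$, so that the deleted edge sits right at a threshold, and then rule out by hand the ``wrong-parity'' intermediate value $k-1$.

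For the lower bound (decrease by $2$) I would split on the parity of $k$. When $k=2p+1$ is odd the argument is cleanest: I would build a signed graph with $\psi(G,\sigma)=2p+1$ whose maximum matching has size exactly $p$ (the n-edges of type $(i,i)$ forcing $\alpha(G)\ge p$ by Observation~\ref{observation:large-matching}), in which the chosen edge $e$ is matching-critical, so that $\alpha(G-e)=p-1$; then Observation~\ref{observation:large-matching} gives $\psi(G-e,\sigma)\le 2p-1=k-2$, which the already-proved lower bound pins to exactly $k-2$. When $k=2p$ is even the matching bound cannot reach $2p-2$, so I would take an edge-minimal $(G,\sigma)$ with $p^2=m_{2p}$ edges and $\psi=2p$; deleting $e$ leaves $m_{2p-1}=p^2-1$ edges, whence Proposition~\ref{proposition:upper-bound-size} gives $\psi(G-e,\sigma)\le 2p-1$, and the remaining task is to exclude the value $2p-1$. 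Since a complete $(2p-1)$-colouring of $G-e$ would have to realise every edge-type of $K_{2p-1}^*$ exactly once, I would design $G$ so that this exact realisation becomes impossible once $e$ is removed — typically by arranging that colour $\pm 0$ can no longer be placed so as to produce both a p-edge and an n-edge of type $(0,i)$ for some $i$, in the spirit of the all-negative obstruction of Theorem~\ref{theorem:kn-negative} — leaving $\psi(G-e,\sigma)\le 2p-2$ and hence $=k-2$.

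For the upper bound (increase by $2$) I would reverse the point of view and construct $H:=G-e$ directly: an edge-minimal signed graph with $\psi(H,\sigma)=k+2$ admitting a complete $(k+2)$-colouring in which two \emph{non-adjacent} vertices $u,v$ lie in the same colour class (both coloured $+i$, or both $\pm 0$ when $k$ is odd). Setting $G:=H+uv$ and giving the new edge the sign that would make $u,v$ a forbidden configuration (a p-edge of type $(i,i)$, or any edge of type $(0,0)$), the colouring witnessing $\psi(H)=k+2$ becomes invalid in $G$; using edge-minimality to argue that $G$ admits no complete $(k+2)$- nor $(k+1)$-colouring then gives $\psi(G,\sigma)\le k$, while $\psi(G,\sigma)\ge \psi(G-e,\sigma)-2 = k$ comes for free from the bound, so $\psi(G,\sigma)=k$ and $\psi(G-e,\sigma)=\psi(G,\sigma)+2$. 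I expect the main obstacle throughout to be exactly these parity-exclusion steps — showing that the edge-minimal graphs cannot be completely coloured with the intermediate number of colours — together with producing the required extremal configurations separately for each parity, which is presumably what forces the slightly different thresholds ($\psi\ge 5$ for the lower bound, $\psi\ge 6$ for the upper bound) below which the small cases degenerate.
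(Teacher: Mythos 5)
Your derivation of the two inequalities is exactly the paper's: both chain the lower and upper bounds of Theorem~\ref{theorem:removing-vertex-signed} through an endpoint $v$ of $e$, using $(G-e)-v=G-v$. Your even-case plan for the lower bound also matches the paper's actual construction in spirit and in detail: the paper takes the disjoint union of a negative $K_p$ and a positive $K_p$ joined by a negative perfect matching, which has precisely $m_{2p}=p^2$ edges (edge-minimal, as you predicted), gets $\psi(G-e,\sigma)\le 2p-1$ from Proposition~\ref{proposition:upper-bound-size}, and excludes $2p-1$ by hand via exactly the $\pm 0$-placement obstruction you anticipate. For the odd lower-bound case you diverge: the paper does not use matching-criticality but instead builds $(G,\sigma)$ from an all-negative $K_{2p+2}$ with a matching partially resigned, and caps $\psi(G-e,\sigma)\le 3+2(p-2)=k-2$ by combining Theorem~\ref{theorem:kn-negative} with $p-2$ applications of Theorem~\ref{theorem:change-edge-sign}. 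Your alternative (an edge lying in every maximum matching, so that $\nu$ drops and Observation~\ref{observation:large-matching} gives $\psi\le 2p-1$ directly) is arithmetically sound and arguably cleaner, but you never exhibit a signed graph with $\psi=2p+1$, matching number exactly $p$, and a matching-critical edge; that existence is the entire difficulty, and it is not supplied by any result in the paper.

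The genuine gap is in your upper-bound tightness plan, where the claimed mechanism would fail. First, ``edge-minimality'' cannot rule out complete $(k{+}2)$- or $(k{+}1)$-colourings of $G=H+uv$: since $H$ has $m_{k+2}$ edges, $G$ has $m_{k+2}+1>m_{k+2}>m_{k+1}$ edges, so Proposition~\ref{proposition:upper-bound-size} is silent, and adding one ``forbidden'' edge only invalidates the single exhibited colouring, not every complete colouring of every switching-equivalent signature; the exclusion must come from a bespoke sign-structure argument, which you do not sketch. Second, your accounting of which values must be excluded is incomplete: from $\psi(G,\sigma)-2\le\psi(G-e,\sigma)=k+2$ you only get $\psi(G,\sigma)\le k+4$, and since interpolation fails for signed graphs (the paper's own Theorem~\ref{theorem:no-interpolation-signed}), ruling out complete $(k{+}1)$- and $(k{+}2)$-colourings does not cap $\psi(G,\sigma)$ at $k$; indeed when $k+2$ is odd, $|E(G)|=m_{k+2}+1=m_{k+3}$ exactly, so even the size bound does not kill $k+3$. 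The paper avoids all of this by arguing in the opposite logical direction with non-edge-minimal extremal graphs: in the even case a recursive construction with two universal all-negatively-joined vertices $k_1,k_2$, whose colours are forced equal via Lemma~\ref{lemma:inferred-force-colours} and inferred colourings, and in the odd case the all-negative $K_{p,2p+1}$ minus a $(p-2)$-matching (which has $2p^2+2\gg m_{2p+1}$ edges), where the all-negative signature makes the required p-edge of type $(1,p)$ unrealisable; the ceiling $\psi(G,\sigma)\le k$ is established directly by these structural obstructions, and the lower bound of the theorem then pins $\psi(G-e,\sigma)=k+2$. So your proposal is correct and essentially the paper's on the inequalities and the even lower-bound case, plausible but unconstructed on the odd lower-bound case, and not repairable as stated on the upper-bound case without replacing the edge-minimality mechanism by genuine signature-based exclusion arguments.
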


\begin{figure}[!t]
 	\centering
 	\subfloat[Even $\psi$]{
    \scalebox{0.7}{
	\begin{tikzpicture}[inner sep=0.7mm]	
	
	\node[draw, circle, black, line width=1pt](u1) at (0,0)[]{$+1$};
	\node[draw, circle, black, line width=1pt](u2) at (0,-3)[]{$+2$};
	\node[draw, circle, black, line width=1pt](u3) at (0,-6)[]{$+3$};
	\node[draw, circle, black, line width=1pt](v1) at (3,0)[]{$+1$};
	\node[draw, circle, black, line width=1pt](v2) at (3,-3)[]{$+2$};
	\node[draw, circle, black, line width=1pt](v3) at (3,-6)[]{$+3$};
	
    \draw [-, line width=1.5pt, Red, densely dashdotted] (u1) -- (v1);
    \draw [-, line width=1.5pt, Red, densely dashdotted] (u2) -- (v2);
    \draw [-, line width=1.5pt, Red, densely dashdotted] (u3) -- (v3);
    
    \draw [-, line width=1.5pt, Red, densely dashdotted] (u1) -- (u2);
    \draw [-, line width=1.5pt, Red, densely dashdotted] (u2) -- (u3);
    
    \draw [-, line width=1.5pt, Blue] (v1) -- (v2);
    \draw [-, line width=1.5pt, Blue] (v2) -- (v3);
    
    \draw [-, line width=1.5pt, Red, densely dashdotted] (u1) to[out=180,in=180,bend right=60] (u3);
    \draw [-, line width=5pt, Green, opacity=0.4] (v1) to[out=0,in=0,bend left=60] (v3);
    \draw [-, line width=1.5pt, Blue] (v1) to[out=0,in=0,bend left=60] (v3);
	\end{tikzpicture}
    }
    }
    \hspace{40pt}
 	\subfloat[Odd $\psi$]{
    \scalebox{0.7}{
	\begin{tikzpicture}[inner sep=0.7mm]	
	
	\node[draw, circle, black, line width=1pt](u1) at (0,0)[]{$+1$};
	\node[draw, circle, black, line width=1pt](u2) at (4,0)[]{$+2$};
	\node[draw, circle, black, line width=1pt](u3) at (6,-3)[]{$\pm 0$};
	\node[draw, circle, black, line width=1pt](u4) at (4,-6)[]{$\pm 0$};
	\node[draw, circle, black, line width=1pt](u5) at (0,-6)[]{$-2$};
	\node[draw, circle, black, line width=1pt](u6) at (-2,-3)[]{$+1$};
	
    \draw [-, line width=1.5pt, Red, densely dashdotted] (u1) -- (u2);
    \draw [-, line width=1.5pt, Red, densely dashdotted] (u1) -- (u3);
    \draw [-, line width=1.5pt, Red, densely dashdotted] (u1) -- (u5);
    \draw [-, line width=1.5pt, Red, densely dashdotted] (u1) -- (u6);
    \draw [-, line width=1.5pt, Red, densely dashdotted] (u2) -- (u3);
    \draw [-, line width=1.5pt, Red, densely dashdotted] (u2) -- (u6);
    \draw [-, line width=1.5pt, Red, densely dashdotted] (u3) -- (u5);
    \draw [-, line width=1.5pt, Red, densely dashdotted] (u3) -- (u6);
    \draw [-, line width=1.5pt, Red, densely dashdotted] (u5) -- (u6);
    
    \draw [-, line width=1.5pt, Blue] (u1) -- (u4);
    \draw [-, line width=1.5pt, Blue] (u2) -- (u4);
    \draw [-, line width=1.5pt, Blue] (u5) -- (u4);
    \draw [-, line width=1.5pt, Blue] (u6) -- (u4);
    
    \draw [-, line width=5pt, Green, opacity=0.4] (u2) -- (u5);
    \draw [-, line width=1.5pt, Blue] (u2) -- (u5);
	\end{tikzpicture}
    }
    }

\caption{Signed graphs attesting of the tightness of the lower bound in Theorem~\ref{theorem:removing-edge-signed}.
Edges highlighted in green are edges which, when removed, make the achromatic number decrease by~$2$.
In (a) is also depicted a complete $6$-colouring, and in (b) a complete $5$-colouring.
Dashed red edges are negative edges, while solid blue edges are positive edges.  
\label{figure:remove-edge-lower-bound}}
\end{figure}
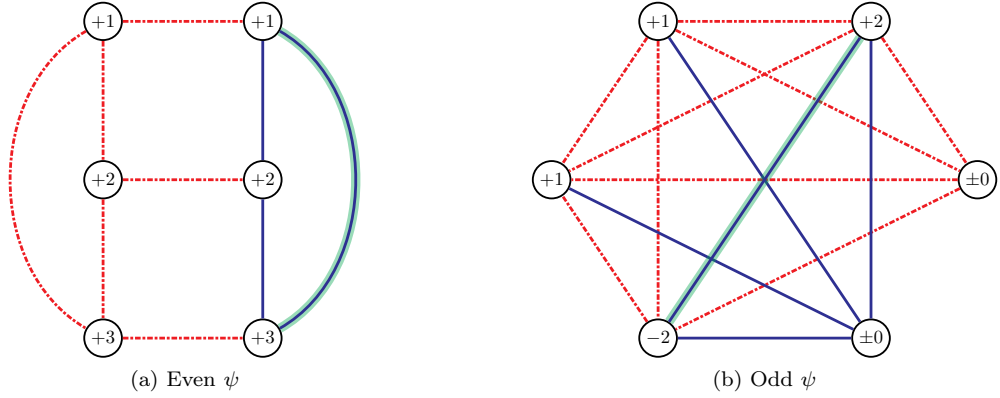

\begin{proof}
The bounds can be proved similarly as in Theorem~\ref{theorem:change-edge-sign}. 
Indeed, denote by $v$ any one end of $e$.
By Theorem~\ref{theorem:removing-vertex-signed}, we have $\psi(G-e,\sigma)-2 \leq \psi(G-e-v,\sigma)$,
while, by Corollary~\ref{corollary:achromatic-signed-monotonous-subgraphs}, we have $\psi(G-e-v,\sigma)=\psi(G-v,\sigma) \leq \psi(G,\sigma)$, 
from which we deduce the claimed upper bound.
The lower bound follows from the fact, still by Theorem~\ref{theorem:removing-vertex-signed}, 
that $\psi(G,\sigma)-2 \leq \psi(G-v,\sigma)=\psi(G-e-v,\sigma) \leq \psi(G-e,\sigma)$.

We now focus on proving the tightness of the bounds.
We start off by considering the lower bound.
For every $k \geq 5$, we provide a signed graph $(G,\sigma)$ such that $\psi(G,\sigma)=k$, and which has an edge that can be removed to decrease the achromatic number by exactly~$2$.
We give two possible constructions, depicted in Figure~\ref{figure:remove-edge-lower-bound}, applying depending on the parity of $k$.

\begin{itemize}
	\item We first provide a construction for even values of $k \geq 6$.
	Set $p=k/2$.
	Note that $p \geq 3$.
	Consider the signed graph $(G,\sigma)$ constructed as follows.
	We start from the disjoint union of the negative complete graph $K^-=(K_p,-)$ on $p$ vertices,
	and of the positive complete graph $K^+=(K_p,+)$ on $p$ vertices.
	Set $V(K^-)=\{u_1,\dots,u_p\}$ and $V(K^+)=\{v_1,\dots,v_p\}$.
	We then add the negative edge $u_iv_i$ for every $i \in \{1,\dots,p\}$,
	to achieve the construction of $(G,\sigma)$.
	
	We claim that $\psi(G,\sigma)=k$, while $\psi(G-v_1v_p,\sigma)=k-2$.
	To see that $\psi(G,\sigma)=k$, note that assigning colour~$+i$ to vertices $u_i$ and $v_i$ for every $i \in \{1,\dots,p\}$
	results in a complete $k$-colouring (the $u_iv_i$'s yielding all n-edges of type $(i,i)$, the $u_iu_j$'s yielding all n-edges of type $(i,j)$ for $i \neq j$, and the $v_iv_j$'s providing all p-edges of type $(i,j)$ for $i \neq j$, while there are no p-edges of type $(i,i)$).
	This is indeed the best we can hope for, as attested by Proposition~\ref{proposition:upper-bound-n}.
	So let us now focus on $\psi(G',\sigma)$, where $G'=G-v_1v_p$.
	By Proposition~\ref{proposition:upper-bound-size}, note that $\psi(G',\sigma) \leq k-1$.
	Due to the lower bound we have proved earlier, it suffices to show that $(G',\sigma)$ admits no complete $(k-1)$-colouring.
	Towards a contradiction, assume %$(G',\sigma)$
	vertices of $(G',\sigma)$ can be switched, so that the resulting signed graph,
	which we still call $(G',\sigma)$ below, does 
	admit such a colouring $\phi$.
	
	Since the maximum degree of $G'$ is precisely~$p$, note that, 
	in order to have a p-edge of type $(0,i)$ and an n-edge of type $(0,i)$ for every $i \in \{1,\dots,p-1\}$,
	colour~$\pm 0$ must be assigned to at least two vertices of $(G',\sigma)$ by $\phi$.
	Actually, due to Observation~\ref{observation:large-matching}, note that, for every $i \in V(K_{k-1}^*)$, there must be exactly two vertices of $(G',\sigma)$ with colour $\pm i$, and these two vertices must be adjacent if $i \neq 0$.
	Because there should not be a p-edge or an n-edge of type $(0,0)$,
	at least one of the two vertices with colour $\pm 0$ must belong to $K^+$.
	If these two vertices assigned colour~$\pm 0$ belong to $K^+$, then they must not be adjacent, and they are thus $v_1$ and $v_p$.
	We note, however, that the total number of neighbours of $v_1$ and $v_p$ in $G'$ is at most $p$, while, in $K^*_{k-1}=K^*_{2p-1}$, vertex $0$ has degree $2p-2$ (since $p \geq 3$). Thus, $\phi$ cannot realise all desired n-edges and p-edges of type $(0,i)$ for $i \neq 0$, a contradiction.
	%Then, because $K^+$ has $p$ vertices, there must be an $i \in \{1,\dots,p-1\}$ such that no vertex of $V(K^+)$ is assigned colour $\pm i$.
	%W.l.o.g., assume $i=1$.	
	%So that $(G',\sigma)$ has a p-edge of type $(0,1)$ and an n-edge of type $(0,1)$,
	%the vertices $u_1$ and $u_p$ must be assigned colour~$\pm 1$ by $\phi$. 
	%Furthermore, due to the edge $u_1u_p$ in $(G',\sigma)$ being the only n-edge of type %$(1,1)$, either $\phi(u_1)=\phi(u_p)$ and $u_1u_p$ is negative,
	%or $\phi(u_1)=-\phi(u_p)$ and $u_1u_p$ is positive.	
	%Now, because no vertex of $V(K^+)$ is assigned colour~$\pm 1$,
	%so that there is a p-edge and an n-edge of type $(1,i)$ for all $i \in \{2,\dots,p-1\}$,
	%all vertices in $V(K^-) \setminus \{u_1,u_p\}$ must be assigned colours with distinct %absolute values in $\{2,\dots, p-1\}$.
	%Assuming, now, for instance, that $\phi(u_2)=\pm 2$, so that $(G',\sigma)$ has a p-edge and an n-edge of type $(1,2)$,
	%either $u_1u_p$ is negative and $u_1u_2$ and $u_pu_2$ have different signs (case where %$\phi(u_1)=\phi(u_p)$),
	%or $u_1u_p$ is positive and $u_1u_2$ and $u_pu_2$ have the same sign (case where %$\phi(u_1)=-\phi(u_p)$).
	%In all cases, we deduce that the cycle $(u_1,u_2,u_p,u_1)$ has even balance, while, before switching vertices in $(G',\sigma)$, it had odd balance;
	%this is impossible by Lemma~\ref{lemma:switching-balance}.
	
	So, assume now that only one vertex of $K^+$ is assigned colour~$\pm 0$ by $\phi$,
	while a non-adjacent vertex $u_{i_0}$ in $K^-$ is the second vertex assigned colour~$\pm 0$.
	If there is an $i \in \{1,\dots,p-1\}$ such that the two vertices $u_{i_1}$ and $u_{i_2}$ with colour $\pm i$ belong to $K^-$, then, so that we do not run into the same previous contradiction (i.e., deduce that $(u_{i_0},u_{i_1},u_{i_2},u_{i_0})$ is a triangle with even balance), it must be that $v_{i_1}$ or $v_{i_2}$ is the vertex with colour $\pm 0$ in $K^+$. Now, because $u_{i_0}$ is assigned colour~$\pm 0$ and $u_{i_1}$ and $u_{i_2}$ are assigned colour $\pm i$, there must be a $j \in \{1,\dots,p-1\} \setminus \{i\}$ such that no vertex of $K^-$ is assigned colour $\pm j$. This means that the two vertices with colour $\pm j$ belong to $K^+$. In that case, since no vertex of $K^+$ can be assigned colour $\pm i$, we deduce that the p-edge and n-edge of type $(i,j)$ must be $u_{i_1}v_{i_1}$ and $u_{i_2}v_{i_2}$, which is not possible since one of $v_{i_1}$ and $v_{i_2}$ is assigned colour $\pm 0$.
	
	From these arguments, we deduce that, for every $i \in \{1,\dots,p-1\}$, there must be an edge $u_jv_j$ such that $u_j$ and $v_j$ are the only two vertices assigned colour $\pm i$. From this, we deduce further that there must be a $j$ such that both ends of $u_jv_j$ are assigned colour $\pm 0$, which is a contradiction to the properness of $\phi$.
	
	\item We now prove a similar result for odd values of $k \geq 5$.
	Set $p=\lfloor k/2 \rfloor$, and consider the following construction.
	We start from a complete graph $K_{2p+2}$ on $2p+2$ vertices, in which all edges are negative.
	In this signed graph, we then consider a matching $M=\{e_1,\dots,e_{p+1}\}$ of size $p+1$,
	remove $e_{p+1}=uv$ from the graph, switch, say, $u$, and finally turn to positive all other edges of $M$ but $e_1$.
	We denote by $(G,\sigma)$ the resulting signed graph.
	
	We claim that $\psi(G,\sigma)=k$ and $\psi(G-e_p,\sigma)=k-2$.
	Note that $(G,\sigma)$ is actually similar (up to switching vertices) to a signed graph considered in the proof of Theorem~\ref{theorem:change-edge-sign},
	in which we proved that indeed $\psi(G,\sigma)=k$.
	So let us now focus on $(G',\sigma)=(G-e_p,\sigma)$.
	To see that $\psi(G',\sigma)=k-2$, it suffices to note that $(G',\sigma)$ 
	is essentially obtained from a negative complete graph $K^-=(K_{2p+2},-)$ with the edges of a matching $\{e_{p+1}, e_p\}$ removed, by switching $u$ and turning  $p-2$ other edges of $M$ into positives ones (note that only $e_1$ is neither removed nor has its sign changed).
	Due to Theorem~\ref{theorem:kn-negative}, recall that $\psi(K^-)=3$.
	Now, by applying Theorem~\ref{theorem:change-edge-sign} $p-2$ times, we deduce that $\psi(G',\sigma) \leq 3 + 2(p-2)=2p-1=k-2$.
	Since $\psi(G,\sigma)=k$, by the lower bound we have proved earlier the equality actually holds.
\end{itemize}

\begin{figure}[!t]
 	\centering
 	\subfloat[Even $\psi$]{
    \scalebox{0.7}{
	\begin{tikzpicture}[inner sep=0.7mm]	
	
	\node[draw, circle, black, line width=1pt](u1) at (0,0)[]{$+1$};
	\node[draw, circle, black, line width=1pt](u2) at (0,-3)[]{$+1$};
	
	\node[draw, circle, black, line width=1pt](v1) at (3,3)[]{$+2$};
	\node[draw, circle, black, line width=1pt](v2) at (3,0)[]{$+2$};
	\node[draw, circle, black, line width=1pt](v3) at (3,-3)[]{$+2$};
	\node[draw, circle, black, line width=1pt](v4) at (3,-6)[]{$+2$};
	
	\node[draw, circle, black, line width=1pt](w1) at (-3,-1.5)[]{$+2$};
	\node[draw, circle, black, line width=1pt](w2) at (-6,-1.5)[]{$+2$};

    \draw [-, line width=1.5pt, Blue] (u1) -- (v2);
    \draw [-, line width=1.5pt, Blue] (u1) -- (v3);
    \draw [-, line width=1.5pt, Blue] (u2) -- (v2);
    \draw [-, line width=1.5pt, Blue] (u2) -- (v3);
    \draw [-, line width=1.5pt, Red, densely dashdotted] (u1) -- (u2);
    \draw [-, line width=1.5pt, Red, densely dashdotted] (u1) -- (w1);
    \draw [-, line width=1.5pt, Red, densely dashdotted] (u2) -- (w1);
    \draw [-, line width=1.5pt, Red, densely dashdotted] (u1) to[bend right=20] (w2);
    \draw [-, line width=1.5pt, Red, densely dashdotted] (u2) to[bend left=20] (w2);
    \draw [-, line width=1.5pt, Red, densely dashdotted] (w1) -- (w2);
    
    \draw  [-, line width=1.5pt, Blue] (u1) -- (v1);
    \draw  [-, line width=1.5pt, Blue] (u1) -- (v4);
    \draw  [-, line width=1.5pt, Blue] (u2) -- (v1);
    \draw [-, line width=5pt, Green, opacity=0.4] (u2) -- (v4);
    \draw  [-, line width=1.5pt, Blue] (u2) -- (v4);
	\end{tikzpicture}
    }
    }
    \hspace{40pt}
 	\subfloat[Odd $\psi$]{
    \scalebox{0.9}{
	\begin{tikzpicture}[inner sep=0.7mm]	
	
	\node[draw, circle, black, line width=1pt](v1) at (0,0)[]{$\pm 0$};
	\node[draw, circle, black, line width=1pt](v2) at (0,-1)[]{$\pm 0$};
	\node[draw, circle, black, line width=1pt](v3) at (0,-2)[]{$+1$};
	\node[draw, circle, black, line width=1pt](v4) at (0,-3)[]{$+1$};
	\node[draw, circle, black, line width=1pt](v5) at (0,-4)[]{$+2$};
	\node[draw, circle, black, line width=1pt](v6) at (0,-5)[]{$+2$};
	\node[draw, circle, black, line width=1pt](v7) at (0,-6)[]{$+2$};
	
	\node[draw, circle, black, line width=1pt](u1) at (-5,-1)[]{$+1$};
	\node[draw, circle, black, line width=1pt](u2) at (-5,-3)[]{$+1$};
	\node[draw, circle, black, line width=1pt](u3) at (-5,-5)[]{$+2$};
	
    \draw [-, line width=1.5pt, Red, densely dashdotted] (u1) -- (v2);
    \draw [-, line width=1.5pt, Red, densely dashdotted] (u1) -- (v3);
    \draw [-, line width=1.5pt, Red, densely dashdotted] (u1) -- (v4);
    \draw [-, line width=1.5pt, Red, densely dashdotted] (u1) -- (v5);
    \draw [-, line width=1.5pt, Red, densely dashdotted] (u1) -- (v6);
    \draw [-, line width=5pt, Green, opacity=0.4] (u1) -- (v6);    %ERIC
    \draw [-, line width=1.5pt, Red, densely dashdotted] (u1) -- (v7);
	
    \draw [-, line width=1.5pt, Red, densely dashdotted] (u2) -- (v2);
    \draw [-, line width=1.5pt, Red, densely dashdotted] (u2) -- (v3);
    \draw [-, line width=1.5pt, Red, densely dashdotted] (u2) -- (v4);
    \draw [-, line width=1.5pt, Red, densely dashdotted] (u2) -- (v5);
    \draw [-, line width=1.5pt, Red, densely dashdotted] (u2) -- (v6);
    \draw [-, line width=1.5pt, Red, densely dashdotted] (u2) -- (v7);
	
    \draw [-, line width=1.5pt, Red, densely dashdotted] (u3) -- (v2);
    \draw [-, line width=1.5pt, Red, densely dashdotted] (u3) -- (v3);
%    \draw [-, line width=5pt, Green, opacity=0.4] (u3) -- (v3);
    \draw [-, line width=1.5pt, Red, densely dashdotted] (u3) -- (v3);
    \draw [-, line width=1.5pt, Red, densely dashdotted] (u3) -- (v4);
    \draw [-, line width=1.5pt, Red, densely dashdotted] (u3) -- (v5);
    \draw [-, line width=1.5pt, Red, densely dashdotted] (u3) -- (v6);

    \draw [-, line width=1.5pt, Blue] (u1) -- (v1);
    \draw [-, line width=1.5pt, Blue] (u2) -- (v1);
    \draw [-, line width=1.5pt, Blue] (u3) -- (v1);
	\end{tikzpicture}
    }
    }

\caption{Signed graphs attesting of the tightness of the upper bound in Theorem~\ref{theorem:removing-edge-signed}.
Edges highlighted in green are edges which, when removed, make the achromatic number increase by~$2$.
In (a) is also depicted a complete $4$-colouring, and in (b) a complete $5$-colouring.
Dashed red edges are negative edges, while solid blue edges are positive edges.  
\label{figure:remove-edge-upper-bound}}
\end{figure}

We now prove the tightness of the upper bound.
Again, we provide two constructions, illustrated in Figure~\ref{figure:remove-edge-upper-bound}, to cover the possible parities for the number of involved colours.
Namely, for every $k \geq 6$, we prove that there exist signed graphs $(G,\sigma)$ with $\psi(G,\sigma)=k$,
in which an edge can be removed to increase the achromatic number by exactly~$2$.

\begin{itemize}
	\item Let first consider even values of $k \geq 6$.
	We set $p=k/2$, and consider the following recursive construction, which provides signed graphs $G_3, G_4, \dots$.
	The first signed graph, $G_3$, is the one depicted in Figure~\ref{figure:remove-edge-upper-bound}(a) (in which, compared to what is described in the following explanations,
	some vertices have been switched).
	It is obtained starting from a negative complete graph of order~$3$ with vertices $k_1,k_2,k_3$,
	then joining $k_1$ and $k_2$ to four new vertices $v_1,\dots,v_4$ through negative edges,
	and eventually joining a new vertex $w_3$ to all of $k_1,k_2,k_3$ through negative edges.
	The general construction is then as follows: assuming $G_p$ was built,
	$G_{p+1}$ is obtained by starting from $G_p$,
	adding three new vertices $w_p'$, $k_{p+1}$ and $w_{p+1}$ connected to every vertex $k_i$ with $i \in \{1,\dots,p\}$ through a negative edge,
	and eventually joining $w_{p+1}$ and $k_{p+1}$ through a negative edge. 
	We also denote, for every $i$, by $G'_i$ the signed graph obtained by removing the edge $k_1v_1$ from $G_i$.
	
	We claim that $\psi(G'_p) \geq 2p$ while $\psi(G_p) \leq 2p-2$.
	Note that, by the upper bound we have proved earlier, this implies equality for both parameters.
	Let us focus on $G'_p$ first.
	A complete $2p$-colouring of $G'_p$ can be obtained from an initial complete $6$-colouring of $G_3'$,
	by first extending it to a complete $8$-colouring of $G_4'$, then to a complete $10$-colouring of $G_5'$, and so on.	
	The initial complete $6$-colouring of $G_3'$ is
    obtained by first switching $v_1$ and $v_4$, and then assigning colour $+1$ to $v_3$, $v_4$ and $k_2$, colour $+2$ to $v_2$ and $k_1$, and colour $+3$ to $v_1$, $k_3$ and $w_3$. This colouring of $G_3'$ can indeed be checked to be complete.
	Now, assuming a complete $2p$-colouring of $G_p'$ was previously constructed,
	a complete $(2p+2)$-colouring of $G_{p+1}'$ is obtained from it by first switching $w_p'$,
	and then assigning colour $+(p+1)$ to the three new vertices $w_p'$, $k_{p+1}$ and $w_{p+1}$.
	Particularly, it can be checked that $G'_{p+1}$ has all types of edges required in a complete colouring,
	due to all the edges incident to these three vertices (which yield the desired p-edge and n-edge of type $(i,p+1)$ for every $i \in \{1,\dots,p\}$, and n-edge of type $(p+1,p+1)$, without yielding undesired types of edges).

	We now prove that $\psi(G_p) \leq 2p-2$.
	Assume this is wrong, and, towards a contradiction, assume $G_p$
	can have some of its vertices switched, so that 
	the resulting signed graph	 admits a complete $\ell$-colouring $\phi$, for some $\ell\ge 2p-1$ (which is at most $2p+2$, by the upper bound we have proved earlier).
	Free to consider signed colours instead, 
	we can assume that we have a complete $\ell$-colouring $\gamma$ of $G_p$ (in which no vertex was switched) inferred from $\phi$.
	Let us focus on the two vertices $k_1$ and $k_2$.
	By construction, they are universal, and, as a result, they are adjacent.
	We claim that we must have $\gamma(k_1)=\gamma(k_2)$.
	Indeed, suppose the contrary.
	First off, we note that, because $k_1k_2$ is negative, we cannot have $\{\gamma(k_1),\gamma(k_2)\}=\{i^-,i^+\}$ for some $i$,
	as otherwise $k_1k_2$ would be a p-edge of type $(i,i)$ by $\phi$.
	Assume thus that $\gamma(k_1) \in \{i^-,i^+\}$ and $\gamma(k_2) \in \{j^-,j^+\}$ for some $i \neq j$.
	We may also assume that $0 \not \in \{i,j\}$, as, if, say, $i=0$ and $j=1$, then, because $k_1$ and $k_2$ are universal, any other vertex $x$ of $G_p$ assigned colour $\pm \phi(k_2)$ would have to have $\gamma(x)=\gamma(k_2)$ (so that we do not have a p-edge of type $(1,1)$), and it would thus be impossible to have a p-edge of type $(0,1)$ if $k_1k_2$ is an n-edge of type $(0,1)$, and \textit{vice versa}.
	Thus, $0 \not \in \{i,j\}$.
	By Lemma~\ref{lemma:inferred-force-colours},
	we can assume that $\gamma(k_1)=i^+$ and $\gamma(k_2)=j^+$.
	Thus, $k_1k_2$ is an n-edge of type $(i,j)$.
	So that $\gamma$ is complete, there must be also a p-edge of type $(i,j)$.
	Since $\gamma(k_1)=i^+$ and $k_1$ is connected to every vertex of $G_p$ through a negative edge,
	note that every other vertex with a colour in $\{i^-,i^+\}$ by $\gamma$ must be assigned colour~$i^+$.
	Similarly, due to $k_2$, every other vertex with a colour in $\{j^-,j^+\}$ must be assigned colour~$j^+$.
	Because all edges of $G_p$ are negative, so that a p-edge of type $(i,j)$ exists,
	some vertex must be assigned a colour in $\{i^-,j^-\}$, which is thus not possible, and the completeness of $\gamma$ is contradicted.
	
	Thus, $\gamma(k_1)=\gamma(k_2)$. Since $k_1k_2$ is an edge, we cannot have $\gamma(k_1),\gamma(k_2) \in \{0^-,0^+\}$.
	By Lemma~\ref{lemma:inferred-force-colours}, we may assume, w.l.o.g., that $\gamma(k_1)=\gamma(k_2)=1^+$.
	By Observation~\ref{observation:large-matching}, for every $i \in \{2,\dots,\lfloor\frac{\ell}{2}\rfloor\}$ there must be an edge with ends having colours $\pm i$ by $\phi$.
	Note that the signed graph $G_p-k_1-k_2$ forms isolated vertices (the $v_i$'s) and a single connected component with a dominating clique on $p-2$ vertices (the other vertices).
	Thus, in particular, for each $i \in \{2,\dots,p-1\}$ (recall that $p-1\le\lfloor\frac{\ell}{2}\rfloor$), a vertex of that clique must be assigned colour $\pm i$ by $\phi$.
	Particularly, focus on $k_p$; assume its colour is $\pm i$ by $\phi$.
	Since the only neighbour of $k_p$ not in the clique is $w_p$, we have $\phi(w_p)=\pm i$.
	But, now, we have that the only neighbours of $k_p$ and $w_p$, in $G_p$, are the vertices of the clique,
	in which no vertex is assigned colour~$\pm 0$ (if $\ell$ is odd) or $\pm p$ (if $\ell$ is even).
	Precisely, we note that only the $v_i$'s, which form an independent set, can be assigned colour $\pm 0$ or $\pm p$.
	Then we deduce that there cannot be a p-edge nor an n-edge of type $(0,i)$ or $(p,i)$, and $\phi$ cannot be complete, a contradiction.

	\item We finally prove a similar result for odd $k \geq 7$.
	Set $p=\lfloor k/2 \rfloor$.
	The signed graph $(G,\sigma)$ we consider here,
	is obtained simply from a complete bipartite graph $K_{p,2p+1}$ with all edges negative,
	by removing a matching of size $p-2$.
	We denote by $u_1,\dots,u_p$ and $v_1,\dots,v_{2p+1}$ the vertices from the two partite sets of $G$,
	and let $\{u_2v_2,\dots,u_{p-1}v_{p-1}\}$ be the removed matching.
	We also set $G'=G-u_pv_p$.
	
	We claim that $\psi(G,\sigma) \leq 2p-1$ while $\psi(G',\sigma) \geq 2p+1$, 
	from which we get the desired result since equality holds in both cases due to the upper bound we have proved earlier.
	Consider $(G,\sigma)$ first.
	Note that the maximum size of a matching in $G$ is $p$, and so, by Observation~\ref{observation:large-matching}, we have $\psi(G,\sigma) \leq 2p+1$.
	Towards a contradiction to the inequality $\psi(G,\sigma) \leq 2p-1$, suppose $(G,\sigma)$ can have some of its vertices switched, so that the resulting signed graph, that we still denote by $(G,\sigma)$ for convenience,
admits a complete $\ell$-colouring $\phi$ with $\ell\in\{2p,2p+1\}$.
    Free to consider signed colours, we consider $\gamma$, the $\ell$-colouring inferred from $\phi$, directly in $(G,\sigma)$ (i.e., with all its edges negative).
    By Observation~\ref{observation:large-matching}, for every $i \in \{1,\dots,p\}$,
	there must be an n-edge of type $(i,i)$, and those $p$ edges must form a matching.
	Because the partite set of $G$ containing the $u_i$'s has cardinality~$p$,
	the $u_i$'s must thus be assigned colours with distinct absolute values.
	By Lemma~\ref{lemma:inferred-force-colours}, we can assume, w.l.o.g., that $\gamma(u_1)=\gamma(v_1)=1^+$ and that $\gamma(u_p)=\gamma(v_p)=p^+$. Note that this makes both edges $u_1v_p$ and $u_pv_1$ be n-edges of type $(1,p)$. To also have the desired p-edge of type $(1,p)$, there must thus be other $v_i$'s being assigned colour $\pm 1$ (or, similarly, $\pm p$) by $\phi$. However, so that $u_1$ together with such $v_i$'s do not yield a p-edge of type $(1,1)$ (or $(p,p)$), every such $v_i$ must be assigned colour $1^+$ (or $p^+$) by $\gamma$. From this, we deduce that there cannot be any p-edge of type $(1,p)$ by $\phi$, a contradiction.
	
	Now consider $(G',\sigma)$.
	To see that $(G',\sigma)$ has achromatic number at least $2p+1$, first switch $v_{2p+1}$, and then
	consider the complete $(2p+1)$-colouring $\phi$ assigning colour $+i$ to $u_i$ for every $i \in \{1,\dots,p\}$, colour $+1$ to $v_1$, colour $-i$ to $v_i$ for every $i \in \{2,\dots,p\}$, colour $+i$ to $v_{p+i-1}$ for every $i \in \{2,\dots,p\}$, and colour $\pm 0$ to both $v_{2p}$ and $v_{2p+1}$.
	To see that this colouring is indeed complete, remark that there are n-edges of type $(i,i)$ for every $i \in \{1,\dots,p\}$ ($u_1v_1$ for $i=1$, and $u_iv_{p+i-1}$ for $i>1$), n-edges of type $(i,j)$ for every $0<i<j \leq p$ ($u_iv_{p+j-1}$), p-edges of type $(i,j)$ for every $0<i<j \leq p$ ($u_iv_j$), and n-edges and p-edges of type $(0,i)$ for every $i \in \{1,\dots,p\}$ ($u_iv_{2p}$ and $u_iv_{2p+1}$).
	Note also that, apart from the edges incident to $v_{2p+1}$, all edges of the signed graph are negative, which means that any p-edge of type $(i,i)$ (with $i \neq 0$) would be an edge $u_iv_j$ with $\phi(u_i)=-\phi(v_j)$. Such an edge is not present in $G'$, by construction. Similarly, only $v_{2p}$ and $v_{2p+1}$ are assigned colour $\pm 0$, so there are neither n-edges nor p-edges of type $(0,0)$.	\qedhere
\end{itemize}
\end{proof}

\subsection{Homomorphisms of signed graphs} \label{subsection:homomorphisms-signed-graphs}

As mentioned earlier, graph colourings tend to be related to graph homomorphisms, in a more or less intricate way depending on the exact notions involved.
In this subsection, we explore, in the context of signed graphs, certain homomorphism notions that have been studied in connection with the achromatic number of unsigned graphs.

Recall that, for two graphs $G$ and $H$, a homomorphism $h: V(G) \rightarrow V(H)$ of $G$ to $H$ is a vertex-mapping that preserves the edges,
i.e., $h(u)h(v)$ is an edge in $H$ whenever $uv$ is an edge in $G$.
The \textit{homomorphic image} $h(G)$ of $G$ (by $h$) is the subgraph of $H$ to which $G$ gets mapped, through $h$, 
i.e., $V(h(G))=\{h(v):\ v\in V(G)\}$ and $E(h(G))=\{h(u)h(v):\ uv\in E(G)\}$.
We say that $h$ is \textit{elementary} if the homomorphic image of $G$ by $h$ can be obtained from $G$ by identifying two non-adjacent vertices.
Note that any homomorphism can be decomposed into a sequence of elementary homomorphisms.
An \textit{elementary homomorphic image} $G$ is a homomorphic image obtained through an elementary homomorphism.

In the context of complete colourings and the achromatic number of unsigned graphs,
the connection with homomorphic images was investigated notably by Harary, Hedetniemi and Prins.
More precisely, they proved the following:

\begin{theorem}[Harary, Hedetniemi, Prins~\cite{Harary1967}] 
Let $G$ be a graph, and $H$ be a homomorphic image of $G$. Then, we have
$$\chi(G) \le \chi(H) \le \psi(H) \le \psi(G).$$
\end{theorem}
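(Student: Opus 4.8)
The plan is to prove the four-term chain by handling each inequality separately, with the composition of a colouring with the homomorphism $h \colon G \to H$ as the central tool throughout. Write $h$ for a homomorphism of $G$ \emph{onto} $H$, so that $H = h(G)$ is the homomorphic image; the crucial structural fact I will rely on is that $h$ is then \emph{edge-surjective}, i.e.\ every edge of $H$ is of the form $h(u)h(v)$ for some edge $uv$ of $G$.

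The middle inequality $\chi(H) \le \psi(H)$ is immediate from the definitions: any complete colouring is in particular proper, so a complete $\psi(H)$-colouring of $H$ witnesses $\chi(H) \le \psi(H)$.

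For the leftmost inequality $\chi(G) \le \chi(H)$, I would take a proper colouring $c$ of $H$ using $\chi(H)$ colours and consider the pullback $c \circ h$. If $uv$ is an edge of $G$, then $h(u)h(v)$ is an edge of $H$, so $c(h(u)) \neq c(h(v))$ by properness of $c$; hence $c \circ h$ is a proper colouring of $G$ using at most $\chi(H)$ colours, giving $\chi(G) \le \chi(H)$. Equivalently, a proper $k$-colouring of $H$ is a homomorphism $H \to K_k$, which composes with $h$ to give $G \to K_k$.

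The rightmost inequality $\psi(H) \le \psi(G)$ is the substantive one, and I expect it to be the main obstacle, since here I must transfer \emph{completeness} (not merely properness) from $H$ to $G$. I would start from a complete colouring $c$ of $H$ using $\psi(H)$ colours and again consider $c \circ h$, which is proper by the argument above. To check completeness, fix two distinct colours $i$ and $j$; since $c$ is complete on $H$ there is an edge $xy$ of $H$ with $c(x) = i$ and $c(y) = j$. By edge-surjectivity of $h$, this edge lifts to an edge $uv$ of $G$ with $h(u) = x$ and $h(v) = y$, whence $(c \circ h)(u) = i$ and $(c \circ h)(v) = j$. Thus every pair of colours is realised along an edge of $G$, so $c \circ h$ is complete. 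Finally, because $h$ is surjective onto $V(H)$, the colouring $c \circ h$ uses exactly the colour set of $c$, namely $\psi(H)$ colours; hence $\psi(G) \ge \psi(H)$. The only point requiring care is precisely this edge-surjectivity of $h$ onto its image, which is what guarantees that a witnessing edge in $H$ can always be pulled back to an edge of $G$.
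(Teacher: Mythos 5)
Your proof is correct: the pullback $c \circ h$ argument, together with the observation that the homomorphic image is by definition edge-surjective (every edge of $H = h(G)$ is of the form $h(u)h(v)$ for some edge $uv$ of $G$), is exactly what is needed, and you correctly identify that transferring completeness — not just properness — is the one step where edge-surjectivity is essential. Note that the paper itself gives no proof of this statement: it is quoted as a known result of Harary, Hedetniemi and Prins with a citation, so there is no in-paper argument to compare against; your proof is the standard one and fills that gap correctly.
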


These bounds were made more precise, in the context of elementary homomorphisms:

\begin{theorem}[Harary, Hedetniemi \cite{Harary1970}] \label{theorem:elementary-unsigned}
Let $G$ be a graph, and $H$ be an elementary homomorphic image of $G$. Then, we have
\begin{itemize}
	\item $\chi(G) \le \chi(H) \le \chi(G) +1$, and
	\item $\psi(G) -2 \le \psi(H) \le \psi(G)$.
\end{itemize}
\end{theorem}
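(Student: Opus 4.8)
The plan is to obtain two of the four inequalities for free from the theorem of Harary, Hedetniemi and Prins~\cite{Harary1967} stated just above, and to establish only the two genuinely new bounds by hand. Throughout, write $H$ for the graph obtained from $G$ by identifying two non-adjacent vertices $u$ and $v$ into a single vertex $w$, and let $h\colon V(G) \to V(H)$ be the associated elementary homomorphism. Since $H$ is in particular a homomorphic image of $G$, the cited theorem already yields $\chi(G) \le \chi(H)$ and $\psi(H) \le \psi(G)$, so it remains only to prove the sharper bounds $\chi(H) \le \chi(G)+1$ and $\psi(G)-2 \le \psi(H)$.

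For the chromatic upper bound, I would start from an optimal proper colouring $c$ of $G$ with $\chi(G)$ colours and push it through $h$. If $c(u)=c(v)$, then $c$ descends to a proper colouring of $H$ using the same colours: the neighbours of $w$ are precisely those of $u$ and of $v$, none of which is coloured $c(u)=c(v)$. If $c(u)\neq c(v)$, I simply assign to $w$ a brand-new colour $\chi(G)+1$ while keeping every other colour unchanged; this remains proper, since the neighbours of $w$ use only the original colours. Hence $\chi(H) \le \chi(G)+1$ in all cases.

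The interesting inequality is the achromatic lower bound, and here the key observation is structural: deleting the merged vertex $w$ from $H$ recovers $G$ with both $u$ and $v$ deleted, that is, $H-w = G-u-v$ (identifying $u$ and $v$ and then deleting the resulting vertex leaves exactly the edges of $G$ avoiding both $u$ and $v$). I would then invoke the unsigned vertex-deletion bounds of Geller and Kronk (Theorem~\ref{theorem:removing-vertex-unsigned}) three times. Applying the ``decreases by at most one'' lower bound twice on the $G$-side gives $\psi(G-u-v) \ge \psi(G-u)-1 \ge \psi(G)-2$, while applying the ``does not increase'' upper bound on the $H$-side gives $\psi(H) \ge \psi(H-w)$. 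Chaining these through the identity $H-w=G-u-v$ yields $\psi(H) \ge \psi(G)-2$, as desired.

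The main thing to get right is this last step: one must verify the identity $H-w=G-u-v$ carefully, and be attentive to the \emph{directions} of the two Geller--Kronk inequalities, since removing a vertex can either preserve or decrease the achromatic number and we exploit both phenomena (monotonicity on the $H$-side, and the ``drops by at most one'' guarantee on the $G$-side). Beyond this bookkeeping I expect no real obstacle; in particular, the delicate case analysis one is tempted to run directly on complete colourings of $H$ — choosing the colour of $w$ so as to preserve both properness and the completeness of every colour pair involving $c(u)$ and $c(v)$ — is entirely bypassed by routing the argument through vertex deletion.
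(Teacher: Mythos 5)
Your proposal is correct, and it takes a different route from the one the paper is built around. Note first that the paper offers no proof of this statement at all — it is quoted from Harary and Hedetniemi — so the meaningful comparison is with the paper's proofs of the signed analogues, Theorems~\ref{theorem:elementary-signed-1} and~\ref{theorem:homomorphism-signed2}. Your chromatic argument (descend the colouring through $h$ when $c(u)=c(v)$, otherwise spend one fresh colour on $w$) is exactly the unsigned core of the proof of Theorem~\ref{theorem:elementary-signed-1}, stripped of the $\pm 0$ and switching complications. For the achromatic lower bound, however, you genuinely diverge: the paper's signed proof deletes the two \emph{colour classes} of $\phi(u)$ and $\phi(v)$ (via Observation~\ref{observation:remove-class}), observes that what remains is an induced subgraph of the image, and concludes by monotonicity (Corollary~\ref{corollary:achromatic-signed-monotonous-subgraphs}); you instead delete the \emph{vertices} $u$ and $v$, use the identity $H-w=G-u-v$ (which is correct — since $u$ and $v$ are non-adjacent, identifying them and then deleting $w$ leaves precisely the edges of $G$ avoiding both), and chain the two directions of Geller--Kronk: $\psi(H)\ge\psi(H-w)=\psi(G-u-v)\ge\psi(G)-2$. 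Both reductions are sound, and your direction-of-inequality bookkeeping is right. What each buys: your route is shorter given that Theorem~\ref{theorem:removing-vertex-unsigned} is already quoted, and it transfers verbatim to the signed setting (two applications of Theorem~\ref{theorem:removing-vertex-signed}, costing $2$ each, recover the $-4$ of Theorem~\ref{theorem:homomorphism-signed2}); the colour-class route is self-contained and constructive, exhibiting an explicit complete colouring rather than only an inequality. One caveat worth recording: within the paper's logical structure there is no circularity, since the Geller--Kronk bounds are quoted as independent results, but as a reconstruction of the original 1970 proof your derivation is anachronistic, Geller--Kronk (1974) postdating Harary--Hedetniemi (1970).
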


In what follows, our main goal is to generalise the two items in Theorem~\ref{theorem:elementary-unsigned} to our notions in signed graphs.
Recall that a homomorphism $h$ of a signed graph $(G,\sigma)$ to a signed graph $(H,\pi)$ is a vertex-mapping $h$ preserving both the edges and their signs.
As above, the \textit{homomorphic image} of $(G,\sigma)$ (by $h$) is the signed subgraph of $(H,\pi)$ induced by the subset of vertices to which the vertices of $(G,\sigma)$ get mapped.
To introduce a notion of elementary homomorphisms of signed graphs, we first need to make clear which pairs of vertices can be identified.
Note indeed that it would not make any sense identifying two vertices $u$ and $v$ with a common neighbour $w$ such that the sign of $uw$ is not the same as that of $vw$.
Consequently, we say that two vertices $u$ and $v$ of $(G,\sigma)$ are \textit{identifiable} if $u$ and $v$ are not adjacent,
and they are connected (up to possibly switching $u$ or $v$) in a similar way to each of their common neighbours (i.e., $\sigma(uw)=\sigma(vw)$ for every common neighbour $w$ of $u$ and $v$).
Now, we say that $h$ is \textit{elementary} if it identifies 
two identifiable vertices of $(G,\sigma)$,
and, in that case, the homomorphic image of $(G,\sigma)$ is called an \textit{elementary homomorphic image}.
%(that is, two vertices being not adjacent and being connected to their common neighbours via edges of the same type
%not belonging to a negative 4-cycle),

We start by adapting the first item of Theorem~\ref{theorem:elementary-unsigned} to those notions.
The next result we give is actually the only one, 
out of all results given in this paper, 
for which the bounds we provide are exactly the same as in the unsigned case.

\begin{theorem}\label{theorem:elementary-signed-1}
Let $(G,\si)$ be a signed graph, and $(H,\pi)$ be an elementary homomorphic image of $(G,\sigma)$, Then, we have $$\chi(G, \sigma) \leq  \chi(H,\pi) \leq \chi(G,\sigma) +1.$$
\end{theorem}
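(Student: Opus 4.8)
The plan is to treat the two inequalities separately, the lower bound following cleanly from the homomorphism viewpoint and the upper bound requiring a direct recolouring argument in which the parity of $\chi(G,\sigma)$ is the only genuine source of difficulty. Throughout, write $(H,\pi)$ as the signed graph obtained from a signed graph $(G,\sigma')$ equivalent to $(G,\sigma)$ (switch one of the two identifiable vertices if needed, so that they agree on the signs of the edges to their common neighbours) by identifying the two identifiable vertices $u$ and $v$ into a single vertex $x$.

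For the lower bound, I would first observe that the identification map $(G,\sigma') \to (H,\pi)$ sending $u,v \mapsto x$ and fixing every other vertex is a sign-preserving homomorphism: every edge of $(G,\sigma')$ not incident to $u$ or $v$ is untouched, every edge $uw$ or $vw$ becomes the edge $xw$ of the same sign (the choice of $\sigma'$ guaranteeing consistency on common neighbours), and there is no edge $uv$ to worry about. Composing this homomorphism with an optimal homomorphism $(H,\pi) \to K_{\chi(H,\pi)}^*$ yields a homomorphism $(G,\sigma') \to K_{\chi(H,\pi)}^*$, that is, a proper $\chi(H,\pi)$-colouring of $(G,\sigma')$. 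Since $\chi$ is invariant under switching, $\chi(G,\sigma) = \chi(G,\sigma') \le \chi(H,\pi)$.

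For the upper bound, set $k=\chi(G,\sigma)$ and fix a proper $k$-colouring $\phi$ of $(G,\sigma')$; the goal is to build a proper $(k+1)$-colouring of $(H,\pi)$, keeping $\phi$ on $V(H)\setminus\{x\}$ as much as possible and giving $x$ a fresh colour. When $k=2n$ is even, the step is immediate: $M_k\subset M_{k+1}$, every vertex other than $x$ keeps its colour (none uses $\pm 0$), and assigning $\phi(x)=\pm 0$ is proper because every neighbour $w$ of $x$ carries a nonzero colour, so $s\,\phi(w)\ne\pm 0$ for every incident edge of sign $s$. The delicate case, which I expect to be the main obstacle, is $k=2n+1$ odd: here $\pm 0\notin M_{k+1}=M_{2n+2}$, so the vertices coloured $\pm 0$ by $\phi$ must themselves be recoloured, and the only fresh colours available are $+(n+1)$ and $-(n+1)$, which must serve both $x$ and that zero-class without clashing.

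To resolve the odd case I would use that the set $Z$ of vertices coloured $\pm 0$ is an independent set (two such vertices can be joined neither by a positive nor by a negative edge), and that identifiability forces $x$ and any $z\in Z$ to be joined by at most one edge. I would then set $\phi(x)=+(n+1)$, keep $\phi$ on $V(H)\setminus(Z\cup\{x\})$ (colours of absolute value at most $n$), and recolour each $z\in Z$ by $-(n+1)$ if $z$ is joined to $x$ by a positive edge and by $+(n+1)$ otherwise. Checking properness then splits into routine cases: edges avoiding $Z\cup\{x\}$ are inherited from $\phi$; any edge meeting a vertex of colour $\pm(n+1)$ and a vertex of colour of absolute value at most $n$ is automatically proper since the absolute values differ; $Z$ carries no internal edge; and for an edge between $x$ and $z\in Z$ the sign-driven choice of $\phi(z)\in\{+(n+1),-(n+1)\}$ is exactly what makes $\phi(x)\ne s\,\phi(z)$ hold. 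This produces the desired proper $(k+1)$-colouring and yields $\chi(H,\pi)\le \chi(G,\sigma)+1$.
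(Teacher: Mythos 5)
Your proposal is correct and follows essentially the same route as the paper's proof: the lower bound by pulling an optimal colouring of $(H,\pi)$ back through the identification homomorphism, and the upper bound by splitting on the parity of $k$, giving the merged vertex the fresh colour $\pm 0$ when $k$ is even, and eliminating the colour class $\pm 0$ via the new colour when $k=2n+1$ is odd. The only cosmetic divergence is in the odd case: the paper first switches the zero-coloured vertices so that their edges towards $u$ and $v$ become negative and then assigns the single fresh colour to all of them together with $w$, whereas you keep the signature fixed and choose between $+(n+1)$ and $-(n+1)$ for each $z$ according to the sign of its edge to $x$ --- an equivalent move, since switching a vertex amounts exactly to negating its colour.
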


\begin{proof}
To avoid any confusion, let us emphasise that the statement is about proper colourings, not complete colourings.
Now, let $u$ and $v$ denote the two non-adjacent vertices of $(G,\sigma)$ which got identified by the homomorphism to result in $(H,\pi)$.
Let $w$ denote the vertex of $(H,\pi)$ resulting from the identification.
We note first that $\chi(G, \sigma) \leq \chi(H,\pi)$.
Indeed, if $\phi$ is a proper $k$-colouring of $(H,\pi)$, then note that it extends naturally to $(G,\sigma)$ by assigning colour $\phi(w)$ to both $u$ and $v$.
Particularly, the properness of the so-obtained $k$-colouring of $(G,\sigma)$ results from the fact that $(H,\pi)$ is a homomorphic image of $(G,\sigma)$.
We prove now that $\chi(H,\pi) \leq \chi(G,\sigma) +1$. 
Let $\phi$ be a proper $k$-colouring of $(G,\sigma)$.
If $\phi(u)=\phi(v)$, then note that $\phi$ can be extended directly to a proper $k$-colouring of $(H,\pi)$ by simply assigning colour $\phi(u)$ to $w$.
Thus, assume $\phi(u) \neq \phi(v)$.
Now, if, on the one hand, $k$ is even, then, the $(k+1)$-colouring of $(H,\pi)$ obtained by assigning colour $\pm 0$ to $w$ (and keeping all other colours by $\phi$)
is clearly proper, since $\phi$ is proper and $w$ is the sole vertex assigned colour~$\pm 0$.
If, on the other hand, $k$ is odd, then note that, upon switching vertices of $(G,\sigma)$ being assigned colour $\pm 0$ by $\phi$ (which, recall, results in a proper $k$-colouring),
we get to a signed graph defined over $G$ that is equivalent to $(G,\sigma)$ in terms of proper colourings, and in which all edges incident to $u$ and $v$ going to vertices assigned colour $\pm 0$ are negative.
We may thus assume $(G,\sigma)$ (and, thus, $(H,\pi)$ regarding its vertex $w$) has this property.
Now, because two adjacent vertices of $(G,\sigma)$ cannot be assigned colour $\pm 0$ by $\phi$,
when extending $\phi$ to $(G,\pi)$ by assigning colour $+(k+1)$ to $w$ and all other vertices assigned colour $\pm 0$ by $\phi$ in $(G,\sigma)$,
we deduce that the resulting $(k+1)$-colouring of $(H,\pi)$ is proper, essentially because any two vertices assigned colour $+(k+1)$ are joined by a negative edge.
Thus, $\chi(H,\pi) \leq \chi(G,\sigma) +1$.
\end{proof}

Let us now focus on the second item of Theorem~\ref{theorem:elementary-unsigned}.

\begin{theorem} \label{theorem:homomorphism-signed2}
If $(H,\pi)$ is an elementary homomorphic image of a signed graph $(G,\sigma)$, then
$$ \psi(G,\sigma) -4 \leq \psi(H,\pi) \leq \psi(G,\sigma).$$ 
Furthermore, there are contexts where the lower bound can be attained.
Particularly, for every $\psi(G,\sigma) \geq 8$, the lower bound can be attained.
\end{theorem}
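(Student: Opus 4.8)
The plan is to establish the two inequalities separately and then treat the tightness of the lower bound as a construction problem. Throughout, let $h$ be the elementary homomorphism identifying two identifiable vertices $u,v$ of $(G,\sigma)$ into a single vertex $w$ of $(H,\pi)$; up to switching $u$ or $v$ once and for all, I may assume $\sigma(uw')=\sigma(vw')$ for every common neighbour $w'$ of $u$ and $v$, so that the identification is clean.

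For the upper bound $\psi(H,\pi)\le\psi(G,\sigma)$, I would lift an optimal complete colouring of $H$ back to $G$. Start from a complete $k$-colouring $\phi$ of some $(H,\pi')$ equivalent to $(H,\pi)$, with $k=\psi(H,\pi)$. The first step is to observe that switching a set $S\subseteq V(H)$ to pass from $(H,\pi)$ to $(H,\pi')$ can be mirrored in $G$: switch the corresponding vertices, replacing the instruction ``switch $w$'' by ``switch both $u$ and $v$''. This yields a signed graph $(G,\sigma')$ equivalent to $(G,\sigma)$ of which $(H,\pi')$ is still the elementary image, since switching $u$ and $v$ together preserves $\sigma'(uw')=\sigma'(vw')$. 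I then set $\phi'=\phi\circ h$, that is $\phi'(u)=\phi'(v)=\phi(w)$ and $\phi'(z)=\phi(z)$ otherwise. Properness of $\phi'$ is immediate, as composing the homomorphism $h$ with the proper colouring $\phi$ (a homomorphism to $K_k^*$) stays proper. Completeness is the key point: since $H=h(G)$, every edge of $(H,\pi')$ is the image of at least one edge of $(G,\sigma')$ carrying the same sign and the same colour pair, so every p-edge or n-edge type realised in $R(H,\pi',\phi)=K_k^*$ has a preimage of the same type in $(G,\sigma')$ (here I use the bookkeeping of Observation~\ref{observation:types-edges}). As $\phi'$ is proper, $R(G,\sigma',\phi')$ is contained in $K_k^*$, and by the preimages it is all of $K_k^*$; hence $\phi'$ is complete and $\psi(G,\sigma)\ge k$.

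For the lower bound $\psi(G,\sigma)-4\le\psi(H,\pi)$, I would instead delete rather than merge. Let $\phi$ be a complete $k$-colouring of some $(G,\sigma')$ equivalent to $(G,\sigma)$, with $k=\psi(G,\sigma)$, chosen as above so that $(H,\pi')$ is the elementary image of $(G,\sigma')$. Delete from $(G,\sigma')$ the (at most two, possibly coinciding) colour classes $\pm\phi(u)$ and $\pm\phi(v)$; by Observation~\ref{observation:remove-class}, deleting a nonzero class costs two colours and deleting the zero class costs one, so the restriction of $\phi$ to the remaining signed graph $(G'',\sigma')$ is complete on at least $k-4$ colours. Since $u$ and $v$ have both been deleted, $(G'',\sigma')$ contains no vertex touched by the identification, so it is literally an induced signed subgraph of $(H,\pi')$ (the two signatures agree away from $w$). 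Corollary~\ref{corollary:achromatic-signed-monotonous-subgraphs} then gives $\psi(H,\pi)=\psi(H,\pi')\ge\psi(G'',\sigma')\ge k-4$.

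The genuinely laborious part, and the step I expect to be the main obstacle, is producing, for every target $k=\psi(G,\sigma)\ge 8$, a signed graph together with an identifiable pair realising a drop of exactly $4$. The plan is to reverse-engineer the lower-bound argument: design $(G,\sigma)$ so that it admits a complete $k$-colouring (to be written down explicitly) while $u$ and $v$ share a large set of common neighbours joined by consistently signed edges, so that identification merges many parallel edges at once. Concretely, for even $k=2p$ I would keep $|E(G)|$ as small as a complete $k$-colouring permits and force $u,v$ to have at least $2p+1$ common neighbours; then $(H,\pi)$ loses at least $2p+1$ edges, dropping below the $(p-1)^2-1$ threshold of Proposition~\ref{proposition:upper-bound-size} and so forcing $\psi(H,\pi)<2p-3$, i.e. $\psi(H,\pi)\le 2p-4$, which with the lower bound pins the value to $k-4$. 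For odd $k$ I would instead drive down the maximum matching of $(H,\pi)$ and invoke Observation~\ref{observation:large-matching}. The delicate points, which force the separate treatment of the two parities and the hypothesis $k\ge 8$ (so that the relevant $K^*$ thresholds are non-degenerate), are verifying that the explicit colouring of $(G,\sigma)$ is indeed complete and that the identification destroys enough edges, respectively matching, without inadvertently lowering $\psi(G,\sigma)$ below $k$.
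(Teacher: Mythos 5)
Your proofs of the two inequalities are correct and follow essentially the paper's own route: for the upper bound, lift a complete colouring of $(H,\pi)$ to $(G,\sigma)$ by giving $u$ and $v$ the colour of $w$ (your explicit mirroring of the switching set, replacing ``switch $w$'' by ``switch both $u$ and $v$'', is bookkeeping the paper leaves implicit, and you handle it correctly); for the lower bound, delete the colour classes $\pm\phi(u)$ and $\pm\phi(v)$, apply Observation~\ref{observation:remove-class}, and use Corollary~\ref{corollary:achromatic-signed-monotonous-subgraphs} on the resulting induced signed subgraph of $(H,\pi)$.

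The tightness part, however, is a genuine gap, and not merely because the constructions are left ``to be written down explicitly'': the certification mechanisms you propose provably cannot yield a drop of $4$. Take the even case $k=2p$. Identification deletes exactly one edge per common neighbour of $u$ and $v$ (identifiability makes the two parallel copies merge), so writing $s=|E(G)|-p^2\ge 0$ for the slack over the minimum size $|E(K_{2p}^*)|=p^2$, your scheme needs $c\ge 2p+1+s$ common neighbours to force $|E(H)|<(p-1)^2-1$ in Proposition~\ref{proposition:upper-bound-size}. But in any complete $2p$-colouring, the edges meeting a fixed colour class $C_i$ are exactly the preimages of the $2p-1$ edges of $K_{2p}^*$ incident to the vertex $i$, hence at most $2p-1+s$ many. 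If $u$ and $v$ lie in distinct classes, then $c\le\min(\deg(u),\deg(v))\le 2p-1+s<2p+1+s$; if they lie in the same class, then, being non-adjacent, their edge sets are disjoint and $\deg(u)+\deg(v)\le 2p-1+s$, while you need each degree to be at least $2p+1+s$. Either way the required $c$ is unattainable. The odd case fails for the analogous reason: identifying two non-adjacent vertices decreases the maximum matching by at most $1$, so from $\alpha(G)\ge\lfloor k/2\rfloor=(k-1)/2$ (forced by Observation~\ref{observation:large-matching} when $\psi(G,\sigma)=k$) you get $\alpha(H)\ge(k-3)/2$, and the bound $\psi(H,\pi)\le 2\alpha(H)+1$ can then never certify anything below $k-2$. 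In short, neither edge counts nor matching sizes can see a drop of $4$, because both quantities change by too little under one identification. The paper's constructions use a different mechanism entirely: $(G,\sigma)$ is built from a negative complete graph $K_{2p+2}$ with $p-1$ independent edges made positive, plus two vertices $v$ and $w$ attached to the two \emph{disjoint} sides of a matching (so they have no common neighbour at all, and identification deletes no edge), and the upper bound $\psi(H,\pi)\le k$ comes from the structural facts $\psi(K_n,-)=2$ (Theorem~\ref{theorem:kn-negative}) together with the $\pm 2$ stability under a single sign change (Theorem~\ref{theorem:change-edge-sign}), applied $p-1$ times; the odd case adds two non-adjacent dominating vertices coloured $\pm 0$. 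To complete your proof you would need to abandon the counting thresholds and adopt a structural upper bound of this kind.
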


\begin{proof}
Assume $u$ and $v$ are the non-adjacent vertices of $(G,\sigma)$ that were identified to form $(H,\pi)$, and denote by $w$ the resulting identified vertex.
We observe first that if $\phi$ is a complete $k$-colouring of $(H,\pi)$, 
then the $k$-colouring of $(G,\sigma)$ obtained from $\phi$ by assigning colour $\phi(w)$ to both $u$ and $v$ (and keeping all other colours by $\phi$) is also complete.
This is because the edges incident to $u$ and $v$ in $(G,\sigma)$ essentially induce, by the resulting colouring, the same edge types as $w$ in $(H,\pi)$ by $\phi$.
This shows that $\psi(H,\pi) \leq \psi(G,\sigma)$.
We now focus on the bound on the left-hand side of the inequality.
Assume that $\psi(G,\sigma)=k$, and let $\phi$ be a complete $k$-colouring of $(G,\sigma)$.
Denote by $(G',\sigma)$ the signed subgraph obtained from $(G,\sigma)$ by removing all vertices assigned colour $\pm \phi(u)$ or $\pm \phi (v)$.
In $(G',\sigma)$, note that $\phi$ induces a complete $k'$-colouring with $k-4 \leq k' \leq k-1$, by Observation~\ref{observation:remove-class}.
Because $(G',\sigma)$ is an induced signed subgraph of $(H,\pi)$, by Corollary~\ref{corollary:achromatic-signed-monotonous-subgraphs} we have $ \psi(G,\sigma) -4 \leq \psi(G',\sigma)\leq \psi(H,\pi)$.

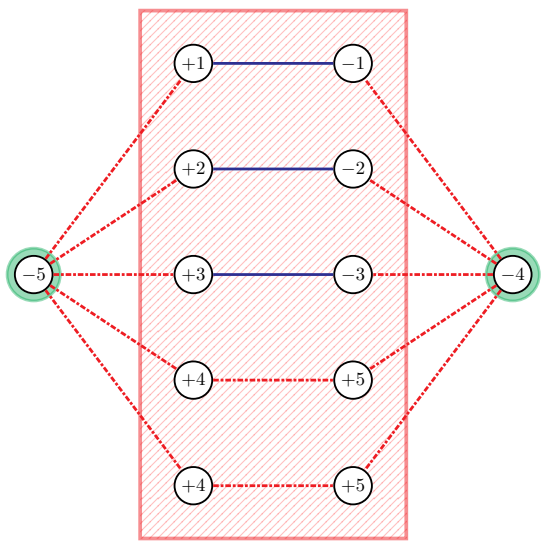
\begin{figure}[!t]
 	\centering
    \scalebox{0.7}{
	\begin{tikzpicture}[inner sep=0.7mm]	
	
	\draw [-, line width=2pt, Red, pattern=north east lines, pattern color=Red, opacity=0.5]  (-1,1) -- (4,1) -- (4,-9) -- (-1,-9) -- cycle;

	\node[draw, circle, black, line width=1pt,fill=white](u1) at (0,0)[]{$+1$};
	\node[draw, circle, black, line width=1pt,fill=white](u2) at (0,-2)[]{$+2$};
	\node[draw, circle, black, line width=1pt,fill=white](u3) at (0,-4)[]{$+3$};
	\node[draw, circle, black, line width=1pt,fill=white](u4) at (0,-6)[]{$+4$};
	\node[draw, circle, black, line width=1pt,fill=white](u5) at (0,-8)[]{$+4$};
	
	\node[draw, circle, black, line width=1pt,fill=white](v1) at (3,0)[]{$-1$};
	\node[draw, circle, black, line width=1pt,fill=white](v2) at (3,-2)[]{$-2$};
	\node[draw, circle, black, line width=1pt,fill=white](v3) at (3,-4)[]{$-3$};
	\node[draw, circle, black, line width=1pt,fill=white](v4) at (3,-6)[]{$+5$};
	\node[draw, circle, black, line width=1pt,fill=white](v5) at (3,-8)[]{$+5$};
	
	 \draw[Green, line width=1.5pt,opacity=0.4, fill] (-3,-4) circle (0.5cm);
	 \draw[Green, line width=1.5pt,opacity=0.4, fill] (6,-4) circle (0.5cm);

	\node[draw, circle, black, line width=1pt,fill=white](w1) at (-3,-4)[]{$-5$};
	\node[draw, circle, black, line width=1pt,fill=white](w2) at (6,-4)[]{$-4$};

    \draw [-, line width=1.5pt, Red, densely dashdotted] (w1) -- (u1);
    \draw [-, line width=1.5pt, Red, densely dashdotted] (w1) -- (u2);
    \draw [-, line width=1.5pt, Red, densely dashdotted] (w1) -- (u3);
    \draw [-, line width=1.5pt, Red, densely dashdotted] (w1) -- (u4);
    \draw [-, line width=1.5pt, Red, densely dashdotted] (w1) -- (u5);

    \draw [-, line width=1.5pt, Red, densely dashdotted] (w2) -- (v1);
    \draw [-, line width=1.5pt, Red, densely dashdotted] (w2) -- (v2);
    \draw [-, line width=1.5pt, Red, densely dashdotted] (w2) -- (v3);
    \draw [-, line width=1.5pt, Red, densely dashdotted] (w2) -- (v4);
    \draw [-, line width=1.5pt, Red, densely dashdotted] (w2) -- (v5);
    
    \draw  [-, line width=1.5pt, Blue] (u1) -- (v1);
    \draw  [-, line width=1.5pt, Blue] (u2) -- (v2);
    \draw  [-, line width=1.5pt, Blue] (u3) -- (v3);
    \draw [-, line width=1.5pt, Red, densely dashdotted] (u4) -- (v4);
    \draw [-, line width=1.5pt, Red, densely dashdotted] (u5) -- (v5);
	\end{tikzpicture}
    }

\caption{Part of the signed graphs attesting of the tightness of the lower bound in Theorem~\ref{theorem:homomorphism-signed2}.
Vertices highlighted in green are non-adjacent vertices which, when identified, make the achromatic number decrease by~$4$.
In the figure is also depicted a complete $10$-colouring.
Dashed red edges are negative edges, while solid blue edges are positive edges.  
\label{figure:homomorphism}}
\end{figure}

We now prove the tightness of the lower bound.
We, again, consider two constructions, the first one being illustrated in Figure~\ref{figure:homomorphism},
to deal with the two possible parities for the number $k$ of colours.
Precisely, for every $k \geq 4$, 
we prove that there is a signed graph $(G,\sigma)$ with an elementary homomorphic image $(H,\pi)$ such that $\psi(G,\sigma) \geq k+4$ and $\psi(H,\pi) \leq k$,
thus getting the desired result due to the lower bound proved earlier.

\begin{itemize}
	\item For $k=2p$ even at least~$4$, consider the following construction.
	We start from a complete graph $K_{2p+2}$ on $2p+2$ vertices with all edges negative.
	Next, consider a matching $M=\{e_1,\dots,e_{p+1}\}$ of cardinality $p+1$ in $K_{2p+2}$,
	and change to positive all the edges $e_1,\dots,e_{p-1}$.
	Finally, add two vertices $v$ and $w$, and, setting $e_i=x_iy_i$ for every $i \in \{1,\dots,p+1\}$,
	 join $v$ and all $x_i$'s through negative edges,
	and similarly join $w$ and all $y_i$'s through negative edges.
	The resulting signed graph is $(G,\sigma)$, 
	while the signed graph $(H,\pi)$ is the elementary homomorphic image of $(G,\sigma)$ obtained when identifying $v$ and $w$.
	
	We observe, first, that $\psi(H,\pi) \leq k$.
	This is because $(H,\pi)$ can be obtained from a negative complete graph by changing the sign of $p-1$ independent edges.
	Indeed, by Theorems~\ref{theorem:kn-negative} and~\ref{theorem:change-edge-sign}, we deduce that $\psi(H,\pi) \leq 2 + 2(p-1) = 2p = k$.
	We claim, now, that $\psi(G, \sigma) \geq k+4$.
	To see this is true, consider the $(k+4)$-colouring $\phi$ of $(G,\sigma)$ obtained as follows.
	For every positive edge $e_i=x_iy_i$ (i.e., with $i \in \{1,\dots,p-1\}$) of $M$, we assign, to the two ends $x_i$ and $y_i$ (where, recall, $x_i$ is adjacent to $v$ and $y_i$ is adjacent to $w$), colour $+i$ to $x_i$ and colour $-i$ to $y_i$.
	For each $e_i=x_iy_i$ of the two negative edges $e_p$ and $e_{p+1}$, we assign colour $+p$ to $x_i$ and colour $+(p+1)$ to $y_i$.
	Finally, we assign colour $-(p+1)$ to $v$ and colour $-p$ to $w$.
	It can be checked that this results in $\phi$ being complete, all required edge types being realised by $\phi$. Indeed, for every $i \in \{1,\dots,p-1\}$ the n-edge of type $(i,i)$ is realised through the edge $e_i$, while the n-edge of type $(p,p)$ (n-edge of type $(p+1,p+1)$, respectively) is realised through the edge $x_px_{p+1}$ ($y_py_{p+1}$, respectively). Furthermore, for every $i<j \le p$, the n-edge of type $(i,j)$ is realised through the edge $x_ix_j$, and the p-edge of type $(i,j)$ is realised through the edge $x_jy_i$. Also, for every $i \in \{1, \dots, p\}$, the n-edge of type $(i,p+1)$ (p-edge of type $(i,p+1)$, respectively) is realised through the edge $x_iy_p$ ($x_iv$, respectively). Also, due to how the colours were assigned, there are no p-edges of type $(i,i)$ for any $i \in \{1,\dots,p+1\}$.

	%Indeed, every edge $e_i$ with $1 \leq i \leq p-1$ is an n-edge of type $(i,i)$, the edges $x_px_{p+1}$ and $y_py_{p+1}$ are n-edges of type $(p,p)$ and $(p+1,p+1)$, respectively, every edge $x_ix_j$ with $i<j \leq p$ is an n-edge of type $(i,j)$, every edge $x_iy_p$ with $i \leq p$ is an n-edge of type $(i,p+1)$, every edge $x_iy_j$ with $i \leq p$ and $i<j \leq p-1$ is a p-edge of type $(i,j)$, and every edge $vx_i$ with $i \leq p$ is a p-edge of type $(i,p+1)$. Also, due to how the colours were assigned, there are no p-edges of type $(i,i)$ for any $i \in \{1,\dots,p+1\}$.
	
	\item For $k=2p+1$ odd at least~$5$, we consider a quite similar construction. Let $(G,\sigma)$ be the signed graph obtained as follows. Start from the signed graph obtained in the previous case for the parameter $k-1$ (which signed graph is well defined since $k-1$ is even at least~$4$), and, to obtain $(G,\sigma)$, just add two non-adjacent dominating vertices $x$ and $y$ joined to all other vertices through negative edges. The signed graph $(H,\pi)$ considered here is again the elementary image of $(G,\sigma)$ obtained when identifying $v$ and $w$.
	
	In particular, $(H,\pi)$ is here obtained from a negative complete graph with an edge missing by turning to positive $p-1$ independent edges, for which we again have $\psi(H,\pi) \leq 3+2(p-1)=2p+1=k$ by Theorems~\ref{theorem:kn-negative} and~\ref{theorem:change-edge-sign}.
	Now, to obtain a complete $(k+4)$-colouring of $(G,\sigma)$, it suffices to assign colour $\pm 0$ to both $x$ and $y$, and to colour the other vertices as in the previous case. It can be checked that this results in a complete colouring, as $x$ and $y$ are not adjacent (thus there are neither n-edges nor p-edges of type $(0,0)$), while, due to all the edges incident to $x$ and $y$, there is at least one n-edge and at least one p-edge of type $(0,i)$ for every $i \in \{1,\dots,p+1\}$, as desired. \qedhere
\end{itemize}
\end{proof}

Due to the connection between homomorphic images and elementary homomorphic images, 
note that the upper bound in Theorem~\ref{theorem:homomorphism-signed2} yields the following as a corollary:

\begin{corollary}
If $(H,\pi)$ is a homomorphic image of a signed graph $(G,\sigma)$, then
$$ \psi(H,\pi) \leq \psi(G,\sigma).$$
\end{corollary}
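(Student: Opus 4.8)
The plan is to derive this corollary from the upper bound of Theorem~\ref{theorem:homomorphism-signed2} by a short induction, exploiting the fact, recorded earlier in this subsection, that every homomorphism of signed graphs decomposes into a sequence of elementary homomorphisms.

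First I would fix a homomorphism $h$ of $(G,\sigma)$ whose homomorphic image is $(H,\pi)$, and decompose $h$ into elementary homomorphisms. Concretely, consider the fibres $h^{-1}(x)$ for $x \in V(H)$; one obtains $(H,\pi)$ from $(G,\sigma)$ by repeatedly identifying two vertices lying in a common fibre. Each such pair is non-adjacent (their images coincide, so any edge between them would map to a loop, contrary to the image being a loopless signed subgraph), and, since both vertices are sent to the same vertex of $(H,\pi)$ by a sign-preserving map, they are connected in the same way to each of their common neighbours, so $\sigma(uw)=\sigma(vw)$ holds for every common neighbour $w$. Hence each pair is identifiable in the sense required, and each identification is a genuine elementary homomorphism. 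This yields a chain
$$ (G,\sigma) = (G_0,\sigma_0) \longrightarrow (G_1,\sigma_1) \longrightarrow \cdots \longrightarrow (G_t,\sigma_t) = (H,\pi), $$
in which each $(G_{i+1},\sigma_{i+1})$ is an elementary homomorphic image of $(G_i,\sigma_i)$.

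Next I would apply the upper bound of Theorem~\ref{theorem:homomorphism-signed2} to each individual step, giving $\psi(G_{i+1},\sigma_{i+1}) \leq \psi(G_i,\sigma_i)$ for every $i \in \{0,\dots,t-1\}$. Chaining these inequalities along the whole decomposition then yields
$$ \psi(H,\pi) = \psi(G_t,\sigma_t) \leq \psi(G_{t-1},\sigma_{t-1}) \leq \cdots \leq \psi(G_0,\sigma_0) = \psi(G,\sigma), $$
which is exactly the claim.

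The only point requiring care — and the main (minor) obstacle — is justifying that this decomposition into elementary homomorphisms is legitimate in the signed setting: one must check that at each stage the two vertices chosen for identification remain identifiable (non-adjacent and sign-compatible on their common neighbours in the current intermediate signed graph) and that all intermediate objects $(G_i,\sigma_i)$ are bona fide signed graphs, so that Theorem~\ref{theorem:homomorphism-signed2} genuinely applies to each step. Since all vertices in a given fibre are pairwise sign-compatible on common neighbours, identifying them one at a time preserves identifiability throughout, and the decomposition was in any case already asserted for signed homomorphisms earlier in the subsection; the corollary therefore follows immediately once the chaining argument above is in place.
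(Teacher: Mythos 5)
Your proof is correct and follows exactly the route the paper intends: the paper derives this corollary in a single remark from the upper bound of Theorem~\ref{theorem:homomorphism-signed2} together with the fact that every homomorphism decomposes into elementary ones, and your argument simply spells out that decomposition (fibre-by-fibre identification, with the verification that vertices in a common fibre are non-adjacent and sign-compatible, hence identifiable at every intermediate stage) before chaining the inequality. Nothing is missing; your careful check that identifiability persists through the intermediate quotients is precisely the detail the paper leaves implicit.
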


%%%%%%%%%%%%%%%%%%%%%%%%%%%%%%%%%%%%%%%%%%%%%%%%%%%%%%
%%%%%%%%%%%%%%%%%%%%%%%%%%%%%%%%%%%%%%%%%%%%%%%%%%%%%%
%%%%%%%%%%%%%%%%%%%%%%%%%%%%%%%%%%%%%%%%%%%%%%%%%%%%%%
%%%%%%%%%%%%%%%%%%%%%%%%%%%%%%%%%%%%%%%%%%%%%%%%%%%%%%
%%%%%%%%%%%%%%%%%%%%%%%%%%%%%%%%%%%%%%%%%%%%%%%%%%%%%%
%%%%%%%%%%%%%%%%%%%%%%%%%%%%%%%%%%%%%%%%%%%%%%%%%%%%%%
%%%%%%%%%%%%%%%%%%%%%%%%%%%%%%%%%%%%%%%%%%%%%%%%%%%%%%
%%%%%%%%%%%%%%%%%%%%%%%%%%%%%%%%%%%%%%%%%%%%%%%%%%%%%%
%%%%%%%%%%%%%%%%%%%%%%%%%%%%%%%%%%%%%%%%%%%%%%%%%%%%%%
%%%%%%%%%%%%%%%%%%%%%%%%%%%%%%%%%%%%%%%%%%%%%%%%%%%%%%
%%%%%%%%%%%%%%%%%%%%%%%%%%%%%%%%%%%%%%%%%%%%%%%%%%%%%%
%%%%%%%%%%%%%%%%%%%%%%%%%%%%%%%%%%%%%%%%%%%%%%%%%%%%%%
%%%%%%%%%%%%%%%%%%%%%%%%%%%%%%%%%%%%%%%%%%%%%%%%%%%%%%
%%%%%%%%%%%%%%%%%%%%%%%%%%%%%%%%%%%%%%%%%%%%%%%%%%%%%%

\section{Complexity aspects} \label{section:complexity}

In this section, we investigate the complexity of the problem of determining the achromatic number of a given signed graph.
More particularly, we focus on two more specific decision problems, being defined as

\medskip

\noindent \SAN\\
\textbf{Input:} A signed graph $(G,\sigma)$ and an integer $k \geq 1$.\\
\textbf{Question:} Do we have $\psi(G,\sigma) \geq k$?

\medskip

\noindent and

\medskip

\noindent \kSAN\\
\textbf{Input:} A signed graph $(G,\sigma)$.\\
\textbf{Question:} Do we have $\psi(G,\sigma) \geq k$?

\medskip

\noindent where, in the latter problem, $k \geq 1$ is a fixed integer.
Recall that, in the unsigned context, the two corresponding problems, \AN and \kAN,
are respectively \np-complete and linear-time solvable for every $k \geq 1$ (see~\cite{FHHM86}).
In this section, we essentially prove that these complexity results adapt to the signed context.

\subsection{Polynomial cases} \label{subsection:polynomial-cases}

We start by considering the \kSAN problem.
We show that this problem can be solved in polynomial time for every $k \geq 1$.
Quite similarly as in the unsigned case~\cite{FHHM86}, 
this follows mainly from the fact that the number of vertex-critical signed graphs $(G,\sigma)$ (i.e., signed graphs $(G,\sigma)$ with $\psi(G-v, \si) \le \psi(G, \si)$ for every vertex $v \in V(G)$) with $\psi(G,\sigma) \leq k$
is a function of $k$ only. That is:

\begin{theorem}\label{theorem:number-graphs-psi-k}
%Let $k \geq 1$. Every signed graph $(G, \sigma)$ with $\psi(G,\sigma) = k $ has an induced signed subgraph of size at most $k^2$ and achromatic number $k$.
Let $k \geq 1$. 
There is a polynomial function $f(k)$ such that every signed graph $(G, \sigma)$ with $\psi(G,\sigma) = k $ has an induced signed subgraph with achromatic number $k$ and size at most $f(k)$.
\end{theorem}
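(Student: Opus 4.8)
The plan is to start from a complete colouring witnessing $\psi(G,\si)=k$ and to retain only a bounded number of edges, one for each edge-type that completeness forces to appear. By definition of $\psi$, there is a signed graph $(G,\si')$ equivalent to $(G,\si)$ together with a complete $k$-colouring $\phi$, meaning that the reduced graph $R(G,\si',\phi)$ equals $K_k^*$. In particular, for each of the $m_k$ edges of $K_k^*$ (a positive or a negative edge between two distinct vertices, or a negative loop at a non-zero vertex), the signed graph $(G,\si')$ must contain at least one edge realising it, i.e.\ a p-edge or n-edge of the corresponding type. First I would choose, for each such edge of $K_k^*$, exactly one realising edge of $(G,\si')$; let $F$ be the resulting set of at most $m_k$ edges, and let $W\subseteq V(G)$ be the set of their endpoints, so that $|W|\le 2m_k$.

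Next I would argue that the restriction $\phi|_W$ is already a complete $k$-colouring of the induced signed subgraph $(G[W],\si')$. Properness is inherited from $(G,\si')$, so the reduced graph of $(G[W],\si')$ is a subgraph of $K_k^*$; recall that properness forbids exactly the p-edges of type $(i,i)$ and the edges of type $(0,0)$, which are precisely the configurations absent from $K_k^*$. On the other hand, every edge of $K_k^*$ is realised by an edge of $F\subseteq E(G[W])$, so the reduced graph contains all of $K_k^*$; for $k\ge 2$ this also forces every colour-vertex of $K_k^*$ to appear, since $K_k^*$ then has no isolated vertex. Hence the reduced graph is exactly $K_k^*$, the colouring $\phi|_W$ is complete, and $\psi(G[W],\si')\ge k$. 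The case $k=1$ is trivial, a single vertex suffices.

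To transfer this back to $(G,\si)$, I would use that the achromatic number is a switching invariant, being defined over switching-equivalence classes: the switching taking $(G,\si)$ to $(G,\si')$, restricted to $W$, shows that $(G[W],\si)$ and $(G[W],\si')$ are equivalent, so $\psi(G[W],\si)=\psi(G[W],\si')\ge k$. Combining this with Corollary~\ref{corollary:achromatic-signed-monotonous-subgraphs}, which gives $\psi(G[W],\si')\le\psi(G,\si')=k$, yields $\psi(G[W],\si)=k$. Finally, since $G[W]$ is a simple graph on at most $2m_k$ vertices, it has at most $\binom{2m_k}{2}$ edges; as $m_k\le k^2$ by the edge counts of $K_k^*$ recorded in Proposition~\ref{proposition:upper-bound-size}, the induced subgraph $(G[W],\si)$ has size bounded by the polynomial $f(k)=\binom{2k^2}{2}$, as required.

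This argument is essentially routine, and there is no single hard step; the only points demanding care are the bookkeeping around the switching operation, namely ensuring that the induced subgraph inherits both the complete colouring (phrased in $(G,\si')$) and the equivalence with the corresponding induced subgraph of $(G,\si)$, and confirming that restricting to the chosen endpoints cannot destroy completeness by dropping a colour class. The latter is exactly why the selection of representatives must range over all edges of $K_k^*$ rather than merely over the colours used, so that no vertex of $K_k^*$ is lost in the reduced graph.
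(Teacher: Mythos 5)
Your proof is correct and follows essentially the same route as the paper's: pick one representative edge for each of the $m_k$ edges of $K_k^*$, take the signed subgraph induced by their endpoints, observe that the restricted colouring remains complete while Corollary~\ref{corollary:achromatic-signed-monotonous-subgraphs} caps the achromatic number at $k$, and bound the size by a quartic polynomial in $k$. Your additional bookkeeping (restricting the switching to $W$ to transfer equivalence, and checking that no vertex of $K_k^*$ disappears from the reduced graph) merely makes explicit steps the paper's proof leaves implicit.
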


\begin{proof}
Let $(G,\sigma)$ be a signed graph such that $\psi(G,\sigma) = k$. 
Consider a complete $k$-colouring $\phi$ of  $(G,\sigma)$.
By definition, for every two colours $\pm i, \pm j$ assigned by $\phi$,
then, in $(G,\sigma)$, we must have 1) both a p-edge of type $(i,j)$ and an n-edge of type $(i,j)$ if $i \neq j$,
and 2) an n-edge of type $(i,j)$ if $i=j$ and $0 \not \in \{i,j\}$.
%Let $(G',\sigma)$ be a signed subgraph obtained from $(G,\sigma)$ by keeping only one edge for each of these possible types (and deleting all other edges).
For every such n-edge of type $(i,j)$ and every such p-edge of type $(i,j)$, choose any one edge $(G,\sigma)$ of that type, and add that edge to a set $F$. Then $|F|$ is the size of $K_k^*$, which is quadratic in $k$.
Now let $(G',\sigma)$ be the signed subgraph of $(G,\sigma)$ induced by the edges of $F$. Since $|F|$ is quadratic in $k$, the order of $(G',\sigma)$ is also quadratic in $k$, and thus the size of $(G',\sigma)$ is a quartic function of $k$. Also, due to how $(G',\sigma)$ was obtained from $(G,\sigma)$, note that $\psi(G',\sigma)=k$.
The claim now follows from the existence of $(G',\sigma)$.
\end{proof}

\begin{corollary}\label{corollary:polynomiality}
%Let $k \geq 1$ be fixed. There is a family $\mathcal{H}_k$ of signed graphs, all of which have size at most~$k^2$,
Let $k \geq 1$ be fixed. 
There is a family $\mathcal{H}_k$ of signed graphs, all of which have size at most some polynomial function $f(k)$, such that a given signed graph $(G,\sigma)$ verifies $\psi(G,\sigma) \leq k$ if and only if $(G,\sigma)$ is $\mathcal{H}_k$-free.
Thus, \kSAN can be solved in polynomial time.
\end{corollary}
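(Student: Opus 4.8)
The plan is to take for $\mathcal{H}_k$ the set of \emph{minimal obstructions} to having achromatic number at most $k$: namely, all signed graphs $(H,\pi)$, considered up to switching equivalence, that are vertex-critical for the property ``$\psi \geq k+1$'', meaning that $\psi(H,\pi) \geq k+1$ while $\psi(H-v,\pi) \leq k$ for every $v \in V(H)$. With this definition, the claimed characterisation is immediate from monotonicity. Indeed, if $(G,\sigma)$ contains some $(H,\pi) \in \mathcal{H}_k$ as an induced signed subgraph, then Corollary~\ref{corollary:achromatic-signed-monotonous-subgraphs} gives $\psi(G,\sigma) \geq \psi(H,\pi) \geq k+1$, so $(G,\sigma)$ does not satisfy $\psi \leq k$. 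Conversely, if $\psi(G,\sigma) \geq k+1$, I would pick an induced signed subgraph $(H,\pi)$ of $(G,\sigma)$ with $\psi(H,\pi) \geq k+1$ minimising $|V(H)|$; by minimality no proper induced subgraph of it has achromatic number $\geq k+1$, so $(H,\pi)$ is vertex-critical and hence lies in $\mathcal{H}_k$, witnessing that $(G,\sigma)$ is not $\mathcal{H}_k$-free.

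The crux of the argument, and the step I expect to be the main obstacle, is to show that every member of $\mathcal{H}_k$ has size bounded by a polynomial in $k$, which is exactly what makes $\mathcal{H}_k$ a \emph{finite} family of small graphs. I would argue in two stages. First, for $(H,\pi) \in \mathcal{H}_k$, Theorem~\ref{theorem:removing-vertex-signed} tells us that deleting any single vertex lowers the achromatic number by at most $2$; since such a deletion drops $\psi$ from at least $k+1$ down to at most $k$, we must have $\psi(H,\pi) \leq k+2$, so $\psi(H,\pi) \in \{k+1,k+2\}$. Second, setting $k' = \psi(H,\pi)$, Theorem~\ref{theorem:number-graphs-psi-k} provides an induced signed subgraph $(H',\pi)$ of $(H,\pi)$ with $\psi(H',\pi)=k'$ and size at most the polynomial value $f(k')$. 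If $(H',\pi)$ were a proper subgraph, some vertex $v \in V(H) \setminus V(H')$ could be deleted while still leaving $(H',\pi)$ inside $(H-v,\pi)$, so that $\psi(H-v,\pi) \geq k' \geq k+1$ by Corollary~\ref{corollary:achromatic-signed-monotonous-subgraphs}, contradicting the vertex-criticality of $(H,\pi)$. Hence $(H',\pi) = (H,\pi)$, and $(H,\pi)$ has size at most $f(k+2)$ (and hence bounded order, since a vertex-critical obstruction has no isolated vertex). As there are only finitely many signed graphs of bounded order up to isomorphism and switching, $\mathcal{H}_k$ is finite.

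Finally, the polynomiality of \kSAN follows routinely. To decide whether $\psi(G,\sigma) \geq k$ for a signed graph $(G,\sigma)$ on $n$ vertices, it suffices, by the characterisation applied with parameter $k-1$, to test whether $(G,\sigma)$ contains a member of $\mathcal{H}_{k-1}$ as an induced subgraph. Since every such member has order bounded by a polynomial in $k$, I would enumerate all vertex subsets of $(G,\sigma)$ of that bounded cardinality -- only $n^{O(1)}$ of them for fixed $k$ -- and, for each, check whether the induced signed subgraph is switching-equivalent to a member of $\mathcal{H}_{k-1}$, the latter test being decidable in quadratic time through the balance invariant of Lemma~\ref{lemma:switching-balance}. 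The only care needed is to perform the containment test up to switching, consistent with $\psi$ being a switching-class invariant. This gives an overall polynomial-time algorithm, completing the proof.
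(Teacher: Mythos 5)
Your proposal is correct and follows essentially the same route as the paper: the paper likewise obtains $\mathcal{H}_k$ from vertex-critical obstructions whose size is bounded via Theorem~\ref{theorem:number-graphs-psi-k}, and then decides \kSAN by brute-force search for induced members of the family. If anything, your write-up is more complete than the paper's terse proof, since it makes explicit two steps left implicit there: the use of Theorem~\ref{theorem:removing-vertex-signed} to confine a critical obstruction to $\psi \in \{k+1,k+2\}$ (needed because Theorem~\ref{theorem:number-graphs-psi-k} a priori only bounds graphs of a \emph{given} achromatic number, not all graphs with $\psi > k$), and the need to perform the containment test up to switching equivalence.
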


\begin{proof}
The existence of $\mathcal{H}_k$ follows from Theorem~\ref{theorem:number-graphs-psi-k}.
Now, to decide whether $\psi(G,\sigma) \geq k$, it suffices to check whether $(G,\sigma)$ has a member of $\mathcal{H}_k$ as an induced subgraph.
This can be done in polynomial time, 
the size of the members of $\mathcal{H}_k$ being at most $f(k)$, which is a polynomial function of $k$ by assumption.
%the size of the members of $\mathcal{H}_k$ being at most $k^2$.
\end{proof}

\subsection{\np-completeness} \label{subsection:npc}

We now turn to the \SAN problem, which we show is \np-complete.
Note that this problem is clearly in \np, as, for a signed graph $(G,\sigma)$ and an integer $k \geq 1$,
assuming we are given both a colouring $\phi$ of the vertices of $G$ and a set $S$ of vertices of $G$,
we can check, in polynomial time, whether $\phi$ is a complete $k'$-colouring of the signed graph defined over $G$ obtained by switching $S$ in $(G,\sigma)$,
where $k' \geq k$.
In other words, the fact that an instance of the problem is positive can be attested easily.
Thus, we can narrow our attention down to proving only the \np-hardness of \SAN.

\begin{figure}[!t]
 	\centering
 	    \scalebox{0.4}{
      	\begin{tikzpicture}[inner sep=0.7mm,scale=.9]
      	
      	\draw [-, line width=2pt, Blue, pattern=dots, pattern color=Blue]  (2,0) -- (4,-2) -- (4,-4) -- (2,-6) -- (0,-6) -- (-2,-4) -- (-2,-2) -- (0,0) -- cycle;
		\node at (1,0.5) {\textcolor{Blue}{\LARGE $G$}};
		
      	\draw [-, line width=2pt, Red, pattern=north east lines, pattern color=Red, densely dashdotted]  (12,0) -- (14,-2) -- (14,-4) -- (12,-6) -- (10,-6) -- (8,-4) -- (8,-2) -- (10,0) -- cycle;
		\node at (11,0.5) {\textcolor{Red}{\LARGE $K_k$}};
		
		\draw [-, line width=1.5pt, Red, densely dashdotted]  ([xshift=5pt]4,-2) -- ([xshift=-5pt]8,-2);
		\draw [-, line width=1.5pt, Red, densely dashdotted]  ([xshift=5pt]4,-4) -- ([xshift=-5pt]8,-4);
		\draw [-, line width=1.5pt, Red, densely dashdotted]  ([xshift=5pt,yshift=-5pt]4,-2) -- ([xshift=-5pt,yshift=5pt]8,-4);
		\draw [-, line width=1.5pt, Red, densely dashdotted]  ([xshift=5pt,yshift=5pt]4,-4) -- ([xshift=-5pt,yshift=-5pt]8,-2);
		
      	\draw [-, line width=2pt, Red, pattern=north east lines, pattern color=Red, densely dashdotted]  (19,7) -- (21,5) -- (21,3) -- (19,1) -- (17,1) -- (15,3) -- (15,5) -- (17,7) -- cycle;
		\node at (18,7.5) {\textcolor{Red}{\LARGE $K_N^-$}};
		
      	\draw [-, line width=2pt, Blue, pattern=north east lines, pattern color=Blue]  (19,-7) -- (21,-9) -- (21,-11) -- (19,-13) -- (17,-13) -- (15,-11) -- (15,-9) -- (17,-7) -- cycle;
		\node at (18,-13.5) {\textcolor{Blue}{\LARGE $K_N^+$}};
		
		\draw [-, line width=1.5pt, Red, densely dashdotted]  ([xshift=5pt]12,0) -- ([yshift=-5pt]15,3);
		\draw [-, line width=1.5pt, Red, densely dashdotted]  ([yshift=5pt]14,-2) -- ([xshift=-5pt]17,1);
		\draw [-, line width=1.5pt, Red, densely dashdotted]  ([xshift=8pt,yshift=-2pt]12,0) -- ([xshift=-8pt,yshift=2pt]17,1);
		\draw [-, line width=1.5pt, Red, densely dashdotted]  ([xshift=-2pt,yshift=8pt]14,-2) -- ([xshift=2pt,yshift=-8pt]15,3);
		
		\draw [-, line width=1.5pt, Blue]  ([xshift=5pt]12,-6) -- ([yshift=5pt]15,-9);
		\draw [-, line width=1.5pt, Blue]  ([yshift=-5pt]14,-4) -- ([xshift=-5pt]17,-7);
		\draw [-, line width=1.5pt, Blue]  ([xshift=8pt,yshift=2pt]12,-6) -- ([xshift=-8pt,yshift=-2pt]17,-7);
		\draw [-, line width=1.5pt, Blue]  ([xshift=-2pt,yshift=-8pt]14,-4) -- ([xshift=2pt,yshift=8pt]15,-9);
		
		\draw [-, line width=1.5pt, Red, densely dashdotted]  ([xshift=5pt,yshift=-5pt]21,5) to[out=0,in=0,bend left=40] ([xshift=5pt,yshift=5pt]21,-11);
		\draw [-, line width=1.5pt, Red, densely dashdotted]  ([xshift=5pt,yshift=4pt]21,2.8) to[out=0,in=0,bend left=22] ([xshift=5pt,yshift=-4pt]21,-8.8);
		\draw [-, line width=1.5pt, Red, densely dashdotted]  ([xshift=5pt,yshift=-5pt]21,4.1) to[out=0,in=0,bend left=34] ([xshift=5pt,yshift=5pt]21,-10.1);
     	\end{tikzpicture}
     }

\caption{Illustration of the reduction in the proof of Theorem~\ref{theorem:SAN-npc}.
Red shapes represent signed graphs with all edges negative, while blue shapes represent signed graphs with all edges positive.
Dashed red edges are negative edges, while solid blue edges are positive edges.  
\label{figure:npc}}
\end{figure}
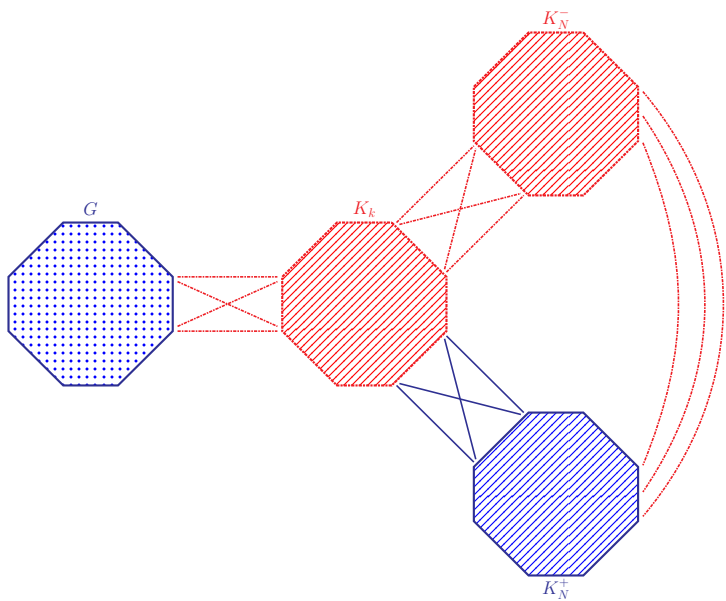

\begin{theorem}\label{theorem:SAN-npc}
\SAN is \np-hard.
\end{theorem}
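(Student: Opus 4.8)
The plan is to reduce from the classical \AN problem, which is \np-complete by~\cite{FHHM86}. Given an instance consisting of an unsigned graph $G$ together with a target $k$, I would build a signed graph $(G^+,\sigma)$ in which $G$ appears with all its edges positive, and then force any large complete colouring of $(G^+,\sigma)$ to essentially restrict, on the copy of $G$, to a classical complete colouring using the positive edges only. The figure~\ref{figure:npc} suggests attaching to $G$ a negative clique $K_k$ (joined to $G$ through negative edges) to act as a ``palette'' of $k$ forced colour classes, together with two large auxiliary gadgets $K_N^-$ and $K_N^+$ (a big negative clique and a big positive clique, joined to each other and to the palette through negative/positive edges) whose role is to saturate all the edge-types involving the auxiliary colours, so that the only freedom left is how the positive copy of $G$ realises the remaining p-edges of type $(i,j)$ with $i \neq j$. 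The target achromatic number for $(G^+,\sigma)$ would then be of the form $k + c(N)$ for an explicit constant depending on the gadget sizes.

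First I would fix the gadget sizes as polynomial functions of $k$ and $|V(G)|$ (so the reduction is polynomial), and describe the signature precisely: $G$ positive, the palette clique negative, the cross-edges between $G$ and the palette negative, and the $K_N^-$/$K_N^+$ cliques wired so that, by Theorems~\ref{theorem:kn-positive} and~\ref{theorem:kn-negative}, each contributes a controlled number of forced colours and forces their own n-edges of type $(i,i)$. The soundness direction ($\Rightarrow$) is the easier one: from a classical complete $k$-colouring of $G$ I would exhibit an explicit complete $(k+c(N))$-colouring of $(G^+,\sigma)$, checking, via Observation~\ref{observation:types-edges} and the type bookkeeping, that every required (i,j,s) p-edge and n-edge of $K_{k+c(N)}^*$ is realised --- the positive copy of $G$ supplying precisely the ``positive $(i,j)$ with distinct colours'' types that a classical complete colouring guarantees, and the gadgets supplying everything else.

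The completeness direction ($\Leftarrow$) is where the real work lies, and I expect it to be the main obstacle. The danger specific to the signed setting is the switching operation: a purported complete $(k+c(N))$-colouring of $(G^+,\sigma)$ is only defined up to switching, so I cannot directly read off a proper colouring of $G$. The key lemma will be that, after possibly switching and after swapping opposite colours (Observation~\ref{observation:switch-negated-colours} and Lemma~\ref{lemma:inferred-force-colours}), the large all-negative gadget $K_N^-$ forces its colour classes to be monochromatic in a single sign (as in the proof of Theorem~\ref{theorem:kn-negative}, where an n-edge of type $(i,i)$ in an all-negative graph forces both endpoints to share the same signed colour), and similarly the all-positive gadget and the palette pin down the auxiliary colours rigidly. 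The plan is to argue that, because the auxiliary colours are already saturated by the gadgets, the copy of $G$ must by itself realise all the remaining p-edges of type $(i,j)$ with $1 \le i < j \le k$; since $G$ is positive, such a p-edge of type $(i,j)$ is exactly an edge of $G$ whose endpoints receive distinct colours $i$ and $j$, so the induced colouring of $G$ is a genuine classical complete $k$-colouring. I would use a counting/matching argument (Observation~\ref{observation:large-matching} and Proposition~\ref{proposition:upper-bound-size}) to rule out degenerate colourings in which the extra colours ``leak'' into the $G$-part or the gadget-part, so that the correspondence between achromatic number $\ge k + c(N)$ of $(G^+,\sigma)$ and achromatic number $\ge k$ of $G$ is exact. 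The most delicate point to nail down rigorously is precisely controlling which vertices get switched and verifying that no unintended p-edge or n-edge of type $(0,i)$ or $(i,i)$ arises from the interaction between the positive $G$-edges and the negative cross-edges.
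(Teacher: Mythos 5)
Your proposal matches the paper's proof in all essentials: the same reduction from \AN, the same gadget architecture (a positive copy of $G$, a negative palette clique $K_k$ joined to $G$ by negative edges, and large auxiliary cliques $K_N^-$ and $K_N^+$ linked by a negative matching and wired to the palette by negative resp.\ positive edges), the same target value $2N+2k$, and the same two-direction strategy --- an explicit colouring for soundness, and for completeness the use of inferred signed colours with Lemma~\ref{lemma:inferred-force-colours} plus an edge-type counting argument forcing the vertices of $V(K_k)\cup V(K_N^-)$ into distinct rigid colours so that the positive copy of $G$ must realise the remaining p-edge types. The delicate points you flag (switching, sign normalisation, ruling out colour leakage by counting) are exactly the ones the paper's proof resolves, and in the way you anticipate.
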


\begin{proof}
The proof is by reduction from \AN, the problem of deciding whether, for an input unsigned graph $G$ and an input integer $k \geq 1$,
we have $\psi(G) \geq k$, which is known to be \np-hard (see e.g.~\cite{FHHM86}).
Let $G$ and $k$ form an instance of that problem, where we set $n=|V(G)|$.
From $G$ and $k$, we construct, in polynomial time, a signed graph $(G',\sigma)$ such that $\psi(G) \geq k$ if and only if $\psi(G',\sigma) \geq 2N+2k$,
where $N = (n+k)^2$.
Note that, $2N+2k$ being even, we have $\pm 0 \not \in M_{2N+2k}$.

The construction of $(G',\sigma)$ goes as follows (see Figure~\ref{figure:npc}).
We start from the signed graph containing a copy of $G$ with all edges positive.
Let us denote by $g_1,\dots,g_n$ the vertices of that copy.
We then add to the signed graph a disjoint complete graph $K_k$ of order~$k$ with vertices $c_1,\dots,c_k$ and all edges negative.
We also add all possible negative edges between the $g_i$'s and $c_i$'s,
i.e., we add the negative edge $g_ic_j$ for every $i \in \{1,\dots,n\}$ and $j \in \{1,\dots,k\}$.
Note that $N$ is strictly larger than the total number of edges we have added to the signed graph so far.
Next, we add, to the signed graph, two disjoint complete graphs $K^-_N$ and $K^+_N$, both of order $N$ and
having their edges signed so that all edges of $K^-_N$ are negative while all edges of $K^+_N$ are positive.
We denote by $v_1^-, \dots v_N^-$ and $v_1^+, \dots, v_N^+$ their vertices, respectively.
Lastly, we add a negative edge $v_i^-v_i^+$ for every $i \in \{1,\dots, N\}$,
all possible negative edges between the $v_i^-$'s and the $c_i$'s,
and all possible positive edges between the $v_i^+$'s and the $c_i$'s.
This achieves the construction of $(G',\sigma)$, which is clearly performed in polynomial time.

As mentioned earlier, we claim that $\psi(G) \geq k$ if and only if $\psi(G',\sigma) \geq 2N+2k$.
We prove the two directions of this equivalence.
Assume first that $G$ admits a complete $k'$-colouring $\phi$, where $k' \geq k$.
We claim that colouring $(G',\sigma)$ as follows, results in a complete $(2N+2k)$-colouring $\phi'$.
For every vertex $g_i$ corresponding to a vertex $v$ in $G$, we set $\phi'(g_i)=+\phi(v)$.
We next set $\phi'(c_i)=+i$ for every $i \in \{1,\dots,k\}$.
Due to the signs of the edges of $(G',\sigma)$ joining the $g_i$'s and the $c_i$'s, note that, already,
for every two $i,j \in \{1,\dots,k\}$,
we have the required edges of type $(i,j)$, i.e., both a p-edge and an n-edge of type $(i,j)$ if $i \neq j$,
and an n-edge of type $(i,i)$ for $i=j$ (but no p-edge of type $(i,i)$).
Finally set $\phi'(v_i^-)=\phi'(v_i^+)=+(k+i)$ for every $i \in \{1,\dots,N\}$.
Due to the signs of the edges joining the vertices of $K_k$, $K^-_N$ and $K^+_N$,
and the fact that $K^-_N$ and $K^+_N$ have all their edges being negative and positive, respectively,
we get that we also have the desired types of edges for pairs of colours in $\{k+1,\dots,N+k\}$,
and for every combination of a colour in $\{1,\dots,k\}$ and a colour in $\{k+1,\dots,N+k\}$.
Thus, $\phi'$ is complete, and we are done.

Now assume that $(G',\sigma)$ can have its vertices switched, so that the resulting signed graph admits a complete colouring $\phi'$ with at least $2N+2k$ colours.
Due to Observation~\ref{observation:large-matching}, because of the number of vertices of $G'$,
note that $\phi'$ must actually be a $(2N+2k)$-colouring.
Free to consider signed colours, we may instead work in $(G',\sigma)$ directly, with the complete $(2N+2k)$-colouring $\gamma'$ inferred from $\phi'$.
Note first that all vertices in $V(K_k) \cup V(K^-_N)$ must be assigned distinct colours by $\gamma'$.
Indeed, assume the contrary. Because the $c_i$'s and the $v_i^-$'s induce a negative complete graph,
then two such vertices $x$ and $y$ assigned the same colour can be assumed, by Lemma~\ref{lemma:inferred-force-colours}, to be assigned colour $1^+$ by $\gamma'$.
Since $x$ and $y$ have at least $N$ common neighbours, each time joined via negative edges,
we get that there are at least two p-edges and n-edges of type $(1,i)$.
From this, we deduce that the total number of different types of edges by $\phi'$,
is at most $$(n+k)^2 + N^2 - N + 2Nk < (N+k)^2.$$
Since $(N+k)^2$ is the number of edges of $K_{(2N+2k)}^*$, it follows that $\phi'$ cannot realise all required types of edges,
and thus cannot be complete, a contradiction.

We may thus assume that every two vertices in $V(K_k) \cup V(K^-_N)$ are assigned distinct colours by $\gamma'$,
and, by Lemma~\ref{lemma:inferred-force-colours}, we may assume that each of their colours is of the form $i^+$.
Since all edges joining vertices in $V(K_k) \cup V(K^-_N)$ are negative,
all these edges yield n-edges.
Because the vertices not in $K_k$ and $K^-_N$ are the $g_i$'s and the $c^+_i$'s,
which form a set of $N+k$ vertices,
then,
due to the signs of the edges in $(G',\sigma)$,
to have all required types of p-edges, 
the $g_i$'s must be assigned $k$ distinct colours by $\gamma'$, 
and these $k$ colours must thus form distinct types of p-edges.
In other words, these $k$ colours induce a complete $k$-colouring of $G$.
Hence, we have $\psi(G) \geq k$.
\end{proof}

%%%%%%%%%%%%%%%%%%%%%%%%%%%%%%%%%%%%%%%%%%%%%%%%%%%%%%
%%%%%%%%%%%%%%%%%%%%%%%%%%%%%%%%%%%%%%%%%%%%%%%%%%%%%%
%%%%%%%%%%%%%%%%%%%%%%%%%%%%%%%%%%%%%%%%%%%%%%%%%%%%%%
%%%%%%%%%%%%%%%%%%%%%%%%%%%%%%%%%%%%%%%%%%%%%%%%%%%%%%
%%%%%%%%%%%%%%%%%%%%%%%%%%%%%%%%%%%%%%%%%%%%%%%%%%%%%%
%%%%%%%%%%%%%%%%%%%%%%%%%%%%%%%%%%%%%%%%%%%%%%%%%%%%%%
%%%%%%%%%%%%%%%%%%%%%%%%%%%%%%%%%%%%%%%%%%%%%%%%%%%%%%
%%%%%%%%%%%%%%%%%%%%%%%%%%%%%%%%%%%%%%%%%%%%%%%%%%%%%%
%%%%%%%%%%%%%%%%%%%%%%%%%%%%%%%%%%%%%%%%%%%%%%%%%%%%%%
%%%%%%%%%%%%%%%%%%%%%%%%%%%%%%%%%%%%%%%%%%%%%%%%%%%%%%
%%%%%%%%%%%%%%%%%%%%%%%%%%%%%%%%%%%%%%%%%%%%%%%%%%%%%%
%%%%%%%%%%%%%%%%%%%%%%%%%%%%%%%%%%%%%%%%%%%%%%%%%%%%%%
%%%%%%%%%%%%%%%%%%%%%%%%%%%%%%%%%%%%%%%%%%%%%%%%%%%%%%
%%%%%%%%%%%%%%%%%%%%%%%%%%%%%%%%%%%%%%%%%%%%%%%%%%%%%%

\section{Some significant differences with unsigned graphs} \label{section:differences}

We here discuss three lines of research investigated for the achromatic number of unsigned graphs which,
when generalised in the most obvious way possible to the signed context,
lead to partial discrepancies. These three lines of investigations cover irreducible graphs,
perfect graphs, and the Homomorphism Interpolation Theorem.

\subsection*{Irreducible graphs}

In unsigned graphs, the notion of irreducible graphs arises from the observation that,
although a graph with large achromatic number must have lots of edges,
the contrary is not true, as there exist graphs with lots of edges but small achromatic number.
To be convinced of this statement, just note that $\psi(K_{n,n})=2$ for every $n \geq 2$.
Note, however, that such complete bipartite graphs are very pathological,
as they have many vertices interacting the same way, in terms of neighbourhood, with the rest of the graph (which equivalent vertices can thus be regarded as single vertices).

The irreducibility of graphs was introduced to measure such aspects.
The definitions are as follows.
Let $G$ be a graph. Two non-adjacent vertices $u$ and $v$ of $G$ are part of the same \textit{congruence class} $R$ if they have the same neighbourhood, i.e., $N(u)=N(v)$.
The \textit{reduced graph} $G / R$ is the graph whose vertices are the congruence classes of $G$,
and in which two vertices are joined by an edge if the two corresponding congruence classes of $G$ contain vertices that are adjacent in $G$.
In other words, $G/R$ is obtained from $G$ by contracting every congruence class to a single vertex (and keeping the graph simple).
We say that $G$ is \textit{irreducible} if $G/R$ is $G$, i.e., every congruence class is a singleton.

Irreducible graphs and their connection with the achromatic number led to several interesting investigations, see~\cite{HM97}.
Regarding our investigations in this paper, we will restrict our attention to two particular related results, being:

\begin{lemma}[Hell, Miller~\cite{HM76b}]\label{lemma:irreducible-unsigned}
If $G$ is an irreducible graph and $\psi(G) \leq k$, then:
\begin{itemize}
	\item $G$ has at most $k+1 \choose 2$ components, and
	\item for every component $G'$ of $G$, the diameter of $G'$ is at most $\left\lfloor \frac{k+3}{2} \right\rfloor \cdot (k-1)$.
\end{itemize}
\end{lemma}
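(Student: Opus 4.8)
The plan is to derive both items from a single principle: the achromatic number is monotone under taking induced subgraphs. This follows from the Geller--Kronk bound (Theorem~\ref{theorem:removing-vertex-unsigned}), since deleting vertices one at a time never increases $\psi$; hence $\psi(H) \le \psi(G)$ for every induced subgraph $H$ of $G$. With this in hand, each of the two items reduces to computing $\psi$ exactly on a well-understood induced subgraph and comparing it with $k$. Irreducibility will be needed only for the first item.

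For the bound on the number of components, I would first use irreducibility to observe that $G$ has at most one isolated vertex: two isolated vertices are non-adjacent and share the empty neighbourhood, so they would lie in a common congruence class. Thus, writing $t$ for the number of components, at least $t-1$ of them contain an edge. Picking one edge from each such component yields $t' \ge t-1$ pairwise vertex-disjoint edges, and crucially their $2t'$ endpoints \emph{induce} exactly a matching of size $t'$: endpoints in different components are non-adjacent, and within a component only the two chosen endpoints are kept. Since $\psi(t'K_2) = \max\{m : \binom{m}{2} \le t'\}$ (each edge of a matching realises a single colour pair, so a complete $m$-colouring needs $\binom{m}{2}$ edges and this many suffice), monotonicity forces $\max\{m : \binom{m}{2} \le t'\} \le k$. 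Writing $m_0$ for this maximum, we get $t' < \binom{m_0+1}{2} \le \binom{k+1}{2}$, and therefore $t \le t'+1 \le \binom{k+1}{2}$.

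For the diameter bound, I would fix a component $G'$ and two of its vertices $x,y$ realising $d := \mathrm{diam}(G')$, and take a shortest $xy$-path $P$. Being geodesic, $P$ has no chords and is therefore an induced $P_{d+1}$, so monotonicity yields $\psi(P_{d+1}) \le \psi(G') \le \psi(G) \le k$. It then remains to invert the Hell--Miller formula $\psi(P_n) = \max\{j : (\lfloor j/2\rfloor + 1)(j-2) + 1 \le n-1\}$ from~\cite{HM76}. Since the left-hand expression is increasing in $j$, the inequality $\psi(P_n) \le k$ means that $j = k+1$ fails the defining condition, i.e. $(\lfloor (k+1)/2\rfloor + 1)(k-1) + 1 > n-1$, which rearranges to $n \le (\lfloor (k+1)/2\rfloor + 1)(k-1) + 1$. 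Substituting $n = d+1$ and using the identity $\lfloor (k+1)/2\rfloor + 1 = \lfloor (k+3)/2\rfloor$ gives exactly $d \le \lfloor (k+3)/2\rfloor \cdot (k-1)$.

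The conceptual steps are short; the points demanding care are (a) checking that the selected edges \emph{induce}, rather than merely form, a matching, so that induced-subgraph monotonicity genuinely applies, and (b) the arithmetic of the second item, namely translating $\psi(P_n) \le k$ into a clean upper bound on $n$ and verifying the floor identity. The irreducibility hypothesis enters only through the elementary observation bounding the number of isolated vertices; I expect no genuine difficulty beyond this, and indeed the diameter bound holds for arbitrary graphs, not just irreducible ones.
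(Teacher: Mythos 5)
Your proof is correct, and---since the paper only quotes this lemma from Hell and Miller~\cite{HM76b} without proving it---the natural comparison is with the paper's proof of its signed analogue (Lemma~\ref{lemma:irreducible-signed}), which follows exactly your strategy: irreducibility forces at most one isolated vertex, one edge per remaining component yields an induced matching, and a geodesic yields an induced path, each bounded through induced-subgraph monotonicity of $\psi$ together with its exact value on matchings and paths. Your supporting arithmetic also checks out, namely the matching formula $\psi(t'K_2)=\max\{m : \binom{m}{2}\le t'\}$, the translation of $\psi(P_{d+1})\le k$ into the failure of the defining condition at $j=k+1$, and the identity $\left\lfloor \frac{k+1}{2}\right\rfloor + 1 = \left\lfloor \frac{k+3}{2}\right\rfloor$.
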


\begin{theorem}[Hell, Miller~\cite{HM76b}]\label{theorem:irreducible-unsigned}
Let $k \geq 1$. There is a constant $K$ such that, for every irreducible graph $G$ with $\psi(G) \leq k$, we have $|V(G)| \leq K$.
\end{theorem}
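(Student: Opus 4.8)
The plan is to bound $|V(G)|$ by bounding separately the number of connected components of $G$ and the order of each component, combining the two parts of Lemma~\ref{lemma:irreducible-unsigned} with one further ingredient, namely a bound on the maximum degree. First I would record that Lemma~\ref{lemma:irreducible-unsigned} already supplies two of the three pieces: $G$ has at most $\binom{k+1}{2}$ components, and every component $G'$ has diameter at most $D(k) := \left\lfloor \frac{k+3}{2} \right\rfloor \cdot (k-1)$. Since each component $G'$ is itself irreducible and satisfies $\psi(G') \le \psi(G) \le k$, it suffices to bound $|V(G')|$ for a \emph{connected} irreducible graph of diameter at most $D(k)$ with $\psi(G') \le k$, and then to multiply the resulting bound by $\binom{k+1}{2}$.

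The missing, and central, ingredient is a bound $\Delta(k)$ on the maximum degree of such a connected irreducible $G'$. Granting it, a breadth-first search from any vertex of $G'$ reaches every other vertex within $D(k)$ steps, so a Moore-type count gives $|V(G')| \le 1 + \Delta(k)\sum_{i=0}^{D(k)-1}(\Delta(k)-1)^{i} =: N(k)$, a function of $k$ alone. Setting $K := \binom{k+1}{2}\cdot N(k)$ then finishes the proof, so the whole argument reduces to establishing the degree bound.

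I expect the degree bound to be the main obstacle, and it is precisely where irreducibility is used in an essential way (indeed, without it the stars $K_{1,n}$ would already be counterexamples). The strategy I would follow is to fix a complete colouring $c$ of $G'$ using $t = \psi(G') \le k$ colours and to argue by contradiction: a vertex $v$ of very large degree has, among its neighbours, a monochromatic set $S$ of size at least $\deg(v)/t$, which is independent by properness of $c$, and whose vertices have pairwise distinct neighbourhoods by irreducibility. The aim is then to exploit this abundance of same-coloured, pairwise-distinguishable neighbours to refine $c$ into a complete colouring with strictly more than $t$ colours, contradicting $t = \psi(G')$. Making this augmentation rigorous — introducing a fresh colour on a carefully chosen subset of $S$ so that every previously realised pair of colours remains realised while all pairs involving the new colour are created — is the technical heart of the argument and the step I would expect to demand the most care.
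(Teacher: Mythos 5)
First, a caveat: the paper does not prove this statement at all --- it is quoted from Hell and Miller~\cite{HM76b} as background for Section~5 --- so there is no in-paper proof to compare yours against, and I can only judge the proposal on its own merits. Your reduction is sound as far as it goes: each component of an irreducible graph is itself irreducible (neighbourhoods inside a component coincide with neighbourhoods in $G$, so congruent vertices in a component would be congruent in $G$); $\psi$ is monotone under vertex deletion by Theorem~\ref{theorem:removing-vertex-unsigned}, so each component $G'$ has $\psi(G') \le k$; Lemma~\ref{lemma:irreducible-unsigned} then bounds the number of components by $\binom{k+1}{2}$ and the diameter of each by $D(k)$; and a Moore-type count would indeed bound $|V(G')|$ once a degree bound $\Delta(k)$ is available. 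So the entire weight of the proof rests on the degree bound, which you state but do not establish.

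The sketch you offer for that bound does not go through, and the failure is not a matter of routine care. From a large monochromatic $S \subseteq N(v)$ --- independent, with pairwise distinct neighbourhoods --- you cannot in general introduce a fresh colour on a subset of $S$ and recover completeness. Pairwise distinct neighbourhoods force only about $\log_2 |S|$ distinct vertices in $\bigcup_{u \in S} N(u)$, and nothing prevents this union from meeting very few colour classes: for instance, each $u_i \in S$ adjacent exactly to $v$ (colour $1$) and to a private vertex $w_i$, with all $w_i$ of one colour, is fully consistent with irreducibility, yet then a recoloured subset $S' \subseteq S$ creates only the pairs $(t+1,1)$ and $(t+1,c(w_i))$; the pairs $(t+1,i)$ for all other colours $i$ are unobtainable --- including $(t+1, c(S))$ itself, since $S$ is independent. (The destruction side is, as you hint, easy to repair by protecting one witness edge per colour pair, at most $2\binom{t}{2}$ vertices of $S$; it is the creation side that fails.) So the local data you extract from irreducibility is provably insufficient for the augmentation, and any completion must import more global structure --- iterating into second neighbourhoods, or an entirely different counting of neighbourhood types. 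The degree bound is of course \emph{true}, being a consequence of the theorem, but proving it directly appears comparable in difficulty to the theorem itself; Hell and Miller's route in~\cite{HM76b} goes through forbidden quotients rather than a local recolouring. As it stands, your proposal is a correct reduction of the theorem to an unproved lemma whose sketched proof fails, hence it has a genuine gap at its declared technical heart.
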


One point of interest behind Theorem~\ref{theorem:irreducible-unsigned} is that it implies that the number of irreducible graphs with given achromatic number is finite.
In what follows, we essentially provide, in the context of signed graphs,
a result that is reminiscent of Lemma~\ref{lemma:irreducible-unsigned}, 
and, to the contrary, a proof that Theorem~\ref{theorem:irreducible-unsigned} does not adapt immediately.

We generalise the notions around irreducible unsigned graphs to our context,
in the following way.
Let $(G,\sigma)$ be a signed graph. For a vertex $u \in V(G)$, we define its \textit{signed neighbourhood} as the set $$\{ (v,s) \in V(G) \times \{+,-\} : uv \in E(G), s = \sigma(uv) \}.$$
Now, we say that two non-adjacent vertices $u$ and $v$ of $(G,\sigma)$ belong to the same \textit{congruence class}
if there is a signed graph equivalent to $(G,\sigma)$ in which $u$ and $v$ have the same signed neighbourhood.
We define then the \textit{reduced signed graph} $(G / R^\sigma, \sigma)$ of $(G,\sigma)$ as the signed graph obtained by contracting every congruence class in $(G,\sigma)$ to a single vertex (with keeping this signed graph simple).
Finally, we say that $(G,\sigma)$ is \textit{irreducible} if all its vertices have distinct signed neighbourhoods,
that is if $(G,\sigma)$ is $(G/R^\sigma,\sigma)$.

We start by proving a result being analogous to Lemma~\ref{lemma:irreducible-unsigned}, but in the signed context.

\begin{lemma}\label{lemma:irreducible-signed}
Let $(G,\sigma)$ be an irreducible signed graph. If, for some $p \geq 1$, we have
\begin{itemize}
	\item $\psi(G,\sigma) \leq 2p$, then
	\begin{itemize}
    	\item $G$ contains at most $(p+1)^2 -1$ components, and
    	\item for every induced path $P_n$ of order $n$ in $G$, we have $n \leq (p+1)^2 -2$;
	\end{itemize}
	
	\item $\psi(G,\sigma) \leq 2p-1$, then
	\begin{itemize}
    	\item $G$ contains at most $p^2$ components, and
    	\item for every induced path $P_n$ of order $n$ in $G$, we have $n \leq p^2 -1$.
	\end{itemize}
\end{itemize}
\end{lemma}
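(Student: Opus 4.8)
The plan is to deduce both kinds of bound from the monotonicity of the achromatic number under induced signed subgraphs, that is, from Corollary~\ref{corollary:achromatic-signed-monotonous-subgraphs}, applied to two carefully chosen induced substructures of $(G,\sigma)$. For the bounds on induced paths, the substructure is the induced path itself, which I would analyse through the exact formula of Theorem~\ref{theorem:pn}. For the bounds on the number of components, the substructure is an induced matching obtained by selecting one edge from each component; here irreducibility enters exactly to guarantee that nearly every component carries an edge.

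I would treat the two path statements first, as they are the quicker ones. An induced path $P_n$ of $G$, equipped with the signature inherited from $\sigma$, is an induced signed subgraph of $(G,\sigma)$, so Corollary~\ref{corollary:achromatic-signed-monotonous-subgraphs} gives $\psi(P_n,\sigma)\le\psi(G,\sigma)$. Under the hypothesis $\psi(G,\sigma)\le 2p$ (respectively $\psi(G,\sigma)\le 2p-1$) this yields $\psi(P_n,\sigma)\le 2p$ (respectively $\le 2p-1$). Since, by Theorem~\ref{theorem:pn}, $\psi(P_n,\sigma)$ depends only on $n$ and is a nondecreasing function of $n$, such an inequality is equivalent to an upper bound on $n$, obtained by reading off from the two conditions $x^2\le n-1$ and $(x+1)^2-1\le n-1$ the largest order for which the next parity class fails to enter the maximising set. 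Evaluating these conditions at the relevant parity then produces the upper bound on $n$ recorded in the statement.

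For the two component statements I would argue by contradiction, exhibiting an induced matching that is too large to be compatible with the assumed value of $\psi$. The first, and only, use of irreducibility is the remark that $(G,\sigma)$ has at most one isolated vertex: two distinct isolated vertices are non-adjacent and both carry the empty signed neighbourhood (switching an isolated vertex changes nothing), hence lie in one congruence class, contradicting irreducibility. Thus all components but at most one contain an edge. Now suppose, for contradiction, that $G$ had at least $(p+1)^2$ components when $\psi(G,\sigma)\le 2p$, respectively at least $p^2+1$ components when $\psi(G,\sigma)\le 2p-1$. Then at least $(p+1)^2-1=m_{2p+1}$, respectively $p^2=m_{2p}$, components carry an edge; choosing one edge $x_\ell y_\ell$ from each and setting $S=\bigcup_\ell\{x_\ell,y_\ell\}$, the induced signed subgraph $(G[S],\sigma)$ is a disjoint union of that many edges, since vertices drawn from distinct components are pairwise non-adjacent. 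A disjoint union of $m_k$ edges has achromatic number at least $k$: each edge can be switched and coloured independently so as to realise any single prescribed type of $K_k^*$ (an off-diagonal p-edge or n-edge of type $(i,j)$ directly, and a diagonal n-edge of type $(i,i)$ by using colours $+i,+i$ on a negative edge or $+i,-i$ on a positive one), and since $K_k^*$ has exactly $m_k$ edges, one edge per type realises a complete $k$-colouring. Hence $\psi(G[S],\sigma)\ge 2p+1$ (respectively $\ge 2p$), and Corollary~\ref{corollary:achromatic-signed-monotonous-subgraphs} forces $\psi(G,\sigma)\ge 2p+1$ (respectively $\ge 2p$), contradicting the hypothesis. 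The thresholds $(p+1)^2$ and $p^2+1$ are pinned down precisely by the recollection that $K_{2p+1}^*$ and $K_{2p}^*$ have $m_{2p+1}=(p+1)^2-1$ and $m_{2p}=p^2$ edges.

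The main point requiring care is the computation that a disjoint union of $m_k$ edges attains achromatic number exactly $k$: one must check, edge by edge, that the prescribed p-edge and n-edge types of $K_k^*$ can each be carried by one independent edge while keeping the colouring proper (using Observation~\ref{observation:types-edges} to see which signed configurations produce a given type), and that these prescribed types number exactly $m_k$, so that $m_k$ independent edges suffice and no interaction between them can spoil completeness. Everything else is bookkeeping: translating the path inequality through Theorem~\ref{theorem:pn}, and the single structural input that irreducibility excludes repeated isolated vertices, so that the count of edge-bearing components is within one of the total number of components.
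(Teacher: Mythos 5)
Your overall route coincides with the paper's: both halves are reduced, via Corollary~\ref{corollary:achromatic-signed-monotonous-subgraphs}, to an induced matching (for the component counts) and to Theorem~\ref{theorem:pn} (for the path bounds), with irreducibility used only to exclude a second isolated-vertex component. For the component bounds your argument is complete and in fact more careful than the paper's own: the paper cites only Corollary~\ref{corollary:achromatic-signed-monotonous-subgraphs} and leaves implicit the key fact that a disjoint union of $m_k$ signed edges admits a complete $k$-colouring, which you verify explicitly (one independent edge per edge of $K_k^*$, using switching to adjust each edge's sign, with the diagonal types $(i,i)$ handled as you describe). Your thresholds $m_{2p+1}=(p+1)^2-1$ and $m_{2p}=p^2$ are correct, so this half stands, and the extra verification is a genuine improvement in rigour.

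The gap is in the path half, at precisely the step you deferred as ``bookkeeping''. By Theorem~\ref{theorem:pn}, $\psi(P_n,\sigma)\geq 2p+1$ requires $(p+1)^2-1\leq n-1$, i.e.\ $n\geq (p+1)^2$; so the contradiction argument only excludes induced paths of order $n\geq(p+1)^2$ and yields $n\leq (p+1)^2-1$, not the stated $n\leq(p+1)^2-2$ (similarly, in the odd case it yields $n\leq p^2$, not $n\leq p^2-1$, since $\psi(P_n,\sigma)\geq 2p$ requires $n\geq p^2+1$). The missing unit cannot be recovered by any amount of care: for $p\geq 2$, take $(G,\sigma)$ to be any signed $P_n$ with $n=(p+1)^2-1\geq 8$; it is irreducible (for $n\geq 4$ all vertices of $P_n$ have pairwise distinct neighbourhoods, so no two non-adjacent vertices are congruent), and Theorem~\ref{theorem:pn} gives $\psi(G,\sigma)=2p$, yet it contains an induced path of order $(p+1)^2-1>(p+1)^2-2$. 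So the path bounds as recorded in the statement are off by one. In fairness, the paper's own proof makes the identical leap, asserting that an induced path of order $n\geq(p+1)^2-1$ already forces $\psi(G,\sigma)\geq 2p+1$, which fails at $n=(p+1)^2-1$ where $\psi(P_n,\sigma)=2p$. But your proposal, by postponing exactly this evaluation with the assurance that it ``produces the upper bound on $n$ recorded in the statement'', neither proves the stated constants nor detects that they need correcting; carrying out the computation would have revealed the discrepancy.
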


\begin{proof}
We prove the claim for the first item, the proof being identical for the second one.
First, let us suppose, towards a contradiction, that $G$ has at least $(p+1)^2$ components.
Because $(G,\sigma)$ is irreducible, at most one of these components is an isolated vertex.
Thus, $(G,\sigma)$ has at least one edge in each component and thus an induced matching of size at least $(p+1)^2 -1$, and so $\psi(G,\sigma) \geq 2p+1$ by Corollary~\ref{corollary:achromatic-signed-monotonous-subgraphs}, a contradiction.
Now,  assume $G$ has an induced path $P_n$ with $n \ge (p+1)^2-1$. Then, by Corollary~\ref{corollary:achromatic-signed-monotonous-subgraphs} and Theorem~\ref{theorem:pn}, we have $\psi(G,\sigma) \ge 2p+1$, another contradiction.
\end{proof}

We now observe that the statement that is analogous to that of Theorem~\ref{theorem:irreducible-unsigned}
does not hold immediately in our context. Namely:

\begin{theorem}\label{theorem:irreducible-signed}
Let $k \geq 1$. There exist arbitrarily large irreducible signed graphs $(G,\sigma)$ with $\psi(G,\sigma) \leq k$. 
\end{theorem}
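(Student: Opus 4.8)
The plan is to produce, for each target $k$, one explicit infinite family of signed graphs that are at once arbitrarily large, irreducible, and of achromatic number at most $k$. The sharpest such family is that of the negative complete graphs $(K_n,-)$, which embody the very discrepancy the theorem is meant to exhibit: in the unsigned world $\psi(K_n)=n$ grows without bound, whereas Theorem~\ref{theorem:kn-negative} asserts that $\psi(K_n,-)=2$ for all $n \geq 5$. Thus negative cliques completely decouple the size of a signed graph from its achromatic number, which is exactly the behaviour forbidden in the unsigned setting by Theorem~\ref{theorem:irreducible-unsigned}.

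Concretely, I would fix $k \geq 2$ and take the family $\{(K_n,-) : n \geq 5\}$. The achromatic bound is then immediate from Theorem~\ref{theorem:kn-negative}, which gives $\psi(K_n,-)=2 \leq k$ for every member. It remains only to verify irreducibility, and here the work essentially vanishes: the congruence classes of a signed graph are defined through pairs of \emph{non-adjacent} vertices, and a complete graph has no such pair. Consequently every congruence class is forced to be a singleton, so $(K_n/R^\sigma,\sigma)$ coincides with $(K_n,-)$, and the graph is irreducible straight from the definition. Since $n$ ranges over all integers at least $5$, this yields arbitrarily large irreducible witnesses and settles every $k \geq 2$.

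There is, strictly speaking, no computational obstacle at all; the only genuinely delicate point is the degenerate value $k=1$. A signed graph has achromatic number $1$ precisely when it is edgeless, and in an edgeless graph every pair of vertices is non-adjacent with empty signed neighbourhood, so all vertices collapse into a single congruence class; such a graph can be irreducible only if it has at most one vertex. Hence no arbitrarily large irreducible example can exist when $k=1$, and the statement is genuinely about the range $k \geq 2$, where the negative complete graphs dispatch it. The entire substance of the proof is therefore the recognition that a negative clique resists reduction trivially while Theorem~\ref{theorem:kn-negative} pins its achromatic number at $2$, independently of its order.
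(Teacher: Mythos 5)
Your proposal is correct and follows essentially the same route as the paper: the paper likewise takes arbitrarily large signed complete graphs (negative cliques whose positive edges form a matching of fixed size $p$, with achromatic number exactly $2p+2$, as extracted from the proof of Theorem~\ref{theorem:homomorphism-signed2}), and, exactly as you argue, irreducibility is immediate because a complete graph has no pair of non-adjacent vertices, your all-negative cliques with $\psi(K_n,-)=2$ from Theorem~\ref{theorem:kn-negative} being just the simplest instance of this idea. Your caveat about $k=1$ is also sound --- $\psi(G,\sigma)\leq 1$ forces $(G,\sigma)$ to be edgeless, and an edgeless graph on at least two vertices has all its vertices in one congruence class, hence is reducible --- so the statement's ``$k\geq 1$'' should indeed be read as $k\geq 2$, a degenerate point the paper's own proof silently skips.
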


\begin{proof}
As seen in the proof of Theorem~\ref{theorem:homomorphism-signed2}, any complete graph on at least $2p$ vertices,
signed so that all edges but those in a matching of size $p$ are negative, has achromatic number exactly $2p+2$.
Furthermore, by definition, every signed complete graph is irreducible.
Since these arguments hold for arbitrarily large such signed graphs, the result follows.
\end{proof}

%%%%%%%%%%%%%%%%%%%%%%%%%%%%%%%%
%%%%%%%%%%%%%%%%%%%%%%%%%%%%%%%%
%%%%%%%%%%%%%%%%%%%%%%%%%%%%%%%%
%%%%%%%%%%%%%%%%%%%%%%%%%%%%%%%%
%%%%%%%%%%%%%%%%%%%%%%%%%%%%%%%%
%%%%%%%%%%%%%%%%%%%%%%%%%%%%%%%%
%%%%%%%%%%%%%%%%%%%%%%%%%%%%%%%%
%%%%%%%%%%%%%%%%%%%%%%%%%%%%%%%%
%%%%%%%%%%%%%%%%%%%%%%%%%%%%%%%%
%%%%%%%%%%%%%%%%%%%%%%%%%%%%%%%%
%%%%%%%%%%%%%%%%%%%%%%%%%%%%%%%%
%%%%%%%%%%%%%%%%%%%%%%%%%%%%%%%%
%%%%%%%%%%%%%%%%%%%%%%%%%%%%%%%%
%%%%%%%%%%%%%%%%%%%%%%%%%%%%%%%%
%%%%%%%%%%%%%%%%%%%%%%%%%%%%%%%%
%%%%%%%%%%%%%%%%%%%%%%%%%%%%%%%%
%%%%%%%%%%%%%%%%%%%%%%%%%%%%%%%%
%%%%%%%%%%%%%%%%%%%%%%%%%%%%%%%%
%%%%%%%%%%%%%%%%%%%%%%%%%%%%%%%%
%%%%%%%%%%%%%%%%%%%%%%%%%%%%%%%%
%%%%%%%%%%%%%%%%%%%%%%%%%%%%%%%%

\subsection*{Perfect graphs}

We here deal with perfect graph colourings, as investigated first by Christen and Selkow in~\cite{CS79}.
Let us first recall what Grundy colourings are.
Let $G$ be a graph. A $k$-colouring $\phi$ of $G$ is called a \textit{Grundy colouring} if every vertex $v \in V(G)$ is adjacent to at least one vertex with colour $i$ for every $i<\phi(v)$.
The \textit{Grundy number} $\gamma(G)$ of $G$ is the smallest $k$ such that $G$ admits Grundy $k$-colourings.
Recall that $\omega(G)$ refers to the \textit{clique number} of $G$, which is the order of the largest complete subgraph of $G$.
Note that the four chromatic parameters $\omega$, $\chi$, $\gamma$ and $\psi$ are quite related,
as, for every graph $G$, we have $\omega(G) \leq \chi(G) \leq \gamma(G) \leq \psi(G)$.

For any two distinct parameters $\alpha,\beta \in \{\omega,\chi,\gamma,\psi\}$, 
we say that $G$ is \textit{$(\alpha,\beta)$-perfect} if $\alpha(H)=\beta(H)$ for every induced subgraph $H$ of $G$.
Note that this notion encapsulates the more famous notion of perfect graphs, which are precisely $\{\omega,\chi\}$-perfect graphs.
An interesting and legitimate question is about other combinations of parameters as $\alpha$ and $\beta$.
Regarding $(\psi,\omega)$-perfect graphs and $(\psi,\chi)$-perfect graphs, Christen and Selkow provided the following characterisation:

\begin{theorem}\label{theorem:perfect-unsigned}
Let $G$ be a graph. The following statements are equivalent:
\begin{enumerate}
    \item $G$ is $(\psi,\omega)$-perfect;
    
    \item $G$ is $(\psi,\chi)$-perfect;
    
    \item $G$ does not contain one of $P_4$, $P_3 + P_2$ and $P_2 + P_2 + P_2$ as an induced subgraph;
    
    \item no homomorphic image of $G$ contains an induced subgraph isomorphic to $P_4$.
\end{enumerate}
\end{theorem}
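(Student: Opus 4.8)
The plan is to establish the chain of equivalences $(1)\Leftrightarrow(2)\Leftrightarrow(3)$ together with the separate equivalence $(3)\Leftrightarrow(4)$. Two of the arcs are immediate. For $(1)\Rightarrow(2)$, I would simply invoke the inequality chain $\omega(G)\le\chi(G)\le\gamma(G)\le\psi(G)$ recalled above: if $\psi(H)=\omega(H)$ for every induced subgraph $H$, then all four parameters coincide on every such $H$, so in particular $\psi(H)=\chi(H)$, i.e.\ $G$ is $(\psi,\chi)$-perfect. For $(2)\Rightarrow(3)$, I would argue contrapositively, observing that each of $P_4$, $P_3+P_2$ and $P_2+P_2+P_2$ admits a complete $3$-colouring (one just places the three colour pairs on the three available edges) while having chromatic number $2$. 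Hence a graph containing one of them as an induced subgraph has an induced subgraph $H$ with $\psi(H)=3>2=\chi(H)$, and is therefore not $(\psi,\chi)$-perfect.

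The heart of the argument is $(3)\Rightarrow(1)$. Since the class of $\{P_4,P_3+P_2,P_2+P_2+P_2\}$-free graphs is hereditary, it suffices to prove that every such graph $G$ satisfies $\psi(G)=\omega(G)$, and I would do this by induction on $|V(G)|$. If $G$ is connected, then, being $P_4$-free, it is a cograph and thus decomposes as a join $G=G_1+G_2$ of two smaller induced subgraphs; the key computation is that joins behave additively, namely $\omega(G_1+G_2)=\omega(G_1)+\omega(G_2)$ and $\psi(G_1+G_2)=\psi(G_1)+\psi(G_2)$. The latter holds because every colour class, being independent, is confined to one side of the join, while every cross-pair of colours is automatically realised by an edge of the join; so the induction hypothesis closes this case. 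If $G$ is disconnected, the two disconnected forbidden subgraphs become decisive: $(P_2+P_2+P_2)$-freeness forces at most two components to carry an edge, and $(P_3+P_2)$-freeness forbids a component containing an induced $P_3$ to coexist with an edge elsewhere. Consequently, whenever two components carry edges, each is $P_3$-free and connected, hence a clique of order at most $2$, so $G$ is $2K_2$ together with isolated vertices; otherwise $G$ reduces, up to isolated vertices, to a single nontrivial component handled by the connected case. Since a colour confined to isolated vertices realises no pair, isolated vertices are irrelevant to completeness, and in each surviving case one checks directly that $\psi(G)=\omega(G)$. I expect this disjoint-union bookkeeping, rather than the join step, to be the main obstacle, as it is exactly where the precise list of three forbidden graphs is pinned down.

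Finally, for $(3)\Leftrightarrow(4)$ I would argue combinatorially in both directions. For $(4)\Rightarrow(3)$ (contrapositive), given an induced copy of some $F\in\{P_4,P_3+P_2,P_2+P_2+P_2\}$ in $G$, I would identify the prescribed non-adjacent pairs inside $F$ — for instance, for $P_2+P_2+P_2$ with edges $ab,cd,ef$, merge $b$ with $c$ and $d$ with $e$ — which, because $F$ is induced, yields an induced $P_4$ in a homomorphic image of $G$. For $(3)\Rightarrow(4)$ (contrapositive), I would start from an induced $P_4$, say $w_1w_2w_3w_4$, inside a homomorphic image of $G$ via a homomorphism $\phi$, and pull it back: the fibres $\phi^{-1}(w_i)$ are independent sets, with edges present only between fibres of consecutive $w_i$'s. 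Selecting representative edges realising $w_1w_2$, $w_2w_3$, $w_3w_4$ and performing a short case analysis according to whether these edges can be made to share endpoints shows that the induced subgraph on the chosen vertices is one of $P_4$, $P_3+P_2$ or $P_2+P_2+P_2$. The delicate point here is guaranteeing that the recovered configuration is genuinely \emph{induced}, which is precisely where the non-adjacency of the non-consecutive fibres of the $P_4$ is used.
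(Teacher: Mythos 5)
First, a point of reference: the paper does not prove Theorem~\ref{theorem:perfect-unsigned} at all --- it quotes it as a known characterisation of Christen and Selkow~\cite{CS79} --- so your proposal can only be measured against the standard argument, whose route you essentially follow. Most of your arcs are sound: $(1)\Rightarrow(2)$ via $\omega\le\chi\le\psi$ is correct; for $(2)\Rightarrow(3)$ the three forbidden graphs do admit complete $3$-colourings while having chromatic number $2$; the join additivity $\psi(G_1+G_2)=\psi(G_1)+\psi(G_2)$ holds for exactly the reason you give (colour classes are independent, hence confined to one side, and cross-pairs come for free); and your fibre pull-back for $(3)\Leftrightarrow(4)$ works, including the deferred case analysis, which does close: if some vertex of the second fibre has neighbours in both the first and third fibres one extracts an induced $P_4$ or $P_3+P_2$, and otherwise three pairwise ``private'' representative edges yield an induced $P_2+P_2+P_2$, inducedness being guaranteed, as you say, by the absence of edges between non-consecutive fibres. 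However, there is one genuinely false step, in the disconnected case of $(3)\Rightarrow(1)$: you claim that when two components carry edges, each is ``a clique of order at most $2$'', so that $G$ is $2K_2$ plus isolated vertices. The order bound has no justification and is wrong. $(P_3+P_2)$-freeness only forbids an edge-carrying component from containing an \emph{induced} $P_3$, and a clique of \emph{any} order contains none. Thus $K_3+K_2$, and more generally $K_a+K_b$ for any $a,b\ge 2$ (possibly with isolated vertices added), belongs to the class: it has no induced $P_4$ (an induced path inside a clique has at most two vertices), no induced $P_3+P_2$, and no induced $P_2+P_2+P_2$ (an induced matching meets each clique at most once, and there are only two non-trivial components). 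Such graphs are not $2K_2$ plus isolated vertices, so as written they fall through your case distinction and the induction is incomplete.

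The gap is local and reparable. The correct conclusion in that case is that each edge-carrying component is a clique of \emph{arbitrary} order, with at most two such components (the latter by $(P_2+P_2+P_2)$-freeness, as you correctly argued). One then verifies directly that $\psi(K_a+K_b)=\max(a,b)=\omega(K_a+K_b)$: in any proper colouring each clique is rainbow, so if a complete colouring used $k>\max(a,b)$ colours, the colour sets $A$ and $B$ appearing on the two cliques would satisfy $|A|=a$, $|B|=b$ and $A\cup B$ equal to all $k$ colours, forcing both $A\setminus B$ and $B\setminus A$ to be non-empty; but a pair consisting of one colour from each of these sets can never be realised, since there are no edges between the components. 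Combined with your (correct) observation that isolated vertices are irrelevant to completeness, this substitution closes the induction and turns your proposal into a complete proof along the classical lines.
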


We prove that Theorem~\ref{theorem:perfect-unsigned} does not apply to the signed context,
at least not when derived in the way below.
Let $(G,\sigma)$ be a signed graph. 
In what follows, we define the \textit{clique number} $\omega(G,\sigma)$ of $(G,\sigma)$ as being $\omega(G)$, the clique number of the underlying graph $G$.
Now, for any two distinct parameters $\alpha, \beta \in \{\omega,\chi,\psi\}$, we say that $(G,\sigma)$ is \textit{$(\alpha,\beta)$-perfect} if $\alpha(H,\sigma)=\beta(H,\sigma)$ for every induced subgraph $H$ of $G$.

The result we prove is, essentially, that there is a signed graph fulfilling the third item in Theorem~\ref{theorem:perfect-unsigned},
but none of the first two items.

\begin{figure}[!t]
 	\centering
 	\subfloat[$\chi(G,\sigma) \leq 3$]{
    \scalebox{0.6}{
	\begin{tikzpicture}[inner sep=0.7mm]	
	
	\node[draw, circle, black, line width=1pt](u1) at (0,0)[]{$+1$};
	\node[draw, circle, black, line width=1pt](u2) at (0,-3)[]{$-1$};
	
	\node[draw, circle, black, line width=1pt](v1) at (3,0)[]{$\pm 0$};
	\node[draw, circle, black, line width=1pt](v2) at (3,-3)[]{$\pm 0$};
	
%	\node[draw, circle, black, line width=1pt](w1) at (6,1.5)[]{$+1$};
	\node[draw, circle, black, line width=1pt](w2) at (6,0)[]{$-1$};
	\node[draw, circle, black, line width=1pt](w3) at (6,-3)[]{$+1$};

    \draw [-, line width=1.5pt, Red, densely dashdotted] (u1) -- (v1);
    \draw [-, line width=1.5pt, Red, densely dashdotted] (u1) -- (v2);
    \draw [-, line width=1.5pt, Red, densely dashdotted] (u2) -- (v1);
    \draw [-, line width=1.5pt, Red, densely dashdotted] (u2) -- (v2);
%    \draw [-, line width=1.5pt, Red, densely dashdotted] (v2) -- (w1);
    \draw [-, line width=1.5pt, Red, densely dashdotted] (v2) -- (w2);
    \draw [-, line width=1.5pt, Red, densely dashdotted] (v2) -- (w3);
    
    \draw  [-, line width=1.5pt, Blue] (u1) -- (u2);
%    \draw  [-, line width=1.5pt, Blue] (v1) -- (w1);
    \draw  [-, line width=1.5pt, Blue] (v1) -- (w2);
    \draw  [-, line width=1.5pt, Blue] (v1) -- (w3);
%    \draw  [-, line width=1.5pt, Blue] (w1) -- (w2);
    \draw  [-, line width=1.5pt, Blue] (w2) -- (w3);
	\end{tikzpicture}
    }
    }
    \hspace{50pt}
 	\subfloat[$\psi(G,\sigma) \geq 4$]{
    \scalebox{0.6}{
	\begin{tikzpicture}[inner sep=0.7mm]	
	
	\node[draw, circle, black, line width=1pt](u1) at (0,0)[]{$+1$};
	\node[draw, circle, black, line width=1pt](u2) at (0,-3)[]{$-1$};
	
	\node[draw, circle, black, line width=1pt](v1) at (3,0)[]{$-2$};
	\node[draw, circle, black, line width=1pt](v2) at (3,-3)[]{$+2$};
	
%	\node[draw, circle, black, line width=1pt](w1) at (6,1.5)[]{$+1$};
	\node[draw, circle, black, line width=1pt](w2) at (6,0)[]{$+2$};
	\node[draw, circle, black, line width=1pt](w3) at (6,-3)[]{$+1$};

    \draw [-, line width=1.5pt, Red, densely dashdotted] (u1) -- (v1);
    \draw [-, line width=1.5pt, Red, densely dashdotted] (u1) -- (v2);
    \draw [-, line width=1.5pt, Red, densely dashdotted] (u2) -- (v1);
    \draw [-, line width=1.5pt, Red, densely dashdotted] (u2) -- (v2);
%    \draw [-, line width=1.5pt, Red, densely dashdotted] (v2) -- (w1);
    \draw [-, line width=1.5pt, Red, densely dashdotted] (v2) -- (w2);
    \draw [-, line width=1.5pt, Red, densely dashdotted] (v2) -- (w3);
    
    \draw  [-, line width=1.5pt, Blue] (u1) -- (u2);
%    \draw  [-, line width=1.5pt, Blue] (v1) -- (w1);
    \draw  [-, line width=1.5pt, Blue] (v1) -- (w2);
    \draw  [-, line width=1.5pt, Blue] (v1) -- (w3);
%    \draw  [-, line width=1.5pt, Blue] (w1) -- (w2);
    \draw  [-, line width=1.5pt, Blue] (w2) -- (w3);
	\end{tikzpicture}
    }
    }

\caption{The signed graph $(G,\sigma)$ mentioned in the proof of Theorem~\ref{theorem:perfect-signed}.
In (a) is depicted a proper $3$-colouring of $(G,\sigma)$, and in (b) a complete $4$-colouring.
Dashed red edges are negative edges, while solid blue edges are positive edges.  
\label{figure:perfect}}
\end{figure}
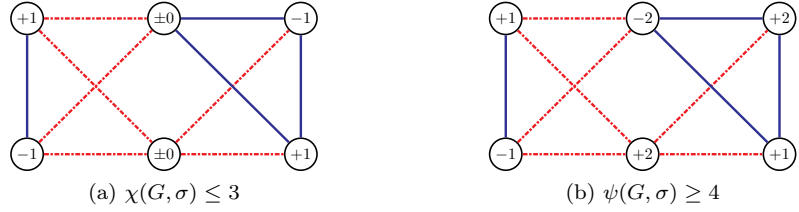

\begin{theorem}\label{theorem:perfect-signed}
There exists a signed graph $(G,\sigma)$ that does not have any of $P_4$, $P_3 + P_2$ and $P_2 + P_2 + P_2$ as an induced subgraph, 
but is neither $(\psi,\omega)$-perfect nor $(\psi,\chi)$-perfect.
\end{theorem}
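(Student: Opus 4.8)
The plan is to exhibit the single explicit $6$-vertex signed graph $(G,\sigma)$ drawn in Figure~\ref{figure:perfect} and to verify all the required properties directly on it, using $H=G$ itself as the induced subgraph witnessing non-perfection. Writing $A=\{u_1,u_2\}$, $B=\{v_1,v_2\}$ and $C=\{w_2,w_3\}$, the underlying graph $G$ consists of the edge $u_1u_2$, the edge $w_2w_3$, and all edges between $B$ and $A\cup C$, with $B$ an independent pair and no edge between $A$ and $C$. In particular $v_1$ and $v_2$ are non-adjacent ``twins'', each adjacent to all of $u_1,u_2,w_2,w_3$ and to nothing else; this twin structure is what drives the combinatorics.

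First I would check that $G$ contains none of $P_4$, $P_3+P_2$, $P_2+P_2+P_2$ as an induced subgraph. Since $P_2+P_2+P_2$ has $6$ vertices, the only candidate is $G$ itself, which has $10$ edges rather than $3$; and every $5$-vertex induced subgraph (delete one vertex, of degree $3$ or $4$) keeps $6$ or $7$ edges, so it cannot be the $3$-edge graph $P_3+P_2$. The only delicate case is $P_4$-freeness, which the pair $v_1,v_2$ settles. A $P_4$ cannot contain both $v_1$ and $v_2$: a non-adjacent pair must occupy one of the three non-edges of $P_4$, and each position is excluded because $v_1,v_2$ share the same neighbourhood (the two mixed endpoint--interior non-edges force unequal degrees, while the endpoint pair would force the two distinct neighbours of the endpoints to coincide). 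It cannot contain exactly one $v_i$, since that vertex is then adjacent to all three remaining vertices of $A\cup C$ and so has degree $3>2$. And it cannot avoid $B$ entirely, since the graph induced on $A\cup C$ is $2K_2=P_2+P_2$. This impossibility argument, as opposed to producing a witness, is the main obstacle of the proof.

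It then remains to pin down the three parameters on $G$. The clique number is immediate: $\{u_1,u_2,v_1\}$ is a triangle, while no $K_4$ exists because any four vertices contain a non-edge (either $v_1v_2$ or some $A$--$C$ pair), so $\omega(G,\sigma)=\omega(G)=3$. For the upper bound on $\chi$ I would take the proper $3$-colouring of Figure~\ref{figure:perfect}(a), assigning $+1,-1,\pm0,\pm0,-1,+1$ to $u_1,u_2,v_1,v_2,w_2,w_3$, and check $\phi(x)\neq\sigma(xy)\phi(y)$ on each of the ten edges, yielding $\chi(G,\sigma)\le 3$. For the lower bound on $\psi$ I would take the $4$-colouring of Figure~\ref{figure:perfect}(b) and verify completeness: switching the two negatively coloured vertices $u_2$ and $v_1$, flipping the signs of their incident edges, and identifying colour classes produces the classes $\{u_1,u_2,w_3\}$ and $\{v_1,v_2,w_2\}$ (colours $1$ and $2$), realising an n-edge of type $(1,1)$, an n-edge of type $(2,2)$, and both a p-edge and an n-edge of type $(1,2)$, i.e.\ exactly $K_4^*$; hence $\psi(G,\sigma)\ge 4$. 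Carefully tracking these sign flips through the reduction is the one remaining place demanding care.

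Finally I would conclude by taking $H=G$ in the definitions: since $\psi(G,\sigma)\ge 4>3=\omega(G,\sigma)$ and $\psi(G,\sigma)\ge 4>3\ge\chi(G,\sigma)$, the graph $(G,\sigma)$ is neither $(\psi,\omega)$-perfect nor $(\psi,\chi)$-perfect, while by the first step it avoids all three forbidden induced subgraphs appearing in item~$3$ of Theorem~\ref{theorem:perfect-unsigned}. This exhibits the desired discrepancy with the unsigned setting.
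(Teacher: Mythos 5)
Your proposal is correct and takes essentially the same route as the paper: it exhibits the same six-vertex signed graph of Figure~\ref{figure:perfect}, uses the same proper $3$-colouring and complete $4$-colouring, and concludes from $\omega(G,\sigma)=3$, $\chi(G,\sigma)\le 3$ and $\psi(G,\sigma)\ge 4$ with $H=G$ as the witnessing induced subgraph. The only difference is that you spell out the verifications the paper leaves as ``it can be checked'' --- the edge-count and twin arguments for $P_4$-, $(P_3+P_2)$- and $(P_2+P_2+P_2)$-freeness, and the explicit reduction to $K_4^*$ after switching $u_2$ and $v_1$ --- all of which are accurate.
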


\begin{proof}
This signed graph $(G,\sigma)$ is depicted in Figure~\ref{figure:perfect}.
Note that $G$ does not have any of $P_4$, $P_3 + P_2$ and $P_2 + P_2 + P_2$ as an induced subgraph.
Regarding the last part of the statement now, it can be checked that $\omega(G,\sigma)=3$, and,
as shown in Figure~\ref{figure:perfect}, we have $\chi(G,\sigma) \leq 3$ and $\psi(G,\sigma) \geq 4$.
Thus, $(G,\sigma)$ is neither $(\psi,\omega)$-perfect nor $(\psi,\chi)$-perfect.
\end{proof}

%%%%%%%%%%%%%%%%%%%%%%%%%%%%%%%%
%%%%%%%%%%%%%%%%%%%%%%%%%%%%%%%%
%%%%%%%%%%%%%%%%%%%%%%%%%%%%%%%%
%%%%%%%%%%%%%%%%%%%%%%%%%%%%%%%%
%%%%%%%%%%%%%%%%%%%%%%%%%%%%%%%%
%%%%%%%%%%%%%%%%%%%%%%%%%%%%%%%%
%%%%%%%%%%%%%%%%%%%%%%%%%%%%%%%%
%%%%%%%%%%%%%%%%%%%%%%%%%%%%%%%%
%%%%%%%%%%%%%%%%%%%%%%%%%%%%%%%%
%%%%%%%%%%%%%%%%%%%%%%%%%%%%%%%%
%%%%%%%%%%%%%%%%%%%%%%%%%%%%%%%%
%%%%%%%%%%%%%%%%%%%%%%%%%%%%%%%%
%%%%%%%%%%%%%%%%%%%%%%%%%%%%%%%%
%%%%%%%%%%%%%%%%%%%%%%%%%%%%%%%%
%%%%%%%%%%%%%%%%%%%%%%%%%%%%%%%%
%%%%%%%%%%%%%%%%%%%%%%%%%%%%%%%%
%%%%%%%%%%%%%%%%%%%%%%%%%%%%%%%%
%%%%%%%%%%%%%%%%%%%%%%%%%%%%%%%%
%%%%%%%%%%%%%%%%%%%%%%%%%%%%%%%%
%%%%%%%%%%%%%%%%%%%%%%%%%%%%%%%%
%%%%%%%%%%%%%%%%%%%%%%%%%%%%%%%%

\subsection*{The Homomorphism Interpolation Theorem}

In the line of our results in Subsection~\ref{subsection:homomorphisms-signed-graphs},
an interesting result on the achromatic number that we can try to bring from the unsigned context
is the so-called Homomorphism Interpolation Theorem~\cite{Harary1967}.
Roughly put, it says that unsigned graphs $G$ admit complete $k$-colourings for every theoretically legit value of $k$,
i.e., ranging in $\{\chi(G),\dots,\psi(G)\}$:

\begin{theorem}[Harary, Hedetniemi, Prins~\cite{Harary1967}]\label{theorem:interpo-unsigned}
For every graph $G$ and every integer $k$ such that $\chi(G) \leq k \leq \psi(G)$, there is an edge-surjective homomorphism of $G$ to $K_k$.
\end{theorem}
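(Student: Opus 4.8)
The plan is to show that the set $S=\{t : G \text{ admits a complete } t\text{-colouring}\}$ contains every integer of the interval $[\chi(G),\psi(G)]$, recalling that a complete $t$-colouring is exactly an edge-surjective homomorphism $G \to K_t$. I would first record the two endpoints: $\psi(G) \in S$ holds by definition of $\psi$, and $\chi(G) \in S$ because an optimal proper colouring is automatically complete (if two colour classes spanned no edge, recolouring one of them with the colour of the other would yield a proper colouring with fewer colours, contradicting optimality). The core of the argument is then to realise every intermediate value by travelling, through elementary homomorphisms, between $G$ and $K_{\psi(G)}$.

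Concretely, I would fix an edge-surjective homomorphism $h: G \to K_{\psi(G)}$ (a complete $\psi(G)$-colouring). Since $h$ is vertex-surjective and its image is exactly $K_{\psi(G)}$, it decomposes as a chain of elementary homomorphisms $G=H_0 \to H_1 \to \cdots \to H_m = K_{\psi(G)}$, where each $H_{i+1}$ is obtained from $H_i$ by identifying two vertices sharing the same $h$-image; such vertices are non-adjacent, as $K_{\psi(G)}$ is loopless, so each step is indeed elementary. Applying the first item of Theorem~\ref{theorem:elementary-unsigned} to each step gives $\chi(H_i) \le \chi(H_{i+1}) \le \chi(H_i)+1$. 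Hence the integer sequence $\chi(H_0)=\chi(G),\ \dots,\ \chi(H_m)=\chi(K_{\psi(G)})=\psi(G)$ is non-decreasing with consecutive jumps of at most $1$, and therefore attains every value of $[\chi(G),\psi(G)]$.

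Given a target $k$ with $\chi(G)\le k\le\psi(G)$, I would pick an index $j$ with $\chi(H_j)=k$. An optimal proper $k$-colouring of $H_j$ is complete by the endpoint argument above, i.e. it is an edge-surjective homomorphism $H_j \to K_k$. Composing it with the edge-surjective homomorphism $G \to H_j$ coming from the partial identifications $H_0 \to \cdots \to H_j$ (a homomorphic image always yields an edge-surjective map, by the very definition of homomorphic image), and using that a composition of edge-surjective homomorphisms is again edge-surjective, I obtain an edge-surjective homomorphism $G \to K_k$, that is, a complete $k$-colouring of $G$.

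The main obstacle is the bookkeeping around the elementary decomposition: checking that the successive identifications are genuinely between non-adjacent vertices (so that the hypotheses of Theorem~\ref{theorem:elementary-unsigned} apply at every step), and that the chromatic number really climbs from $\chi(G)$ to $\psi(G)$ in unit increments, which is precisely what converts the mere \emph{existence} of complete colourings at the two extremes into complete colourings of \emph{all} intermediate orders. The final move—transporting a complete $k$-colouring of the intermediate image $H_j$ back to $G$ by composition—is the conceptual crux, and it is exactly where the identification of complete colourings with edge-surjective homomorphisms pays off.
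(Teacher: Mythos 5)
Your proof is correct. The paper states Theorem~\ref{theorem:interpo-unsigned} only as a cited result of Harary, Hedetniemi and Prins and gives no proof of its own, and your argument --- decomposing a complete $\psi(G)$-colouring into elementary homomorphisms (legitimately, since vertices with a common image are non-adjacent), applying the first item of Theorem~\ref{theorem:elementary-unsigned} to get that $\chi$ climbs from $\chi(G)$ to $\psi(G)$ in steps of at most one, and then pulling an optimal, hence complete, $k$-colouring of an intermediate image back to $G$ by composing edge-surjective homomorphisms --- is precisely the classical interpolation argument from that source, with all the needed checks (completeness of optimal colourings, edge-surjectivity of quotient maps and of compositions) carried out correctly.
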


It turns out that Theorem~\ref{theorem:interpo-unsigned} does not adapt to our context, at least not in the most direct way possible.
Namely, the following holds true:

\begin{figure}[!t]
 	\centering
 	\subfloat[An inferred complete $6$-colouring]{
    \scalebox{0.75}{
	\begin{tikzpicture}[inner sep=0.7mm]	
		\node[draw, circle, black, line width=1pt](v1) at (1*360/6:2.8cm){$1^+$};	
		\node[draw, circle, black, line width=1pt](v2) at (2*360/6:2.8cm){$2^-$};	
		\node[draw, circle, black, line width=1pt](v3) at (3*360/6:2.8cm){$1^-$};	
		\node[draw, circle, black, line width=1pt](v4) at (4*360/6:2.8cm){$3^+$};	
		\node[draw, circle, black, line width=1pt](v5) at (5*360/6:2.8cm){$3^-$};	
		\node[draw, circle, black, line width=1pt](v6) at (6*360/6:2.8cm){$2^+$};	

    	\draw [-, line width=1.5pt, Red, densely dashdotted] (v1) -- (v2);		
    	\draw [-, line width=1.5pt, Red, densely dashdotted] (v3) -- (v4);		
    	\draw [-, line width=1.5pt, Red, densely dashdotted] (v5) -- (v6);		
		
    	\draw [-, line width=1.5pt, Blue] (v1) -- (v3);
    	\draw [-, line width=1.5pt, Blue] (v1) -- (v4);
    	\draw [-, line width=1.5pt, Blue] (v1) -- (v5);
    	\draw [-, line width=1.5pt, Blue] (v1) -- (v6);
    	
    	\draw [-, line width=1.5pt, Blue] (v2) -- (v3);
    	\draw [-, line width=1.5pt, Blue] (v2) -- (v4);
    	\draw [-, line width=1.5pt, Blue] (v2) -- (v5);
    	\draw [-, line width=1.5pt, Blue] (v2) -- (v6);
    	
    	\draw [-, line width=1.5pt, Blue] (v3) -- (v5);
    	\draw [-, line width=1.5pt, Blue] (v3) -- (v6);
    	
    	\draw [-, line width=1.5pt, Blue] (v4) -- (v5);
    	\draw [-, line width=1.5pt, Blue] (v4) -- (v6);
	\end{tikzpicture}
    }
    }
    \hspace{40pt}
 	\subfloat[An inferred complete $4$-colouring]{
    \scalebox{0.75}{
	\begin{tikzpicture}[inner sep=0.7mm]	
		\node[draw, circle, black, line width=1pt](v1) at (1*360/6:2.8cm){$1^+$};	
		\node[draw, circle, black, line width=1pt](v2) at (2*360/6:2.8cm){$1^+$};	
		\node[draw, circle, black, line width=1pt](v3) at (3*360/6:2.8cm){$2^+$};	
		\node[draw, circle, black, line width=1pt](v4) at (4*360/6:2.8cm){$2^+$};	
		\node[draw, circle, black, line width=1pt](v5) at (5*360/6:2.8cm){$1^-$};	
		\node[draw, circle, black, line width=1pt](v6) at (6*360/6:2.8cm){$1^-$};	

    	\draw [-, line width=1.5pt, Red, densely dashdotted] (v1) -- (v2);		
    	\draw [-, line width=1.5pt, Red, densely dashdotted] (v3) -- (v4);		
    	\draw [-, line width=1.5pt, Red, densely dashdotted] (v5) -- (v6);		
		
    	\draw [-, line width=1.5pt, Blue] (v1) -- (v3);
    	\draw [-, line width=1.5pt, Blue] (v1) -- (v4);
    	\draw [-, line width=1.5pt, Blue] (v1) -- (v5);
    	\draw [-, line width=1.5pt, Blue] (v1) -- (v6);
    	
    	\draw [-, line width=1.5pt, Blue] (v2) -- (v3);
    	\draw [-, line width=1.5pt, Blue] (v2) -- (v4);
    	\draw [-, line width=1.5pt, Blue] (v2) -- (v5);
    	\draw [-, line width=1.5pt, Blue] (v2) -- (v6);
    	
    	\draw [-, line width=1.5pt, Blue] (v3) -- (v5);
    	\draw [-, line width=1.5pt, Blue] (v3) -- (v6);
    	
    	\draw [-, line width=1.5pt, Blue] (v4) -- (v5);
    	\draw [-, line width=1.5pt, Blue] (v4) -- (v6);
	\end{tikzpicture}
    }
    }

\caption{One of the signed graphs mentioned in the proof of Theorem~\ref{theorem:no-interpolation-signed}.
In (a) is also depicted an inferred complete $6$-colouring, and in (b) an inferred complete $4$-colouring.
Dashed red edges are negative edges, while solid blue edges are positive edges. 
\label{figure:interpo}}
\end{figure}
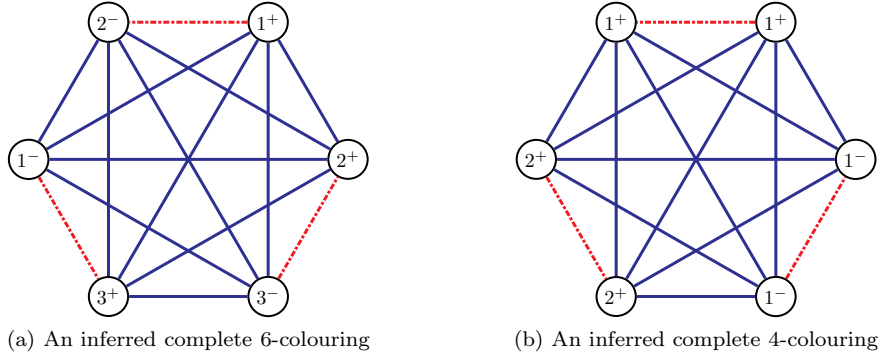

\begin{theorem}\label{theorem:no-interpolation-signed}
Let $k \ge 3$ be an odd integer. 
There exists a signed graph $(G,\sigma)$ such that $\chi(G, \sigma) < k < \psi(G, \sigma)$ and $(G, \sigma)$ admits no complete $k$-colourings.
\end{theorem}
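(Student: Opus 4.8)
The theorem asks, for each odd $k \geq 3$, for a signed graph $(G,\sigma)$ with $\chi(G,\sigma) < k < \psi(G,\sigma)$ yet admitting no complete $k$-colouring.

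**What makes this possible.** The key asymmetry is parity-driven. In the signed setting, $K_\ell^*$ has size $m_\ell$, and the jump from $m_{k-1}$ to $m_k$ (odd $k$) versus $m_k$ to $m_{k+1}$ differs. For odd $k=2n+1$, $K_k^*$ requires the special colour $\pm 0$, which introduces rigidity — a complete $k$-colouring needs the $\pm 0$ class to interact correctly (n-edges and p-edges of type $(0,i)$, and no edges of type $(0,0)$). The construction should exploit that the graph can "skip over" the odd value $k$: it admits complete colourings with an even number of colours just below and just above $k$, but the odd value $k$ in between is blocked because the required $\pm 0$-configurations cannot coexist with the graph's structure.

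**The construction.** I notice Figure~\ref{figure:interpo} is provided for the case $k=5$: a signed graph on $6$ vertices admitting inferred complete $6$-colourings and inferred complete $4$-colourings, but (the claim) no complete $5$-colouring. So the plan is to generalize this figure.

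Let me sketch the proof.

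\begin{proof}
Let me describe the construction for general odd $k \geq 3$. Write $k = 2n+1$. Take the signed graph depicted in Figure~\ref{figure:interpo} (which handles $n=2$, i.e. $k=5$) and, for the general case, build a signed graph $(G,\sigma)$ on $k+1 = 2n+2$ vertices as follows: start from a positive complete graph on $2n+2$ vertices, fix a perfect matching $M=\{e_1,\dots,e_{n+1}\}$, and turn each matching edge $e_i$ negative, leaving all other edges positive. By Observation~\ref{observation:large-matching} and Proposition~\ref{proposition:upper-bound-n}, $\psi(G,\sigma) \leq 2n+2 = k+1$.

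\emph{Step 1: $\psi(G,\sigma) = k+1 > k$.} I would exhibit a complete $(k+1)$-colouring (here $k+1=2n+2$ is even, so no colour $\pm 0$ is used). Assign colour $+i$ to both endpoints of $e_i$ for $i \in \{1,\dots,n+1\}$; the negative matching edge $e_i$ gives the n-edge of type $(i,i)$, while the positive edges between distinct matching pairs, after switching, provide all p-edges and n-edges of type $(i,j)$ for $i \neq j$ (this mirrors the argument used for $K_n^*$-colourings in Theorem~\ref{theorem:kn-positive} and in the proof of Theorem~\ref{theorem:homomorphism-signed2}). Figure~\ref{figure:interpo}(a) illustrates this for $n=2$.

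\emph{Step 2: $\chi(G,\sigma) < k$.} I would produce a complete colouring using $k-1 = 2n$ colours (again even, so structurally cleaner), as in Figure~\ref{figure:interpo}(b), by collapsing two matching pairs onto a single pair of opposite colours; since $\chi \leq \psi$ and a complete $2n$-colouring is in particular proper, this yields $\chi(G,\sigma) \leq 2n < k$. One checks properness directly from the signature.

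\emph{Step 3 (the crux): no complete $k$-colouring exists.} Suppose, for contradiction, that some signed graph equivalent to $(G,\sigma)$ admits a complete $k$-colouring $\phi$; pass to the inferred complete $k$-colouring $\gamma$ using signed colours from $M_k'$, working directly in $(G,\sigma)$. Since $k = 2n+1$ is odd, completeness forces colours $0^-$ and $0^+$ to appear, with the $\pm 0$-class carrying no internal edge (no edge of type $(0,0)$). By Observation~\ref{observation:large-matching}, for each $i \in \{1,\dots,n\}$ there must be an n-edge of type $(i,i)$, and these $n$ edges form a matching; together with the requirement that $\pm 0$ be realized, a counting argument on the $2n+2$ vertices shows the colour classes are forced to be: one pair for each nonzero $i$ (lying on a matching edge $e_i$) and the remaining class for $\pm 0$. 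I would then argue that the $\pm 0$-class must consist of endpoints whose incident edges cannot simultaneously realize both a p-edge and an n-edge of type $(0,i)$ for every $i$, because the only edges available to the $\pm 0$ vertices have signs dictated rigidly by the matching structure. This contradiction establishes the claim. The main obstacle is precisely this step: making the parity/counting argument watertight so that the $\pm 0$-class is genuinely trapped, regardless of which equivalent signature and which vertex-switchings one chooses — this is why phrasing it via inferred colourings (Lemma~\ref{lemma:inferred-force-colours}) is essential, as it removes the switching degrees of freedom.
\end{proof}
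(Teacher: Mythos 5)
Your construction is exactly the paper's (a positive complete graph with a negative perfect matching, generalising Figure~\ref{figure:interpo}), and your size choice $2n+2$ does yield $\chi(G,\sigma) < k < \psi(G,\sigma)$; but your verification of Step~1 fails. Assigning $+i$ to both endpoints of $e_i$ produces all p-edges of type $(i,j)$ and all n-edges of type $(i,i)$, but \emph{no} n-edge of type $(i,j)$ with $i \neq j$, since the only negative edges are the matching edges. Nor can ``after switching'' repair this: properness forces each $e_i$ to stay negative (its ends share colour $+i$), hence you must switch both or neither end of each pair, and then the four edges between any two pairs are all positive or all negative in the equivalent signature --- so one of the required p-edge/n-edge of type $(i,j)$ is always missing. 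The paper's complete $(k+1)$-colouring instead splits each colour class across \emph{two} matching pairs: it assigns $+i$ and $-(i+1)$ (cyclically) to the two ends of the $i$-th matching edge, so the n-edge of type $(i,i)$ is the positive edge joining the $+i$ and $-i$ vertices, while the matching edges and remaining positive edges supply everything else --- precisely the pattern in Figure~\ref{figure:interpo}(a). Step~2 contains the same slip (your colouring there is proper but not complete once $n \geq 3$, as n-edges of type $(i,j)$ with $2 \leq i < j \leq n$ are missing), though since only properness is needed for $\chi(G,\sigma) \leq 2n < k$, that conclusion survives.

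The decisive gap is Step~3, which you explicitly leave open (``I would then argue\dots''), and your sketched route is partly wrong: with $2n+2$ vertices and $2n+1$ colours, one nonzero class has three vertices, and nonzero classes need not sit on matching edges at all (in the complete $(k+1)$-colouring the n-edge of type $(i,i)$ is a positive cross edge between $+i$ and $-i$), so the colour classes are \emph{not} forced to align with the matching. The paper's actual argument needs no such counting: since the underlying graph is complete, at most one vertex receives colour $\pm 0$, and completeness (for odd $\ell$) forces exactly one, say $v$, whose matching partner $w$ may be assumed, using Lemma~\ref{lemma:inferred-force-colours}, to satisfy $\gamma(v)=0^+$ and $\gamma(w)=1^+$; then the negative edge $vw$ is an n-edge of type $(0,1)$, while a p-edge of type $(0,1)$ would require another vertex $u$ with $\gamma(u)=1^+$ joined to $v$ by a positive edge --- but then $uw$ is a positive edge with both ends coloured $1^+$, a p-edge of type $(1,1)$, contradicting properness. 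Note that this argument excludes complete $\ell$-colourings for \emph{every} odd $\ell$, which is what lets one place any prescribed odd $k$ strictly between $\chi$ and $\psi$ by tuning the number of matching pairs. Without this (or an equivalent) pinning-down of the $\pm 0$ vertex's partner, your Step~3 remains a plan, not a proof.
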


\begin{proof}
The construction of $(G,\sigma)$ goes as follows (see Figure~\ref{figure:interpo}).
Start from a complete graph $K_k$ of order~$k$ with all edges positive.
Denote its vertices by $u_1,\dots,u_k$.
Now consider every vertex $u_i$, and replace $u_i$ with two new vertices $v_i$ and $w_i$ joined through a negative edge,
and connected to all other vertices through positive edges.
What results, as $(G,\sigma)$, is thus a positive complete graph of order $2k$, in which the edges of a perfect matching have been turned negative.

We claim that $\chi(G, \sigma) \leq 2 \lceil \frac{k}{2} \rceil $ and $\psi(G, \sigma) = 2k$,
but that $(G,\sigma)$ admits no complete colouring with an odd number of colours in-between $\chi(G,\sigma)$ and $\psi(G,\sigma)$.
On the one hand, to see first that $\chi(G, \sigma) \leq 2 \lceil \frac{k}{2} \rceil $, it suffices to note, by checking carefully the resulting edge types (remember we want the colouring to be proper, not necessarily complete),
that, by assigning colour $+i$ to every two vertices $v_{2i}$ and $w_{2i}$ with even index $2i$,
and colour $-i$ to every two vertices $v_{2i-1}$ and $w_{2i-1}$ with odd index $2i-1$, we obtain a proper $2 \lceil \frac{k}{2} \rceil$-colouring of $(G,\sigma)$.
On the other hand, to see that $\psi(G, \sigma) = 2k$ (note that we cannot have $\psi(G, \sigma) > 2k$ due to Observation~\ref{observation:large-matching}), it suffices to consider the colouring of $(G,\sigma)$ assigning colours $+i$ and $-(i+1)$ to every two vertices $v_i$ and $w_i$ with index $i \in \{1,\dots,k-1\}$, and colours $+k$ and $-1$ to $v_k$ and $w_k$. It can be checked that this yields a complete $2k$-colouring of $(G,\sigma)$.
Particularly, every colour $\pm i$ appears twice (both $+i$ and $-i$ are assigned once, to vertices joined by a positive edge) to produce the desired n-edge of type $(i,i)$, and, due to the graph's signature,
all other required types of p-edges and n-edges appear (without creating any p-edge of type $(i,i)$).

Towards a contradiction, assume, finally, that $(G,\sigma)$ 
can have its vertices switched, 
so that the resulting signed graph 
admits a complete colouring $\phi$ with an odd number of colours in the range mentioned above.
Because $k \geq 3$, note that the range $\{\chi(G,\sigma),\dots,\psi(G,\sigma)\}$ indeed contains odd integers.
Free to consider signed colours instead, we may consider that colouring in $(G,\sigma)$ directly, through the complete colouring $\gamma$ inferred from $\phi$.
Because $G$ is a complete graph, colour $\pm 0$ must be assigned only once by $\phi$. 
Assume $\phi(v_i)=\pm 0$ for some vertex $v_i$, and consider the colour $\phi(w_i)$, which, w.l.o.g., can be assumed to be $\pm 1$.
By Lemma~\ref{lemma:inferred-force-colours}, we may suppose that $\gamma(v_i)=0^+$ and $\gamma(w_i)=1^+$.
Then the edge $v_iw_i$ is an n-edge of type $(0,1)$.
To get a contradiction, we claim that there cannot be a p-edge of type $(0,1)$.
Indeed, because $v_i$ and $w_i$ are connected to all other vertices of the graph through positive edges,
so that such a p-edge exists, there must be another vertex $u$ of $(G,\sigma)$ being assigned colour $1^+$ by $\gamma$.
But that vertex, together with $w_i$, would form a p-edge of type $(1,1)$, contradicting that $\phi$ is proper. 
\end{proof}

%%%%%%%%%%%%%%%%%%%%%%%%%%%%%%%%%%%%%%%%%%%%%%%%%%%%%%
%%%%%%%%%%%%%%%%%%%%%%%%%%%%%%%%%%%%%%%%%%%%%%%%%%%%%%
%%%%%%%%%%%%%%%%%%%%%%%%%%%%%%%%%%%%%%%%%%%%%%%%%%%%%%
%%%%%%%%%%%%%%%%%%%%%%%%%%%%%%%%%%%%%%%%%%%%%%%%%%%%%%
%%%%%%%%%%%%%%%%%%%%%%%%%%%%%%%%%%%%%%%%%%%%%%%%%%%%%%
%%%%%%%%%%%%%%%%%%%%%%%%%%%%%%%%%%%%%%%%%%%%%%%%%%%%%%
%%%%%%%%%%%%%%%%%%%%%%%%%%%%%%%%%%%%%%%%%%%%%%%%%%%%%%
%%%%%%%%%%%%%%%%%%%%%%%%%%%%%%%%%%%%%%%%%%%%%%%%%%%%%%
%%%%%%%%%%%%%%%%%%%%%%%%%%%%%%%%%%%%%%%%%%%%%%%%%%%%%%
%%%%%%%%%%%%%%%%%%%%%%%%%%%%%%%%%%%%%%%%%%%%%%%%%%%%%%
%%%%%%%%%%%%%%%%%%%%%%%%%%%%%%%%%%%%%%%%%%%%%%%%%%%%%%
%%%%%%%%%%%%%%%%%%%%%%%%%%%%%%%%%%%%%%%%%%%%%%%%%%%%%%
%%%%%%%%%%%%%%%%%%%%%%%%%%%%%%%%%%%%%%%%%%%%%%%%%%%%%%
%%%%%%%%%%%%%%%%%%%%%%%%%%%%%%%%%%%%%%%%%%%%%%%%%%%%%%

\section{Conclusion and discussion} \label{section:ccl}

Our main intent in this work was to introduce a line of research on the achromatic number of signed graphs
that is parallel to that initiated recently by Lajou in~\cite{Laj19}.
While Lajou based his definitions on colourings of signed graphs as introduced by Guenin~\cite{Gue05},
ours are based on colouring notions introduced by Zaslavsky~\cite{Zas82}.
As mentioned earlier, Guenin's and Zaslavsky's notions hardly compare in general,
and, as a consequence, our results should be regarded as independent from those obtained by Lajou.

To guide our investigations, we chose to stick to approaches and directions considered for the unsigned version of the problem,
since, as highlighted through numerous references of the literature, one of the main interesting aspects, when generalising a problem from graphs to signed graphs,
are the similarities and discrepancies between the original problem and the generalised one.
Consequently, we picked several results from~\cite{CZ09,HM97}, and tried to come up with possible ways to generalise them.
For most of our attempts, we have observed that the original problems for unsigned graphs tend to behave in a reminiscent way in the signed context,
modulo some slight modifications. As an illustration, note that most bounds we have exhibited through Sections~\ref{section:operations} and~\ref{section:complexity}
are very close to their corresponding unsigned counterparts. For other attempts, we observed more or less significant differences with the unsigned context; this is illustrated by our results from Section~\ref{section:differences}.

Throughout this work, we also did our best to take into account the very peculiarities of Zaslavsky's proper colourings of signed graphs.
Notably, a very particular of their subtleties lies in the way the number of assigned colours behaves, depending on the parity of this number.
This explains why, notably when studying the tightness of some bounds in Section~\ref{section:operations}, 
we did our best to provide constructions for all numbers of colours.
This aspect seems of interest to us, as it does not concern complete colourings of signed graphs only, 
but their proper colourings as well.

\medskip

The investigations initiated in this work open the way for way more research on the topic,
some of which would be particularly interesting to consider further.
Let us start by mentioning, as a very general perspective, that many aspects related to complete colourings of unsigned graphs
have not been covered by our results in this work,
and, thus, that many such aspects, mentioned e.g. in~\cite{CZ09,HM97}, would deserve to be brought to our context.

Regarding the results we have exhibited in this work, a few of them remain with some main or side aspects open,
which could be subject to further work. For instance:

\begin{itemize}
	\item Regarding complexity questions, 
	it would be interesting to investigate whether Theorem~\ref{theorem:SAN-npc} remains true when restricted to particular families of signed graphs.
	In the unsigned context, the corresponding problem was shown to remain \np-hard when restricted to very particular classes of graphs,
	including cographs and interval graphs~\cite{Bod89}. Note that the reduced signed graphs we produce in our reduction
	do not allow to get such an interesting strengthening of our result right away.
	
	\item Regarding perfect graphs, note that Theorem~\ref{theorem:perfect-signed}, 
	while it does refute a naive generalisation of Theorem~\ref{theorem:perfect-unsigned}, 
	would deserve to be pushed further. For instance, we are still not sure, in our context, 
	of whether the first two items in the statement of Theorem~\ref{theorem:perfect-unsigned} are equivalent or not.
	More generally speaking, it would be interesting to study which items of that theorem are equivalent in our context.
	We would also be interested, regarding Theorem~\ref{theorem:perfect-signed}, 
	in having a generalisation of the signed graph we have provided to prove this result.
	
	\item Regarding Theorem~\ref{theorem:no-interpolation-signed}, 
	we would be interested in knowing whether there exist similar signed graphs admitting no complete $k$-colouring for even values of $k$.
	In the same spirit, regarding the proof of Theorem~\ref{theorem:irreducible-signed}, 
	we would be interested in having similar families of arbitrarily large signed graphs with given fixed odd achromatic number.
	
	\item Recall that, when dealing with perfect graphs, we mentioned, along the way, the notion of Grundy colourings of unsigned graphs,
	which, at this point, was eluded from our investigations. One possible direction for further work could be to introduce a corresponding notion in our context,
	and then to investigate its connection with the analogues of the chromatic parameters that take part to Theorem~\ref{theorem:perfect-signed}.
	
	\item Regarding irreducible graphs, we wonder whether there is an alternative way to extend that notion to signed graphs,
	for which an analogue of Theorem~\ref{theorem:irreducible-unsigned} would hold.
\end{itemize}

Apart from those concerns, there are also interesting questions we ran into, 
which, as far as we know, were not even considered in the unsigned context.
One such example is the following.
While trying to come up with constructions of signed graphs showing the tightness of several of our bounds, 
one approach we considered was to come up with signed graphs, which we call \textit{uniquely $k$-colourable},
that have the very peculiar property that all their complete $k$-colourings are equivalent, up to permuting the colour classes.
While this topic has a well-studied equivalent one for proper colourings of unsigned graphs, we are not aware of any similar notion for complete colourings of unsigned graphs.
Another direction for research could thus be to investigate this notion, either for unsigned graphs or signed graphs,
and, in particular, to investigate whether it can be of any use towards some of the questions we leave open.

\end{document}